\documentclass[10pt]{article}
\usepackage{graphicx}
\usepackage{xcolor} 
\usepackage{amsthm,amssymb,amsmath}
\usepackage{mathrsfs}
\usepackage{relsize}
\usepackage[pdfa]{hyperref}



\usepackage[utf8]{inputenc}

\newtheorem{thm}{Theorem}[section]
\newtheorem{lem}[thm]{Lemma}
\newtheorem{pro}[thm]{Proposition}
\newtheorem{cor}[thm]{Corollary}

\newtheorem{rema}{Remark}[section]

\numberwithin{equation}{section}

\def\ep{\varepsilon}
\DeclareMathOperator{\E}{\mathbb{E}}

\def\L{\mathcal{L}}
\def\F{\mathcal{F}}
\def\hep{\widehat{\ep}}
\def\n{n}
\def\wt{\widetilde}
\def\wh{\widehat}

\def\ol{\overline}
\def\F{\mathcal{F}}
\def\I{\mathbf{I}}
\def\dz{\mathbf{d}^2z}
\DeclareMathOperator{\Tr}{Tr}
\def\tr{\mathrm{tr}_n}
\DeclareMathOperator{\Int}{Int}

\DeclareMathOperator{\supp}{supp}
\DeclareMathOperator{\Prob}{Prob}
\DeclareMathOperator{\dist}{dist}

\oddsidemargin 0mm
\evensidemargin 0mm
\topmargin -20mm
\textheight 240mm
\textwidth 160mm

\makeatletter
\newsavebox{\@brx}
\newcommand{\llangle}[1][]{\savebox{\@brx}{\(\m@th{#1\langle}\)}%
  \mathopen{\copy\@brx\kern-0.5\wd\@brx\usebox{\@brx}}}
\newcommand{\rrangle}[1][]{\savebox{\@brx}{\(\m@th{#1\rangle}\)}%
  \mathclose{\copy\@brx\kern-0.5\wd\@brx\usebox{\@brx}}}
\makeatother

\begin{document}

\title{
\fbox{\parbox{0.95\textwidth}{\normalsize This preprint has not undergone
peer review or any post-submission improvements or corrections. The Version of Record of this article is
published in Journal of Statistical Physics, and is available online at \url{https://doi.org/10.1007/s10955-025-03492-z}}}\bigskip

\quad\bigskip

Limiting eigenvalue distribution of\\ the general deformed Ginibre ensemble }

\author{Roman Sarapin\\
\small{B.Verkin Institute for Low Temperature Physics and Engineering}\\
\small{of the National Academy of Sciences of Ukraine, Kharkiv, Ukraine}
}

\date{}

\maketitle

\setcounter{tocdepth}{1}

\begin{abstract}
Consider the $n\times n$ matrix $X_n=A_n+H_n$, where $A_n$ is a $n\times n$ matrix (either deterministic or random) and $H_n$ is a $n\times n$ matrix independent from $A_n$ drawn from complex Ginibre ensemble.
We study the limiting eigenvalue distribution of $X_n$. In \cite{TVKr:10} it was shown that the eigenvalue distribution of $X_n$ converges to some deterministic measure. This measure is known for the case $A_n=0$. Under some general convergence conditions on $A_n$ we prove a formula for the density of the limiting measure.
We also obtain an estimation on the rate of convergence of the distribution. The approach used here is based on supersymmetric integration.
\end{abstract}

\section{Introduction}

\subsection{General information}

Let $H_n$ be a $n\times n$ random matrix with i.i.d.\ complex Gaussian random entries  $h_{ij}$ satisfying $\E\{h_{ij}\}=0$, $\E\{h^2_{ij}\}=0$ and $\E\{|h_{ij}|^2\}=1/n$. Consider a random $n\times n$ matrix
\begin{equation*}
X_n=A_n+H_n,
\end{equation*}
where $A_n$ is either deterministic or random $n\times n$ matrix with entries independent of $h_{ij}$. 
The matrices $X_n$ form the so called deformed Ginibre ensemble.

Matrices of such form play a significant role in communication theory, where $A_n$ and $H_n$ are considered to be the signal matrix and the noise matrix respectively. For the purposes of that theory, the behaviour of the smallest singular value $\sigma_1(X_n)=(\lambda_1(X_nX_n^*))^{1/2}$ of $X_n$ has been extensively studied, see \cite{SST:06}, \cite{Ed:88}, \cite{RV:08}, \cite{TV:10}, \cite{CiErS:rc}, \cite{CiErS:rc_den}, \cite{TV:07}, \cite{TV:10_1}, \cite{C:18}, \cite{T:20}, \cite{SS:sng} for the details. 

This paper focuses on the limiting eigenvalue distribution of $X_n$. Deriving the limiting distribution of eigenvalues is a fundamental problem in Random Matrix Theory. The pioneering work in this area was done by Wigner \cite{Wig:58} for $n\times n$ hermitian matrices with i.i.d.\ Gaussian random entries. 
This result was later extended to the case of arbitrary i.i.d. entries with mean $0$ and variance $1/n$ (see \cite{Pstr:73}). 
The first hypothesis regarding the eigenvalue distribution of non-hermitian matrices, known as \textit{Circular Law Conjecture}, was posed in 1950's. The conjecture stated that  
for $n\times n$ matrices with i.i.d.\ random entries with mean $0$ and variance $1/n$, the limiting eigenvalue distribution on the complex plane is given by the following density:
\begin{equation*}
\rho(z)=
\begin{cases}
\frac{1}{\pi},\quad& |z|\le 1,\\
0,\quad&|z|>1,
\end{cases}
\end{equation*}
i.e.\ the eigenvalues are distributed uniformly on the unit disk. The case of complex Gaussian entries was established by Mehta \cite{Meh:67} in 1967. However, extending this result to an arbitrary entry distribution proved to be more challenging. In 1984 Girko presented an approach to this problem based on the logarithmic potential theory (see \cite{Gir:84}), but this approach did not let to prove the conjecture in full generality at that time.
Later Circular Law Conjecture was verified in \cite{Ed:88}, \cite{Gir:04}, \cite{Ba:97, BaSil:06}, \cite{GoTi:1, GoTi:2}, \cite{PaZh}, \cite{TV:08} under certain additional assumptions on the entries distribution, and was proven in 2010 by Tao and Vu in \cite{TVKr:10} under the most general assumptions.

The deformed Ginibre ensemble is in fact a generalization of the non-hermitian ensemble mentioned above.  In case of deterministic $A_n$,
Tao and Vu established the existence of the limiting eigenvalue distribution 
independent of the distribution of $h_{ij}$ (see \cite[Theorem 1.23, Theorem 1.7]{TVKr:10}), which means that the limiting distribution is the same for any i.i.d.\ entries under the assumptions $\E\{h_{ij}\}=0$ and $\E\{|h_{ij}|^2\}=1/n$. This result shows that it is enough to consider only the case of Gaussian $H_n$ in order to find the limiting distribution of $A_n+H_n$. The distribution has been studied using free probability theory. \'Sniady \cite{Sn:02} showed that if $A_n$ converges in $*$-moments to some operator $x_0$, then the limiting measure of $X_n$ is equal to the Brown measure of $x_0+c$, where $c$ is Voiculescu's circular operator which is $*$-free from $x_0$. That Brown measure is computed in some cases when $x_0$ belongs to a certain class. The most notable results are obtained in \cite{HoZho} for self-adjoint $x_0$ (which corresponds to the case of hermitian $A_n$) and in  \cite{BoCapCha} for normal $x_0$ with Gaussian spectral measure (which corresponds to to the case of normal $A_n$ with Gaussian limiting distribution of eigenvalues).

In the case of arbitrary $x_0$ (not necessarily normal) the result is obtained by Zhong in his preprint \cite{Zho} by using the free probability techniques (more precisely, free convolutions) and \'Sniady's result \cite{Sn:02}. In terms of matrices, Zhong derives the limiting distribution of $A_n+H_n$ under the only assumption that $A_n$ converges in $*$-moments. 

In this paper, we derive the limiting distribution of $A_n+H_n$ under somewhat different conditions on $A_n$, using an alternative approach.
Our assumptions on $A_n$ appear to be weaker in comparison to the assumptions in \cite{Zho} (see Remark~\ref{r:comp_Zh}). Moreover, apart from weak convergence of the distribution of eigenvalues, we obtain a bound on the rate of convergence of the distribution, which is not presented in~\cite{Zho}.

We use an approach based on supersymmetric integration. Supersymmetry techniques allow us to express the density of normalized counting measure, correlation functions and other spectral characteristics of random matrices as an integral over a set of complex and Grassmann variables. These methods have been successfully applied in various problems in Random Matrix theory, particularly in the study of Gaussian random band matrices (see \cite{EB:15}, \cite{DL:16}, \cite{DPS:02}, \cite{SS:den}, \cite{SS:ChP}, \cite{SS:sig}, \cite{TS:FinB}, \cite{TS:ChP}), for the overlaps of non-Hermitian Ginibre eigenvectors (\cite{Fyo:18}) and for the smallest singular value of Ginibre and deformed Ginibre ensemble (see \cite{SS:sng}, \cite{CiErS:20}).

\subsection{Basic notations and main results}\label{sbsc:bas_notat}

Denote 
\begin{equation}\label{X_n}
X_n=A_n+H_n,
\end{equation}
where $H_n$ is a random $n\times n$ matrix with i.i.d.\ complex Gaussian entries $\{h_{ij}\}_{i,j=1}^n$ satisfying the following conditions:
\begin{equation}\label{Eh}
\E\{h_{ij}\}=0, \quad\E\{h_{ij}^2\}=0,\quad \E\{|h_{ij}|^2\}=1/n,
\end{equation}
and $A_n$ is $n\times n$ matrix with entries $\{a_{ij}\}_{i,j=1}^n$
being either deterministic or random, but independent of~$h_{ij}$. 
Denote
\begin{equation}\label{def_Y_0,Y}
\begin{split}
Y_0(z)=(A_n-z)(A_n-z)^*,\quad Y(z)=(X_n-z)(X_n-z)^*,\\
\wt{Y}_0(z)=(A_n-z)^*(A_n-z),\quad \wt{Y}(z)=(X_n-z)^*(X_n-z).
\end{split}
\end{equation}
Also define normalized trace $\tr$ as $\tr B=\dfrac{1}{n}\Tr B$ for any $n\times n$ matrix $B$.
Our goal is to find the limit of \textit{normalized counting measure} (NCM) of $X_n$. We impose the following conditions on $A_n$:
\begin{itemize}
\item[(C1)] The NCM $\nu_{n,z}$ of $Y_0(z)$ converges weakly to some deterministic measure $\nu_z$ for almost all $z\in\mathbb{C}$. 

\item[(C2)] Denote
\begin{equation*}
\Omega_{M,n}^{(1)}=\{\omega\in\Omega\mid
n^{-1}\sum_{i,j=1}^n |a_{ij}|^2 < M\}.
\end{equation*}
Then there exists some $M>0$ such that $\Prob\{\Omega_{M,n}^{(1)}\}\ge 1-n^{-1-d}$ for some $d>0$. Here and below $\Omega$ stands for the probability space with respect to $A_n$.

\item[(C3)] Denote $\sigma_0=\{z\mid 0\in\supp\nu_{z}\}$. Let $\sigma_\epsilon$ be the $\epsilon$-neighbourhood of $\sigma_0$ and
\begin{equation*}
\begin{split}
&\Omega_{\epsilon,C,n}^{(2)}=\{\omega\in\Omega\mid
\sup_{z\notin\sigma_\epsilon} \Bigl|\tr Y_0^{-1}(z)-\int \lambda^{-1}\,d\nu_z(\lambda)\Bigr| <Cn^{-d_0}\}
\end{split}
\end{equation*}
for some fixed $d_0>0$. Then for some $d>0$ and for all $\epsilon>0$ there exist $C(\epsilon)>0$ satisfying 
\begin{equation*}
\Prob\{ \Omega_{\epsilon,C(\epsilon),n}^{(2)}\} > 1-n^{-1-d}.
\end{equation*}

\item[(C4)] There exist $d_1>0$, $\varrho_0,\epsilon_0>0$ such that if
\begin{equation*}
\Omega_n^{(3)}=\{\omega\in\Omega\mid 
\inf_{z\in\sigma_{\epsilon_0}}\tr (Y_0(z)+\varrho_0^2)^{-1}>1+d_1\}
\end{equation*}
then $\Prob\{\Omega_n^{(3)}\}>1-n^{-1-d}$ for some $d>0$.

\end{itemize}

\begin{rema}
The conditions above are written for the case of random $A_n$. If $A_n$ is deterministic, we assume that the inequalities defining $\Omega^{(j)}$ hold for large $n$. 
\end{rema}

\begin{rema}
Observe that Borel-Cantelli lemma together with (C1)--(C4) implies that 
$$
\Prob\{\exists n_0\colon \omega\in \Omega_{M,n}^{(0)}\cup\Omega_{\epsilon,\kappa,n}^{(1)}\cup\Omega_{\epsilon,C,n}^{(2)}\cup\Omega_n^{(3)}\ \forall n\ge n_0\}=1,
$$ 
allowing us to consider only the case $\omega\in \Omega_{M,n}^{(0)}\cup\Omega_{\epsilon,\kappa,n}^{(1)}\cup\Omega_{\epsilon,C,n}^{(2)}\cup\Omega_n^{(3)}$.
\end{rema}

\begin{rema}
Conditions (C1)--(C4) hold for all classical hermitian ensembles. Here are some other examples of $A_n$ satisfying (C1)--(C4):

\begin{itemize}

\item $A_n$ are diagonal matrices with eigenvalues having limiting distribution with a compact finitely connected support with a smooth boundary such that large deviation type bounds ((C3), (C4)) are satisfied.

\item
$A_n$ is a Ginibre matrix with i.i.d.\ entries having finite fourt moments. In this case bounds of the form (C3), (C4) follow from \cite{AEK:20}.

\end{itemize}

\end{rema}

Define $\mu_{n}$ to be NCM of $X_n$. If $A_n$ is deterministic, then by \cite[Theorem 1.23]{TVKr:10} there exists some deterministic measure $\mu$ which is a weak limit of $\mu_{n}$. If $A_n$ is random, we consider conditioning on the sequence of $A_n$ and conclude that there exists some deterministic with respect to~$H_n$ (but possibly dependent on the sequence of $A_n$) measure $\mu$ which is a weak limit of $\mu_{n}$. The results of the paper include the fact the $\mu$ is in fact fully deterministic.

Consider a region $D\subset\mathbb{C}$ defined as
\begin{equation}\label{def_D}
D=\sigma_0\cup \{z\in\mathbb{C}\setminus \sigma_0\colon \int \lambda^{-1}\,d\nu_{z}(\lambda)\ge 1\}.
\end{equation}
In this paper we show that $D$ is the support of the limiting measure $\mu$. This fact was proved in \cite{BoCap:14} under the additional assumption 
$$
\supp\mu=\{z\mid 0\in\supp\eta_z\}, \text{ where $\eta_z$ is the limit of NCM of $Y(z)=(X_n-z)(X_n-z)^*$}.
$$ 
However, we do not rely on this assumption and prove this result independently.

Observe that (C4) implies $\sigma_{\epsilon_0}\subset D$. Moreover, due to (C3), $\int \lambda^{-1}\,d\nu_{z}(\lambda)=\lim\limits_{n\to\infty} \tr Y_0(z)^{-1}$ is a smooth function in $z$ for $z\in\mathbb{C}\setminus\sigma_{\epsilon_0}$, thus the boundary of $D$ may be found as
\begin{equation}\label{form_dD}
\partial D=\{z\in\mathbb{C}\colon\int \lambda^{-1}\,d\nu_{z}(\lambda)= 1\}
\end{equation}
and consists of several piecewise smooth curves enclosing $\sigma_0$.

Condition (C1) yields that  the limit of 
$$
\mathcal{G}_n(z_1,z_2,x)=n^{-1}\log\det\bigl((z_1+iz_2-A_n)(z_1-iz_2-A_n^*)+x\bigr)
$$ 
as $n\to\infty$ is equal to the non-random function $\int \log(\lambda+x)\,d\nu_{z_1+iz_2}(\lambda)$ for real $z_1,z_2$ and $x>0$. Since $\mathcal{G}_n(z_1,z_2,x)$ is analytic in $z_1,z_2,x$, one can find a non-random analytic continuation $\mathcal{G}(z_1,z_2,x)$ of $\int \log(\lambda+x)\,d\nu_{z_1+iz_2}(\lambda)$ such that all derivatives of $\mathcal{G}_n(z_1,z_2,x)$ converge to the respective derivatives of $\mathcal{G}(z_1,z_2,x)$. Thus, if we consider
\begin{equation}\label{T_1,T_2}
\begin{split}
T_1(z,x)=\frac{1}{2}\bigl(\partial_{z_1}+i\partial_{z_2}\bigr)\partial_x \mathcal{G}(z_1,z_2,x)\,\Big|_{z_1=\Re z,\  z_2=\Im z,}\\
T_2(z,x)=\frac{1}{4x}\bigl(\partial_{z_1}^2+\partial_{z_2}^2)\mathcal{G}(z_1,z_2,x)\,\Big|_{z_1=\Re z,\  z_2=\Im z,}
\end{split}
\end{equation}
then $T_1(z,x)=\lim\limits_{n\to\infty} \tr (A_n-z)(Y_0(z)+x)^{-2}$ and $T_2(z,x)=\lim\limits_{n\to\infty} \tr (Y_0(z)+x)^{-1}(\wt{Y}_0(z)+x)^{-1}$ for all $x>0$, and $T_{1,2}$ are non-random.

We are ready to formulate the main result of the paper.
\begin{thm}\label{th:main}
Assume that $X_n$ defined in \eqref{X_n} satisfies \eqref{Eh} and conditions (C1)--(C4).
Set 
\begin{equation}\label{rho_main_form}
\rho_\mu(z)=
\dfrac{1}{\pi}\Bigl( 
\dfrac{|T_1(z,x_{0}^2)|^2}{\int (\lambda+x_{0}^2)^{-2}\,d\nu_z(\lambda)}+
x^2_{0}\cdot T_2(z,x_0^2)
 \Bigr)\cdot \mathbf{1}_D(z),
\end{equation}
where $x_0=x_0(z)>0$ satisfies the equation $\int (\lambda+x_0^2)^{-1}\,d\nu_z(\lambda)=1$, $T_1,T_2$ are defined in \eqref{T_1,T_2} and $\mathbf{1}_D$ is the characteristic function of $D$ defined in~\eqref{def_D}.
Then, for any  $h(z)\in C_c^2(\mathbb{C})$ we have 
$$
\int h(z)\,d\mu(z)=\int h(z)\rho_\mu(z)\,\dz.
$$
\end{thm}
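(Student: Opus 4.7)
The plan is to combine Girko's Hermitization trick with a supersymmetric integral representation followed by a saddle-point analysis, after which the density is extracted by computing a Laplacian. For $h\in C_c^2(\mathbb{C})$, I would start from the identity
\[
\int h(z)\,d\mu_n(z)=\frac{1}{4\pi n}\int\Delta h(z)\,\log\det Y(z)\,\dz,
\]
which follows from $z\mapsto\log|z|^2/(2\pi)$ being the Green's function of $\Delta$ on $\mathbb{R}^2$. Since $\log\det Y(z)$ has logarithmic singularities near the eigenvalues of $X_n$, I would regularize by replacing $Y(z)$ by $Y(z)+\ep^2$ for $\ep=n^{-\kappa}$ at some small $\kappa>0$, and control the regularization error through a lower bound on the smallest singular value of $X_n-z$ that is uniform on compacta. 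Such a bound should be available from (C3)-(C4) together with the Gaussian regularity of $H_n$, along the lines of the estimates in \cite{SS:sng,CiErS:20}.

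Gaussian concentration in $H_n$ then reduces the problem to computing the limit of $n^{-1}\E_{H_n}\log\det(Y(z)+\ep^2)$. I would express this as a derivative in a source parameter $s$ at $s=0$ of a supersymmetric integral of the form
\[
\E_{H_n}\!\bigl[\log\det(Y(z)+\ep^2)\bigr]=\partial_s\Big|_{s=0}\log\!\int e^{-n\mathcal{S}(Q,\wt Q;\,s,z,\ep)}\,dQ\,d\wt Q,
\]
obtained after integrating out $H_n$ exactly and applying a Hubbard-Stratonovich transformation that linearizes the resulting quartic interaction on a $2\times 2$ superspace. Condition (C2) provides the a priori integrability required to justify the contour deformations.

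The saddle-point equations for this integral reduce, as $n\to\infty$, to $\int(\lambda+x_0^2)^{-1}\,d\nu_z(\lambda)=1$, which defines $x_0(z)>0$ on the interior of $D$. There the saddle is non-degenerate, and the standard asymptotic expansion identifies the limiting action-value with $\mathcal{G}$, its first $z$-derivatives with $T_1$, and its $z$-curvature with $T_2$, through (C1)-(C4). Outside $D$ the relevant saddle is $x_0=0$ and the limit $F(z):=\lim n^{-1}\E\log\det Y(z)$ is harmonic in $z$, so the density vanishes there and $\supp\mu\subset D$. On the interior of $D$ the density is then $(4\pi)^{-1}\Delta_z F(z)$; implicit differentiation of the saddle equation splits $\Delta_z F$ into a contribution from the explicit $z$-dependence of the integrand, yielding $4x_0^2\,T_2(z,x_0^2)$, and a contribution from the $z$-dependence of $x_0(z)$, yielding $4|T_1(z,x_0^2)|^2\big/\!\int(\lambda+x_0^2)^{-2}\,d\nu_z(\lambda)$. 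Together these give~\eqref{rho_main_form}.

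The technically hardest step will be making the saddle-point analysis uniform in $z$, in particular near $\partial D$, where $x_0(z)\to 0^+$ and the Gaussian fluctuations around the saddle degenerate. Coupling this degeneracy to the regularization scale $\ep$ from the first step is exactly where the quantitative rate of convergence has to be extracted and where the sharp form of (C3)-(C4) is used. A secondary technicality will be the rigorous justification of the contour deformations in the SUSY integral, which typically requires careful tracking of the action's stationary-phase structure in order to exclude spurious saddles.
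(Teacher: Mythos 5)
Your overall scaffold (Hermitization, $\ep$-regularization of $\log\det Y(z)$, a SUSY representation, and a saddle-point analysis giving the equation $\int(\lambda+x_0^2)^{-1}\,d\nu_z=1$) is the right one, and your implicit-differentiation computation of $\tfrac{1}{4\pi}\Delta_z F$ does reproduce the split $x_0^2T_2+|T_1|^2/\!\int(\lambda+x_0^2)^{-2}d\nu_z$ correctly. But several of the links you propose for the chain are not load-bearing in the form you state them.

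First, the representation $\E_{H_n}[\log\det(Y+\ep^2)]=\partial_s|_{s=0}\log\int e^{-n\mathcal S(\cdot\,;s)}$ is essentially a replica (non-integer power) trick and does not yield a well-defined Gaussian/Berezin integral; the paper instead uses the Fyodorov--Sommers identity $\partial_{\bar z}\partial_z\log\det(Y(z)+\ep^2)=\partial_{\bar z}\bigl(\partial_{z_1}\det(Y(z_1)+\ep^2)/\det(Y(z)+\ep^2)\bigr)\big|_{z_1=z}$ and then a rigorous supersymmetric representation of the \emph{ratio} $\mathcal Z(\ep,\ep_1,z,z_1)=\E_{H_n}\{\det(Y(z_1)+\ep_1^2)/\det(Y(z)+\ep^2)\}$ (Proposition~\ref{pr:I_rep_Z}), from which the density is extracted by differentiating that exact finite-$n$ identity in $z,z_1$ \emph{before} any limit is taken. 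Second, and related: you propose to compute $F(z)=\lim_n n^{-1}\E\log\det Y$ and only then apply $\Delta_z$. That swap of $\Delta$ with $\lim_{n\to\infty}$ is not free; the paper sidesteps it by working throughout with $\bar\rho_{\ep,n}=\tfrac{1}{4\pi n}\Delta\E_{H_n}\log\det(Y+\ep^2)$, so the only exchange that has to be justified is $\lim_{\ep\to 0}\leftrightarrow\lim_{n\to\infty}$, which is handled via Moore--Osgood after proving the uniform-in-$n$ convergence of Proposition~\ref{l:uni_conv}. Third, your replacement of that step by ``set $\ep=n^{-\kappa}$ and use a smallest-singular-value lower bound'' is a different and more fragile route: the bounds in \cite{SS:sng,CiErS:20} are at a \emph{fixed} $z$ and do not directly give a bound uniform over $z$ in $\supp h$ (where $X_n$ has $\sim n$ eigenvalues, near which $\sigma_{\min}(X_n-z)$ can be arbitrarily small); controlling the $\E_{H_n}$-integral over those bad $z$-sets, with $\ep$ simultaneously shrinking, is precisely the hard uniformity issue that the paper's Section~6 resolves by direct estimates on $T(\ep,n,z)=\partial_\ep\E\,n^{-1}\log\det(Y+\ep^2)$ in several $(\ep,n,z)$ regimes. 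Finally, the conclusion ``$F$ is harmonic outside $D$, so $\supp\mu\subset D$'' is the right statement but cannot be assumed: establishing $\supp\mu=D$ is part of what has to be proved (the paper notes it was previously known only under an extra hypothesis in \cite{BoCap:14}); the harmonicity of $\mathcal G(z,0)=\int\log\lambda\,d\nu_z$ off $D$ needs to be derived from (C1)--(C4), not inferred from $\supp\mu$.
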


The above result yields that $\mu$ does not actually depend on the particular choice of $A_n$. In other words, there is a fully deterministic measure $\mu$ with density $\rho_\mu(z)$ given by \eqref{rho_main_form} which is a limit of NCM of $X_n$. More precisely,

\begin{cor}
Assume that $X_n$ defined in \eqref{X_n} satisfies \eqref{Eh} and conditions (C1)--(C4).
Then the normalized counting measure of $X_n$ converges weakly to the measure $\mu$ with density $\rho_\mu(z)$ defined in~\eqref{rho_main_form}.
\end{cor}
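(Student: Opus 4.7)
The plan is to combine Girko's Hermitization trick with the SUSY integration scheme of \cite{SS:den,SS:sng,CiErS:20}. For $h\in C_c^2(\mathbb{C})$ and a regulator $x>0$, Hermitization gives
\begin{equation*}
\int h(z)\,d\mu_n(z)=\frac{1}{2\pi n}\int(\Delta h)(z)\,\log\det(Y(z)+x^2)\,\dz+R_n(x),
\end{equation*}
where $R_n(x)\to 0$ as $x\to 0^+$ uniformly in $n$ on compact sets of $z$; the required control over small singular values of $X_n-z$ is provided by (C2) together with the bounds in \cite{SS:sng,CiErS:20}. The analytic task reduces to identifying the $n\to\infty$ limit $\Phi(z,x)$ of $\Phi_n(z,x):=n^{-1}\log\det(Y(z)+x^2)$, with sufficient control on its two $z$-derivatives to apply $\frac{1}{2\pi}\Delta_z$. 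In practice it is easier to work with $\E\tr(Y(z)+x^2)^{-1}=\partial_x\Phi/(2x)$ and its $z$-derivatives, then recover $\Phi$ by integration in $x$.

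\textbf{SUSY representation and saddle point.} To compute $\E\tr(Y(z)+x^2)^{-1}$, I would write it as a supersymmetric Gaussian integral, pair bosonic and fermionic sectors so that the exponent is linear in $H_n$, and integrate out the Gaussian entries of $H_n$. A Hubbard--Stratonovich transformation then reduces the problem to a finite-dimensional supermatrix integral of the schematic form
\begin{equation*}
\int e^{n\,\Str F(Q;\,\nu_{n,z},z,x)}\,dQ,
\end{equation*}
in which $A_n$ enters only through spectral averages against $\nu_{n,z}$. Under (C1) the exponent has a deterministic limit, and its stationarity equation reads $\int(\lambda+x_0^2)^{-1}\,d\nu_z(\lambda)=1$, precisely the equation defining $x_0(z)$. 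Condition (C3) delivers the rate $n^{-d_0}$ at which $\tr(Y_0+y)^{-1}$ approaches its limit; (C4) keeps the saddle $x_0$ bounded below by $\varrho_0$ uniformly for $z\in\sigma_{\epsilon_0}$; and (C2) together with Borel--Cantelli handles the residual randomness of $A_n$. These quantitative inputs also deliver the rate of convergence advertised in the introduction.

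\textbf{Extracting the density.} Once $\Phi(z,x)$ is identified, the density equals $\rho_\mu(z)=\frac{1}{2\pi}\Delta_z\Phi(z,0)$. Writing $\Phi(z,y)$ as a function of $y=x^2$ and using saddle stationarity $\partial_y\Phi(z,x_0^2(z))=0$, implicit differentiation eliminates the cross term in the chain rule, leaving
\begin{equation*}
\Delta_z\bigl[\Phi(z,x_0^2(z))\bigr]=\Delta_z\Phi(z,x_0^2)-\partial_y^2\Phi(z,x_0^2)\cdot|\nabla_z x_0^2|^2.
\end{equation*}
Differentiating the saddle equation $\int(\lambda+x_0^2)^{-1}d\nu_z(\lambda)=1$ in $z$ shows that $\nabla_z x_0^2$ is proportional to $T_1(z,x_0^2)$ with prefactor $1/\int(\lambda+x_0^2)^{-2}d\nu_z(\lambda)$, producing the $|T_1|^2/\int(\lambda+x_0^2)^{-2}d\nu_z$ term of \eqref{rho_main_form}. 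The direct Laplacian $\Delta_z\Phi(z,x_0^2)$ contributes the $x_0^2\cdot T_2$ term via the identification $T_2=\lim_n\tr(Y_0+x_0^2)^{-1}(\wt Y_0+x_0^2)^{-1}$ coming from the SUSY analysis. For $z\notin D$, only the trivial saddle $x_0=0$ exists and both contributions vanish, producing the $\mathbf{1}_D$ cut-off.

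\textbf{Main obstacle.} The principal difficulty is the rigorous saddle-point analysis of the SUSY integral: contour deformations, uniform-in-$z$ bounds on the Hessian (which degenerates as $z\to\partial D$ and $x_0\to 0$), and quantitative fluctuation estimates around the saddle. A secondary delicate point is the uniform-in-$z$ interchange of the $n\to\infty$ and $x\to 0^+$ limits, which relies on small singular value bounds for $X_n-z$ of the type developed in \cite{SS:sng,CiErS:20}.
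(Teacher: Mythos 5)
Your strategy is essentially the one the paper follows: Hermitization with a regulator, a SUSY/saddle-point evaluation in which $A_n$ enters only through spectral averages against $\nu_{n,z}$, the saddle equation $\int(\lambda+x_0^2)^{-1}d\nu_z(\lambda)=1$, and extraction of the two terms in $\rho_\mu$ via the chain rule with the saddle stationarity eliminating the cross term. The implicit-differentiation computation in your third paragraph is a clean way to see why \eqref{rho_main_form} has exactly the form it does, and matches the paper's asymptotic analysis of $\I_1$ and $\I_2$.

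There is, however, a genuine gap that your ``main obstacle'' paragraph misjudges. You write that the interchange of $n\to\infty$ and $x\to 0^+$ is a ``secondary delicate point'' that ``relies on small singular value bounds for $X_n-z$ of the type developed in \cite{SS:sng,CiErS:20}.'' The paper does not import such bounds; it proves uniform-in-$n$ convergence as $\ep\to 0$ from scratch (Proposition~\ref{l:uni_conv} and Section~\ref{sc:uni_conv_prf}), and explicitly identifies this as \emph{the} main technical difficulty. The reason is that one needs a bound of the form $|T(\ep,n,z)|=|\partial_\ep\E_{H_n}\{n^{-1}\log\det(Y(z)+\ep^2)\}|\le C\ep^{-1+\alpha}$ uniformly in $n\ge n_0$ and $z$ in a compact set, including $z$ near $\partial D$ where the saddle $x_{\ep,n}$ collapses towards $0$ and the Hessian $\kappa_1$ degenerates. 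Achieving this forces a case analysis on the relative sizes of $\ep$, $n^{-1}$, $x_{\ep,n}$, and $\dist(z,\partial D)$, with contour deformations and scalings that are $n$- and $\ep$-dependent in the degenerate regime (e.g.\ quartic rather than quadratic behaviour at the saddle when $x_{\ep,n}\lesssim n^{-1/4}$), and a careful expansion of $\varphi$ exploiting that it vanishes to leading order at the saddle (Lemma~\ref{phi_bounds}). None of this is covered by the cited least-singular-value results, which concern tail bounds on $\sigma_1(X_n-z)$ rather than the expectation of $\log\det$; a crude lower bound on $\sigma_1$ is not strong enough to dominate $\E\{\log\det\}$ near $\partial D$ with the required uniformity. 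Without this input your Moore--Osgood step is unjustified, and the weak convergence to $\mu$ does not follow.

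A second, smaller point: working directly with $n^{-1}\log\det(Y+x^2)$ and a supermatrix integral $\int e^{n\Str F(Q)}dQ$ requires a device to remove the logarithm before Gaussian integration. The paper uses the Fyodorov--Sommers trick, writing $\partial_{\bar z}\partial_z\log\det(Y(z)+\ep^2)$ via the two-parameter ratio $\mathcal{Z}(\ep,\ep_1,z,z_1)=\E\{\det(Y(z_1)+\ep_1^2)/\det(Y(z)+\ep^2)\}$, which is what makes the SUSY representation tractable and, crucially, what produces the explicit formula for $T(\ep,n,z)$ needed in the uniform-convergence proof. Your proposal should make this reduction explicit; ``pair bosonic and fermionic sectors'' alone does not determine the needed two-point object.
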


\begin{rema}\label{r:comp_Zh}
As already mentioned,  the same result is obtained in \cite{Zho} under the assumption that $A_n$ converges in $*$-moments almost surely to a non-commutative random variable $x_0$ (which with some simplification means that $\tr A_n^{e_1}A_n^{e_2}\ldots A_n^{e_k}$ converges almost surely as $n\to\infty$ for any $k$ and any $e_1,\ldots,e_k\in\{1;*\}$). Notice that our main condition (C1) on the convergence of $A_n$ is weaker than convergence in $*$-moments. Moreover, convergence in $*$-moments requires that the entries of $A_n$ have all moments, and such condition appears somewhat unnatural. In our case, the existence of moments is not required for (C1), while conditions (C2) -- (C4) hold for all classical ensembles with entries having finite fourth moment. 
\end{rema}

We also derive a bound on the rate of weak convergence of NCM. Naturally, we need some additional condition on the convergence rate of $A_n$, which is formulated in terms of NCM $\nu_{n,z}$ of $Y_0(z)$:

\begin{itemize}
\item[(C5)] For $\kappa>0$, $C>0$ denote
\begin{equation*}
\begin{split}
&\Omega_{\kappa,C,n}^{(4)}=\{\omega\in\Omega\mid
\sup_{z\in D,\, \kappa\le x\le 2} \Bigl|n^{-1}\log\det (Y_0(z)+x)-\int \log(\lambda+x)\,d\nu_z(\lambda)\Bigr| <Cn^{-1}\}.
\end{split}
\end{equation*}
Then for some $d>0$ and for all $\kappa>0$ there exists $C(\kappa)>0$ satisfying 
\begin{equation*}
\Prob\Bigl\{\Omega_{\kappa,C(\kappa),n}^{(4)}\Bigr\} > 1-Cn^{-1-d}.
\end{equation*}

\end{itemize}

We establish the following result:

\begin{thm}\label{thm:rate_of_conv}
Assume that $X_n$ defined in \eqref{X_n} satisfies \eqref{Eh} and conditions (C1)--(C5). Let $z_1,\ldots,z_n$ be the eigenvalues of $X_n$ and $h(z)\in C_c^2(D)$. Then
\begin{equation*}
\Bigl|\E\Bigl\{\dfrac{1}{n}\sum_{j=1}^n h(z_j)\Bigr\}-\int h(z)\,d\mu(z)\Bigr|\le Cn^{-1/2}.
\end{equation*}
\end{thm}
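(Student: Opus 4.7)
The plan is to combine the Hermitization (Girko's logarithmic potential) identity with the supersymmetric integration machinery developed in the proof of Theorem~\ref{th:main}, and to handle the small singular values of $X_n-z$ by a regularization-and-optimization argument. The starting point is
\begin{equation*}
\frac{1}{n}\sum_{j=1}^n h(z_j)=\frac{1}{4\pi n}\int \Delta h(z)\,\log\det Y(z)\,\dz,
\end{equation*}
valid for $h\in C_c^2(\mathbb{C})$, together with the analogous representation of $\int h\,d\mu$ in terms of $\mathcal{L}(z):=\int \log\lambda\,d\eta_z(\lambda)$, obtained from \eqref{rho_main_form} after two integrations by parts. Since $h$ is supported in the interior of $D$, both integrals are taken over a compact set on which $\mathcal{L}$ is smooth: by (C3)--(C4) the saddle-point equation $\int(\lambda+x_0^2)^{-1}\,d\nu_z(\lambda)=1$ has a solution $x_0(z)$ uniformly bounded away from $0$ on $\supp h$. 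Hence the claim reduces to showing that $\bigl|\E\{n^{-1}\log\det Y(z)\}-\mathcal{L}(z)\bigr|\le Cn^{-1/2}$ uniformly on $\supp h$.

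Introduce a regularization parameter $x>0$ to be chosen. Using the supersymmetric integral representation from the proof of Theorem~\ref{th:main}, the expectation $\E\{n^{-1}\log\det(Y(z)+x)\}$ is written as a contour integral that I would analyze by steepest descent. Under (C2)--(C4) the relevant saddle stays non-degenerate for $x$ in a fixed range around $x_0(z)^2$, and (C5) allows the $A_n$-dependent factor in the integrand to be replaced by its deterministic limit with precision $O(n^{-1})$. The Gaussian fluctuation integral around the saddle then yields
\begin{equation*}
\E\{n^{-1}\log\det(Y(z)+x)\}=\int\log(\lambda+x)\,d\eta_z(\lambda)+O(n^{-1}x^{-1}),
\end{equation*}
where the $x^{-1}$ loss reflects the degeneration of the Hessian of the saddle-point action as $x\to 0$.

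To return to $x=0$, split
\begin{equation*}
n^{-1}\log\det Y(z)=n^{-1}\log\det(Y(z)+x)-\frac{1}{n}\sum_{j=1}^n\log\bigl(1+x/\lambda_j(Y(z))\bigr),
\end{equation*}
and control the correction via the Ginibre smallest-singular-value bound $\Prob\{\sigma_{\min}(X_n-z)\le t\}\le C n t^2$ available from \cite{SS:sng} and \cite{CiErS:20}, which gives $\E\{|\log\sigma_{\min}(X_n-z)|\}=O(\log n)$, together with a trace-class control on the bulk eigenvalues supplied by (C2). The resulting bound on the cutoff correction is $O(x|\log x|)$, and a parallel estimate controls the matching integral against $d\eta_z$. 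Combining,
\begin{equation*}
\bigl|\E\{n^{-1}\log\det Y(z)\}-\mathcal{L}(z)\bigr|\le C\bigl(n^{-1}x^{-1}+x|\log x|\bigr),
\end{equation*}
and optimizing at $x=n^{-1/2}$ produces the desired rate, the residual logarithm being absorbed into $C$ via a slightly sharper cutoff analysis (splitting eigenvalues $\lambda_j(Y(z))\lessgtr n^{-1}$). The main technical obstacle is making the steepest-descent analysis in the second step genuinely quantitative in $x$: tracking the $x$-dependence of the Hessian, the deformation of the contours of integration, and the contribution of the Grassmann directions uniformly over the compact support of $h$, so that (C5) truly yields the $O(n^{-1}x^{-1})$ rate above rather than some worse one. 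Once this is in hand, the remaining pieces (Girko's identity, the smallest-singular-value bound and the optimization) are essentially standard.
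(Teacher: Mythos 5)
Your strategy diverges substantially from the paper's. The paper fixes $\ep=n^{-1/2}$ and decomposes the error into four pieces (inequalities \eqref{inq1}--\eqref{inq4}): the regularization gap $\E\{\tfrac1n\sum h(z_j)\}-\int h\bar\rho_{\ep,n}$ is controlled by integrating the refined bound on $T(\ep',n,z)=\partial_{\ep'}\mathsf\Phi(\ep',n,z)$ from the supersymmetric analysis of Section 6; the finite-$n$ saddle-point error $\bar\rho_{\ep,n}-\wh\rho_{\ep,n}$ is bounded by a careful re-expansion of the Gaussian integrals tracking the extra $n$-dependence coming from $\ep=n^{-1/2}$; $\wh\rho_{\ep,n}-\rho_\ep$ is controlled by (C5); and $\rho_\ep-\rho$ by the Lipschitz bound $|x_\ep-x_0|\le C\ep$. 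Crucially, the paper never invokes smallest-singular-value estimates or any direct control of $\E\{n^{-1}\log\det Y(z)\}$; everything runs through the density $\bar\rho_{\ep,n}$ and the derivative $T(\ep,n,z)$.

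Your proposal has several genuine gaps. First, the claimed expansion $\E\{n^{-1}\log\det(Y+x)\}=\int\log(\lambda+x)\,d\eta_z+O(n^{-1}x^{-1})$ is asserted but not derived: the paper's saddle-point analysis is done for the density (i.e.\ for $\Delta\log\det$ and for $\partial_\ep\log\det$), and extracting a quantitative error for $\log\det$ itself with the stated $x$-dependence is a nontrivial step you would need to reconstruct. Second, the cutoff correction $\frac1n\sum_j\log(1+x/\lambda_j(Y(z)))$ is not controlled by the smallest-singular-value bound alone; you need a local eigenvalue density bound near $0$ for the whole spectrum of $Y(z)$, and invoking (C2) for ``trace-class control'' is misplaced --- (C2) constrains the Frobenius norm of $A_n$, not the small eigenvalues of $(X_n-z)(X_n-z)^*$. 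Third, the optimization $n^{-1}x^{-1}+x|\log x|$ at $x=n^{-1/2}$ yields $O(n^{-1/2}\log n)$, not $O(n^{-1/2})$; the remark that ``the residual logarithm \ldots\ is absorbed via a slightly sharper cutoff analysis'' is exactly the hard part and is left as a placeholder. Finally, the passage from $\rho_\mu$ (defined as a density via \eqref{rho_main_form}) to the Girko-type representation in terms of $\mathcal L(z)=\int\log\lambda\,d\eta_z(\lambda)$ is itself a claim that would need proof within this framework, whereas the paper avoids it entirely by comparing densities directly. In short, your outline replaces the paper's controlled four-step decomposition with a regularize-and-optimize scheme whose key quantitative inputs (the $O(n^{-1}x^{-1})$ saddle-point rate and the $O(x|\log x|)$ cutoff bound) are neither established nor readily available from the stated hypotheses.
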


\begin{rema}
One can see from the computations given in Section~\ref{sc:uni_conv_prf} and Section~\ref{sc:rate} that a similar result holds for an arbitrary smooth $h(z)$ with compact support but with the error term $O(n^{-\alpha})$ for $\alpha$ much smaller than $\dfrac{1}{2}$.  

\end{rema}

The paper is organized as follows. In Section~2 the method of computing the limiting density is described. In Sections~3--6 we perform a step-by-step realization of that method (see the end of Section~2 for more details). In Section~7 the rate of convergence is discussed.

\subsubsection*{Acknowledgement}

The author is grateful to Prof. M. Shcherbina for the statement of the problem and productive discussions. This work is supported by Grant ``International Multilateral Partnerships for Resilient Education and Science System in Ukraine'' IMPRESS-U: N7114
funded by US National Academy of Science and Office of Naval Research.

\section{Strategy for computation of the limiting density }

Let $z_1,\ldots,z_n$ be the eigenvalues of $X_n$. According to the standard potential theory, 
$$
\rho_{n}(z)=\frac{1}{4\pi n}\Delta\log\det Y(z)=\frac{1}{4\pi n}\Delta\sum\limits_{j=1}^n\log|z-z_j|^2
$$ is the density of NCM $\mu_{n}$ of $X_n$ in the sense of distribution, i.e.
\begin{equation}\label{potent_formula}
\int \Delta h(z)\cdot\frac{1}{4\pi n}\log\det Y(z)\,\dz=\frac{1}{n}\sum_{j=1}^n h(z_j)=\int h(z)\,d\mu_{n}
\end{equation}
for an arbitrary $h(z)\in C_c^2(\mathbb{C})$, where $Y(z)$ is defined in \eqref{def_Y_0,Y}, $\Delta$ is a two-dimensional Laplacian on $\mathbb{C}$ and $\dz$ is a standard two-dimensional Lebesgue measure on $\mathbb{C}$. In this paper the limiting density is found by considering some `regularization' of the density $\rho_{n}(z)$ and then taking the limit as $n\to\infty$. 

One can consider a regularization $\wt{\rho}_{\ep,n}(z)=\frac{1}{4\pi n}\sum\limits_{j=1}^n\Delta\log(|z-z_j|^2+\ep^2)$ and show that 
$$
\lim\limits_{\ep\to 0}\int h(z)\wt{\rho}_{\ep,n}(z)\,\dz=\int h(z)\rho_{n}(z)\,\dz
$$
uniformly in $n$, which allows us to find the density of the limiting measure $\mu$ as the double limit $\lim\limits_{\substack{\ep\to 0\\ n\to\infty}} \wt{\rho}_{\ep,n}$. 
However, 
$$
\sum\limits_{j=1}^n\log(|z-z_j|^2+\ep^2)\neq\log\det(Y(z)+\ep^2)
$$ 
in general for non-hermitian matrices $X_n$, 
and there is no easy way to express $\wt{\rho}_{\ep,n}$ in terms of $X_n$. This fact shows that the regularization $\wt{\rho}_{\ep,n}$ is not suitable for our problem.
Instead, we 
use the following regularization: 
\begin{equation}\label{rho_ep,n_first_def}
\rho_{\ep,n}(z)=\frac{1}{4\pi n}\Delta\,\log\det (Y(z)+\ep^2).
\end{equation}
Suppose that this regularization is somehow computed. In order to find the density of the limiting measure $\mu$, we need to somehow send $\ep\to 0$ and $n\to\infty$ in $\rho_{\ep,n}$. From the definition of the regularization one can see that $\lim\limits_{\ep\to 0} \rho_{\ep,n}=\rho_{n}$ and $\lim\limits_{n\to \infty} \rho_{n}=\rho_{\mu}$ in the sense of distributions, where $\rho_{\mu}$ is the density of the limiting $\mu$. Thus it would be natural to try computing $\rho_{\mu}$ as iterated limit $\lim\limits_{n\to\infty}\lim\limits_{\ep\to 0} \rho_{\ep,n}$. However, in practice we are able to derive the asymptotic behaviour of $\rho_{\ep,n}$ for fixed $\ep>0$ as $n\to\infty$. Hence it is convenient for us to compute another iterated limit $\lim\limits_{\ep\to 0} \lim\limits_{n\to\infty}\rho_{\ep,n}$, but then we need to prove that these two iterated limits are equal. Due to Moore-Osgood theorem it is enough to check that the distribution-wise convergence $\lim\limits_{\ep\to 0} \rho_{\ep,n}(z)=\rho_{n}(z)$ is uniform in $n$, and the proof of such fact appears to be the main technical difficulty of this paper.

Now let us be more precise. Denote  $\E_{H_n}$ the expectation with respect to the entries of $H_n$ (i.e. conditional expectation given $A_n$), and set
\begin{equation}\label{rho_ep,n_def}
\bar{\rho}_{\ep,n}(z)=\E_{H_n}\{\rho_{\ep,n}(z)\}
=\frac{1}{4\pi n}\Delta\E_{H_n}\Bigl\{\log\det (Y(z)+\ep^2)\Bigr\}.
\end{equation}
The following result is proven in Section \ref{sc:uni_conv_prf} using the integral representation of $\partial_\ep\E_{H_n}\{\log\det (Y(z)+\ep^2)\}$:

\begin{pro}\label{l:uni_conv}
Assume that $X_n$ defined in \eqref{X_n} satisfies \eqref{Eh} and conditions (C1)--(C4). Let $z_1,\ldots,z_n$ be the eigenvalues of $X_n$ and $h(z)\in C_c^2(\mathbb{C})$. Then there exists $n_0\in\mathbb{N}$ such that
$$
\lim\limits_{\ep\to 0}\ \int h(z) \bar{\rho}_{\ep,n}(z)\,\dz= \E_{H_n}\Bigl\{\frac{1}{n}\sum\limits_{j=1}^n h(z_j)\Bigr\}
$$
uniformly in~$n$ for $n\ge n_0$. In other words, the distribution-wise convergence
$$
\lim\limits_{\ep\to 0} \E_{H_n}\{\rho_{\ep,n}\}=\E_{H_n}\{\rho_{n}\}
$$
is uniform in $n$ for $n\ge n_0$.
\end{pro}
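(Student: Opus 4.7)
The plan is to move the Laplacian off the log-determinant via integration by parts, reduce the claim to a uniform-in-$n$ bound on $\E_{H_n}\{\Tr(Y(z)+t)^{-1}\}$ for small $t$, and supply that bound through the supersymmetric integral representation developed later in the paper.

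\textbf{Step 1 (integration by parts).} Since $h\in C_c^2(\mathbb{C})$, the potential-theoretic identity \eqref{potent_formula} together with Fubini gives
\[
\E_{H_n}\Bigl\{\tfrac{1}{n}\sum_{j=1}^n h(z_j)\Bigr\}=\tfrac{1}{4\pi n}\int\Delta h(z)\cdot\E_{H_n}\{\log\det Y(z)\}\,\dz,
\]
while for fixed $\ep>0$ the function $z\mapsto\E_{H_n}\{\log\det(Y(z)+\ep^2)\}$ is smooth, so
\[
\int h(z)\,\bar\rho_{\ep,n}(z)\,\dz=\tfrac{1}{4\pi n}\int\Delta h(z)\cdot\E_{H_n}\{\log\det(Y(z)+\ep^2)\}\,\dz.
\]
Writing the difference of the two log-determinants as $\int_0^{\ep^2}\Tr(Y(z)+t)^{-1}\,dt$ and applying Fubini once more, it suffices to show that
\[
I_n(\ep):=\int_0^{\ep^2}\!\!\int|\Delta h(z)|\cdot F_n(z,t)\,\dz\,dt\longrightarrow 0\quad\text{as }\ep\to 0,
\]
uniformly in $n\ge n_0$, where $F_n(z,t):=\tfrac{1}{n}\E_{H_n}\{\Tr(Y(z)+t)^{-1}\}$.

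\textbf{Step 2 (uniform bound via SUSY).} The naive deterministic estimate $F_n(z,t)\le t^{-1}$ is not integrable on $(0,\ep^2)$, so I must exploit the randomness of $H_n$. For this I would invoke the supersymmetric integral representation of Section~3: one writes $\Tr(Y(z)+t)^{-1}=-\partial_s\log\det(Y(z)+t+s)\big|_{s=0}$, represents the resulting ratio of determinants as a Gaussian integral over commuting and Grassmann variables, integrates out $H_n$ using \eqref{Eh}, and performs a saddle-point analysis in the low-dimensional spectral parameters. Combined with the a-priori control provided by (C1)--(C4), this produces a bound of the form
\[
F_n(z,t)\le C(h)\bigl(1+|\log t|\bigr),\qquad 0<t\le\ep_0^2,\ z\in\supp h,\ n\ge n_0,
\]
with a constant $C(h)$ depending only on $\supp h$. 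This matches the expected behaviour, since the limiting density of singular values of $X_n-z$ at the origin is finite for $z\in D$ and vanishes off~$D$.

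\textbf{Step 3 (conclusion and obstacle).} Granting the bound of Step~2, Fubini yields
\[
I_n(\ep)\le C(h)\int_0^{\ep^2}\bigl(1+|\log t|\bigr)\,dt=O\bigl(\ep^2|\log\ep|\bigr),
\]
which tends to $0$ uniformly in $n$ and completes the proof. The genuinely difficult step is Step~2: the saddle-point evaluation must remain valid and produce a bound that degrades at worst logarithmically as $t\to 0$, uniformly in $z$ over compacts and in $n$. I would split the analysis into three regimes of $z$: outside $D$ the integrand $\Tr(Y(z)+t)^{-1}$ is $O(1)$ thanks to (C3); inside $D\setminus\sigma_{\epsilon_0}$ one argues by continuity of the saddle $x_0(z)$; the hardest regime is $z\in\sigma_{\epsilon_0}$, where the saddle-point equation becomes genuinely non-trivial and the $|\log t|$ divergence is generated by fluctuations around $x_0(z)$. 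Condition~(C4) is what rules out spurious saddles and guarantees that the leading saddle is non-degenerate uniformly in $n$, while (C2) prevents large deviations of $\|A_n\|$ from destroying the uniform control on tails of the SUSY integral.
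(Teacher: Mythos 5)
Your Step~1 reduction is correct and is essentially the paper's: the paper also passes through $\mathsf{\Phi}(\ep,n,z)=\tfrac{1}{n}\E_{H_n}\{\log\det(Y(z)+\ep^2)\}$ and its $\ep$-derivative $T(\ep,n,z)=\E_{H_n}\{2\ep\,\tr(Y(z)+\ep^2)^{-1}\}=2\ep\,F_n(z,\ep^2)$, and reduces the claim to a bound on this quantity via the supersymmetric integral representation. So you are not taking a genuinely different route.

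However there is a genuine gap in Step~2, and it is the entire content of the proposition. You assert the target bound $F_n(z,t)\le C(h)(1+|\log t|)$ uniformly in $n$ and $z\in\supp h$, but you neither prove it nor cite anything that would give it. This bound is substantially stronger than what the paper actually establishes: the paper proves $|T(\ep,n,z)|\le C\ep^{-1/2}$ away from $\partial D$ and $|T(\ep,n,z)|\le C\ep^{-5/6}$ near $\partial D$, which translates into $F_n(z,t)\le Ct^{-3/4}$ and $F_n(z,t)\le Ct^{-11/12}$ respectively --- polynomial blow-up, not logarithmic. These weaker bounds are still integrable and suffice (giving $I_n(\ep)=O(\ep^{1/2})$ resp. $O(\ep^{1/6})$ rather than your $O(\ep^2|\log\ep|)$), but they are obtained by a long and delicate case analysis (Propositions~6.5--6.11) that splits into regimes $\ep\lessgtr n^{-1}$, multiple sub-ranges of the distance from $z$ to $\partial D$, and further sub-cases depending on the size of the saddle $x_{\ep,n}$ relative to powers of $\ep$ and $n$. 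Your sketch of Step~2 gestures at this but provides none of the actual estimates; in particular, the regime $z$ near $\partial D$ with $\tr G(0)<1$ requires the extra trick of creating an additional zero at the saddle (as in the paper's Case~2 of Proposition~6.9), and without it the naive saddle-point bound does not close. Moreover, a genuinely logarithmic bound on $F_n(z,t)$ uniformly in $n$ would essentially amount to a local law near the hard edge for the Hermitization of $X_n-z$, which is a separate hard theorem and is not what the SUSY saddle-point machinery of this paper delivers. So while the scaffolding of your argument is sound, the load-bearing estimate is missing, and the bound you write down for it is not the one that this method produces.
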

As mentioned above, this fact allows us to change the order of the limits in $\lim\limits_{\ep\to 0} \lim\limits_{n\to\infty}\rho_{\ep,n}$. More precisely,

\begin{cor}\label{pro:alg_lim_meas}
Take an arbitrary $h(z)\in C_c^2(\mathbb{C})$ and denote $E=\supp h$.
Suppose that Proposition~\ref{l:uni_conv} holds for $\bar{\rho}_{\ep,n}(z)$. Assume that for almost all $z\in E$ there exists $\lim\limits_{n\to\infty}\bar{\rho}_{\ep,n}(z)=\rho_{\ep}(z)$ such that $|\rho_{\ep}(z)|\le C$ for all $0<\ep\le \ep_0$ and $z\in E$, and there exists $\lim\limits_{\ep\to 0} \rho_{\ep}(z)=\rho(z)$ for all $z\in E$.  Then 
$$
\int h(z) \rho(z)\,\dz = \int h(z) d\mu(z).
$$
In other words, $\rho=\lim\limits_{\ep\to 0} \lim\limits_{n\to\infty}\bar{\rho}_{\ep,n}$ is indeed the density of $\mu$.
\end{cor}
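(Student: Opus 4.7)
The plan is to run a Moore-Osgood style interchange of limits in the double sequence $\int h(z)\bar{\rho}_{\ep,n}(z)\,\dz$, with Proposition~\ref{l:uni_conv} supplying the crucial uniform-in-$n$ convergence as $\ep\to 0$ and the assumptions of the corollary controlling the $n\to\infty$ direction. Throughout let $E=\supp h$.

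First, for each fixed $\ep>0$, I claim that $\int h\bar{\rho}_{\ep,n}\,\dz\to\int h\rho_\ep\,\dz$ as $n\to\infty$. The pointwise a.e.\ convergence $\bar{\rho}_{\ep,n}\to\rho_\ep$ on $E$ is hypothesized, but to pass it under the integral I first integrate by parts using $h\in C_c^2$:
$$
\int h\bar{\rho}_{\ep,n}\,\dz=\frac{1}{4\pi n}\int \Delta h(z)\,\E_{H_n}\bigl\{\log\det(Y(z)+\ep^2)\bigr\}\,\dz.
$$
The scalar $n^{-1}\log\det(Y(z)+\ep^2)$ admits a uniform-in-$n$ bound on $E$: the lower bound $2\log\ep$ follows from positivity of $Y(z)$, and an upper bound of the form $\log(C_E+\ep^2)$ follows from the Frobenius-norm control in (C2), the arithmetic-geometric inequality $n^{-1}\log\det M\le \log(n^{-1}\Tr M)$ for positive $M$, and compactness of $E$; this holds on $\Omega_{M,n}^{(1)}$, hence in $\E_{H_n}$ after absorbing the negligible complementary event. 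Dominated convergence applied to the scalar integrand then delivers the $n\to\infty$ limit, and reversing the integration by parts identifies it with $\int h\rho_\ep\,\dz$.

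Second, combining the previous step with Proposition~\ref{l:uni_conv}, which provides $\int h\bar{\rho}_{\ep,n}\,\dz\to\E_{H_n}\{n^{-1}\sum_j h(z_j)\}$ as $\ep\to 0$ uniformly in $n\ge n_0$, Moore-Osgood licenses the interchange:
$$
\lim_{\ep\to 0}\int h\rho_\ep\,\dz=\lim_{\ep\to 0}\lim_{n\to\infty}\int h\bar{\rho}_{\ep,n}\,\dz=\lim_{n\to\infty}\lim_{\ep\to 0}\int h\bar{\rho}_{\ep,n}\,\dz=\lim_{n\to\infty}\E_{H_n}\Bigl\{\frac{1}{n}\sum_j h(z_j)\Bigr\}.
$$
The final limit equals $\int h\,d\mu$ by the weak convergence $\mu_n\to\mu$ (recalled in the discussion preceding Theorem~\ref{th:main}) together with bounded convergence to handle $\E_{H_n}$. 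A last application of dominated convergence, justified by $|\rho_\ep|\le C\mathbf{1}_E$ and the pointwise limit $\rho_\ep\to\rho$, gives $\int h\rho_\ep\,\dz\to\int h\rho\,\dz$, and the identity $\int h\rho\,\dz=\int h\,d\mu$ follows.

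The main technical obstacle is step one: the corollary's hypotheses supply pointwise convergence of $\bar{\rho}_{\ep,n}$ without an intrinsic uniform-in-$n$ $L^\infty$ bound on the prelimit density itself, so dominated convergence cannot be applied to $\int h\bar{\rho}_{\ep,n}\,\dz$ directly. Integrating by parts moves the Laplacian onto $h$ and reduces the $n\to\infty$ passage to controlling the scalar quantity $n^{-1}\E_{H_n}\{\log\det(Y(z)+\ep^2)\}$, where positivity of $Y(z)$ and (C2) furnish the required uniform bound on $E$.
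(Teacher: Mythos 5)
Your proposal diverges from the paper at the key step: justifying $\lim_{n\to\infty}\int h\,\bar{\rho}_{\ep,n}\,\dz=\int h\,\rho_\ep\,\dz$ for fixed $\ep$. The paper disposes of this in one line by noting the explicit identity
\[
\bar{\rho}_{\ep,n}(z)=\dfrac{\ep^2}{\pi}\,\E_{H_n}\Bigl\{\tr\,(Y(z)+\ep^2)^{-1}(\wt{Y}(z)+\ep^2)^{-1}\Bigr\}\le\dfrac{1}{\pi\ep^2},
\]
a uniform-in-$n$ (and in $z$) pointwise bound on the prelimit density itself. Combined with the assumed a.e.\ convergence $\bar{\rho}_{\ep,n}\to\rho_\ep$, dominated convergence applies directly to $\int h\,\bar{\rho}_{\ep,n}$, with dominating function $|h|/(\pi\ep^2)$. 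This is what your opening remark --- that there is ``no intrinsic uniform-in-$n$ $L^\infty$ bound on the prelimit density'' --- overlooks; in fact there is, and it is central to the paper's proof.

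Your alternative of integrating by parts to control $\mathsf{\Phi}_n(z)=n^{-1}\E_{H_n}\{\log\det(Y(z)+\ep^2)\}$ instead does not close the argument. First, to invoke dominated convergence for $\frac{1}{4\pi}\int\Delta h\cdot\mathsf{\Phi}_n\,\dz$ you need the pointwise a.e.\ limit $\lim_n\mathsf{\Phi}_n(z)$ to exist; the corollary's hypotheses assert convergence of $\bar{\rho}_{\ep,n}=\frac{1}{4\pi}\Delta\mathsf{\Phi}_n$, not of $\mathsf{\Phi}_n$ itself, so this is an unjustified assumption. Second --- and this is the substantive gap --- even granting a limit $\mathsf{\Phi}=\lim_n\mathsf{\Phi}_n$, ``reversing the integration by parts'' requires $\frac{1}{4\pi}\Delta\mathsf{\Phi}=\rho_\ep$ as distributions, i.e.\ that the distributional Laplacian of the limit agrees with the pointwise a.e.\ limit of the Laplacians. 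Pointwise a.e.\ convergence $\Delta\mathsf{\Phi}_n\to 4\pi\rho_\ep$ plus a uniform $L^\infty$ bound on the potentials $\mathsf{\Phi}_n$ does not rule out mass in $\Delta\mathsf{\Phi}_n$ escaping to a singular part in the distributional limit; some form of uniform integrability of $\bar{\rho}_{\ep,n}$ is required, and that is exactly what the paper's bound $\bar{\rho}_{\ep,n}\le(\pi\ep^2)^{-1}$ supplies. A smaller issue: your $L^\infty$ bound on $\mathsf{\Phi}_n$ via (C2) controls $A_n$, not $X_n=A_n+H_n$, so you additionally need to account for the $H_n$ contribution via $\E_{H_n}\{n^{-1}\Tr H_nH_n^*\}=1$ and Jensen before the AM--GM step. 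The remainder of your argument --- the Moore--Osgood interchange using Proposition~\ref{l:uni_conv}, the dominated convergence in $\ep$ using $|\rho_\ep|\le C$, and the identification of $\lim_n\E_{H_n}\{n^{-1}\sum_j h(z_j)\}=\int h\,d\mu$ --- agrees with the paper and is fine.
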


\begin{proof}

Again, let $\{z_1,z_2,\ldots,z_n\}$ be the set of eigenvalues of $X_n$, and recall that $\mu$ defined in Subsection~\ref{sbsc:bas_notat} is the weak limit of $\mu_{n}$. 
Then dominated convergence theorem implies
\begin{equation*}
\lim_{n\to\infty}\E_{H_n}\Bigl\{\dfrac{1}{n}\sum\limits_{j=1}^n h(z_j) \Bigr\} = \int h(z)\, d\mu.
\end{equation*}
One can check that
$
\bar{\rho}_{\ep,n}(z)=\dfrac{1}{\pi}\ep^2
\E_{H_n}\Bigl\{\tr (Y(z)+\ep^2)^{-1}(\wt{Y}(z)+\ep^2)^{-1}\Bigr\}\le \dfrac{1}{\pi\ep^2}.
$
This bound, together with the assumptions of the theorem, gives us
\begin{equation*}
\begin{split}
&\lim_{n\to \infty}\ \int h(z) \bar{\rho}_{\ep,n}(z)\,\dz=
\int h(z) \rho_{\ep}(z)\,\dz,\\
&\lim_{\ep\to 0}\ \int h(z) \rho_{\ep}(z)\,\dz=
\int h(z) \rho(z)\,\dz.
\end{split}
\end{equation*}
All this convergences combined with uniform convergence from Proposition~\ref{l:uni_conv} and Moore-Osgood theorem imply the identity $\int h(z) \rho(z)\,\dz=\int h(z) \,d\mu$. 

\end{proof}

\begin{rema}\label{rm:to_prove}
Comparing Corollary~\ref{pro:alg_lim_meas} with Theorem \ref{th:main} one can see that it is sufficient to check the following facts:
\begin{itemize}
\item There exist limits $\lim\limits_{n\to\infty}\bar{\rho}_{\ep,n}(z)=\rho_{\ep}(z)$ and $\lim\limits_{\ep\to 0} \rho_{\ep}(z)=\rho(z)$ for almost all $z\in \mathbb{C}$;
\item $\rho_{\ep}(z)$ is bounded uniformly in $\ep\le\ep_0$ and almost all $z\colon |z|\le C$;
\item $\rho(z)=\rho_\mu(z)$ where $\rho_\mu(z)$ is given by \eqref{rho_main_form};
\item Proposition \ref{l:uni_conv} holds.
\end{itemize}
\end{rema}
\medskip

\noindent
The rest of the paper is organized as follows. In Section 3 the integral representation of $\bar{\rho}_{\ep,n}(z)$ is obtained. In Section 4 the derived integral representation and saddle point method are used to study asymptotic behaviour of $\bar{\rho}_{\ep,n}(z)$ as $n\to\infty$. In Section 5 the limits $\rho_{\ep}(z)=\lim\limits_{n\to\infty} \bar{\rho}_{\ep,n}(z)$ and $\rho(z)=\lim\limits_{\ep\to 0} \rho_{\ep}(z)$ are found. Section 6 contains the proof of Proposition~\ref{l:uni_conv}. Section 7 is devoted to the rate of convergence of NCM $\mu_{n}$.

\section{Integral representation of \boldmath{$\bar{\rho}_{\ep,n}(z)$}}

Below the formula for $\bar{\rho}_{\ep, n}$ is rewritten to be more suitable for supersymmetric integration. We use the trick introduced by Fyodorov, Sommers in \cite{FyoSom:03}. Recall that $\Delta=4\partial_{\bar z}\partial_z$, where $\partial_z$ and $\partial_{\bar{z}}$ are Wirtinger derivatives defined as $\partial_z f(z)=\frac{1}{2}(\partial_x-i\partial_y)f(x+iy)$,  $\partial_{\bar z} f(z)=\frac{1}{2}(\partial_x+i\partial_y)f(x+iy)$. Straightforward computation shows that
\begin{equation}\label{rho_ep,n}
\partial_{\bar{z}}\partial_z\log\det (Y(z)+\ep^2)=
\partial_{\bar{z}}\left(\partial_{z_1}\dfrac{\det(Y(z_1)+\ep^2)}{\det(Y(z)+\ep^2)}\right)\Bigr|_{z_1=z}.
\end{equation}
This identity together with \eqref{rho_ep,n_def} implies
\begin{equation}\label{rho_en_form}
\bar{\rho}_{\ep, n}(z)=\dfrac{1}{\pi n}
\partial_{\bar{z}}\Bigl(\bigl(\partial_{z_1}\mathcal{Z}(\ep,\ep,z,z_1)\bigr)\bigr|_{z_1=z}\Bigr),
\end{equation}
where 
\begin{equation}\label{def_Z}
\mathcal{Z}(\ep,\ep_1,z,z_1)=\E_{H_n}\left\{\dfrac{\det(Y(z_1)+\ep_1^2)}{\det(Y(z)+\ep^2)}\right\}.
\end{equation}

The following proposition gives us an integral representation of $\mathcal{Z}(\ep,\ep_1,z,z_1)$.

\begin{pro}\label{pr:I_rep_Z}
We have 
\begin{equation}\label{I_rep_Z}
\begin{split}
\mathcal{Z}(\ep,\ep_1,z,z_1)
=&\dfrac{2n^3}{\pi^3}\int_0^\infty dR\int_{-\infty}^\infty dv\,du_1\,du_2\,ds\int_L dt\cdot \dfrac{R}{\sqrt{v^2+4R}}\cdot \varphi(u_1^2+u_2^2,s^2-t^2,z,z_1)\times\\
&\times  \exp\left\{n\Bigl(\L_n(z_1,u_1^2+u_2^2)-(u_1+\ep_1)^2-u_2^2\Bigr)\right\}\times\\
&\times  \exp\left\{-n\Bigl(\L_n(z,s^2-t^2)+(t-i\ep)^2+(R+it+\ep)^2+\ep v^2\Bigr)\right\},
\end{split}
\end{equation}
with  $L:=\mathbb{R}+\ep_0i$,\quad $\varphi(x,y,z,z_1)=\varphi_{1}(x,y,z,z_1)-\dfrac{1}{n}\varphi_{2}(x,y,z,z_1)$,
\begin{equation}\label{notat_I_rep_Z}
\begin{split}
&\begin{split}
\varphi_{1}(x,y,z,z_1)=&\left(1-\tr (A_n-z_1)^*G(z_1,x)G(z,y)(A_n-z)\right)\times\\
&\times\left(1-\tr G(z_1,x)(A_n-z_1)(A_n-z)^*G(z,y)\right)-\\
&-xy\cdot \tr G(z_1,x)G(z,y)
\cdot\tr \wt{G}(z_1,x)\wt{G}(z,y);
\end{split}\\
&\begin{split}
\varphi_{2}(x,y,z,z_1)=&
y\cdot\tr G(z_1,x)(A_n-z_1)\wt{G}(z,y)(A_n-z_1)^*G(z_1,x)G(z,y)+\\
&+x\cdot\tr G(z_1,x)G(z,y)(A_n-z)\wt{G}(z_1,x)(A_n-z)^*G(z,y),
\end{split}
\end{split}
\end{equation}
where 
\begin{equation}\label{def_L_G}
\begin{split}
&\L_n(z, x)=\frac{1}{n}\log\det\bigl(Y_0(z)+x\bigr)=\frac{1}{n}\log\det\bigl((z-A_n)(\bar{z}-A_n^*)+x\bigr),\\
&G(z,x)=(Y_0(z)+x)^{-1}=((z-A_n)(\bar z-A_n^*)+x)^{-1},\\
&\wt{G}(z,x)=(\wt{Y}_0(z)+x)^{-1}=((\bar z-A_n^*)(z-A_n)+x)^{-1}.
\end{split}
\end{equation}

\end{pro}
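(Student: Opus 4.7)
The identity has the form of a Fyodorov--Sommers style supersymmetric integral representation, and I will follow the standard four-step SUSY recipe. First, (i) rewrite the ratio $\det(Y(z_1)+\ep_1^2)/\det(Y(z)+\ep^2)$ as a Gaussian super-integral over a Grassmann doublet $(\chi_1,\chi_2)$ of $n$-vectors and a complex doublet $(\psi_1,\psi_2)$ of $\mathbb{C}^n$-vectors, using the two identities $\det M=\int D\bar\chi D\chi\, e^{-\chi^* M\chi}$ and $(\det M)^{-1}\propto\int D\bar\psi D\psi\, e^{-\psi^* M\psi}$ (valid for $M>0$), each time with an auxiliary $2\times 2$ block structure so that $M$ becomes linear in $X_n-z$ and $(X_n-z)^*$. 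Next, (ii) carry out $\E_{H_n}$ using Wick's theorem for complex Gaussians (recalling $\E|h_{ij}|^2=1/n$ and $\E h_{ij}^2=0$); since the actions are linear in $H_n$, this produces an explicit quartic interaction in $\chi,\psi$ with an overall $n^{-1}$ coupling. Then (iii) decouple the quartic action by a Hubbard--Stratonovich transformation introducing a $2\times 2$ Hermitian matrix $Q$ on the bosonic side and a $2\times 2$ matrix $P$ on the fermionic side. Finally, (iv) perform the now-Gaussian $\chi,\psi$ integrals and parametrize $P,Q$ so as to recover the variables $R,v,u_1,u_2,s,t$ in \eqref{I_rep_Z}.

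The heart of the computation is Step (iv). After HS the $\chi,\psi$-integrals factorize into $A_n$-dependent Gaussians; evaluating them yields $\det(Y_0(z_1)+\cdot)$ in the fermionic sector and $\det(Y_0(z)+\cdot)^{-1}$ in the bosonic sector, which exponentiate via \eqref{def_L_G} into $\exp\{n(\L_n(z_1,\cdot)-\L_n(z,\cdot))\}$. The off-diagonal entries of $P,Q$ contribute a residual Gaussian integration which, expanded to the first subleading order, produces the pre-factor $\varphi_1-n^{-1}\varphi_2$: the explicit resolvent traces in \eqref{notat_I_rep_Z} arise as pairings of these off-diagonal contractions with the resolvents $G(z_j,\cdot),\wt G(z_j,\cdot)$. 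Parametrizing $P$ by $(u_1,u_2)$ with $u_1^2+u_2^2$ as the relevant spectral combination gives the $\L_n(z_1,u_1^2+u_2^2)$ term and the shift $(u_1+\ep_1)^2+u_2^2$ coming from the source $\ep_1$. For $Q$, a spectral/radial parametrization by $R\ge 0$, an angular $v$, and two ``eigenvalues'' $s$ and a second one rotated onto the contour $L=\mathbb{R}+i\ep_0$ yields the indefinite combination $s^2-t^2$ inside $\L_n(z,\cdot)$, the Gaussian damping $e^{-n\ep v^2}$, and the measure $R\,dR/\sqrt{v^2+4R}$ from the Jacobian of the spectral decomposition.

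The main technical obstacle is the contour and convergence analysis for the bosonic sector. The original Gaussian representation of $1/\det(Y(z)+\ep^2)$ converges only because $Y(z)+\ep^2>0$; once the $2\times 2$ matrix $Q$ is introduced, one of its eigenvalues must be rotated off the real line onto $L=\mathbb{R}+i\ep_0$, and one must justify the contour deformation as well as verify absolute convergence of all subsequent integrals (the terms $(R+it+\ep)^2$ and $e^{-n\ep v^2}$ are precisely what makes this work). The secondary technical point is the derivation of the $n^{-1}\varphi_2$ correction by a careful Gaussian expansion over the off-diagonal HS components. Both steps are essentially routine once set up but require meticulous bookkeeping of Jacobians, signs, and prefactors in order to recover the exact constant $2n^3/\pi^3$.
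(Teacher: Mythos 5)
Your proposal correctly identifies the Fyodorov--Sommers SUSY recipe (linearize the determinant ratio via a super-Gaussian over $2n$-component fermionic and bosonic doublets, compute $\E_{H_n}$ by Wick to get a quartic $n^{-1}$-coupled interaction, decouple by a $2\times 2$ Hubbard--Stratonovich transform in each sector, integrate out the fields, and parametrize the HS matrices), which is precisely the method underlying the result. The paper itself does not redo this computation: its ``proof'' is a citation to the proof of Proposition~2.1 in \cite{SS:sng}, which is the representation of $\mathcal{Z}(\ep,\ep_1,z,z)$, together with the observation that inserting $z_1$ in the Grassmann sector yields $\mathcal{Z}(\ep,\ep_1,z,z_1)$, followed by the explicit variable changes $(t_1,t_2)\mapsto(s,t)$ with $s=(t_1-t_2)/2$, $t=(t_1+t_2)/2$ and $(r_1,r_2)\mapsto(v,R)$ with $v=r_1-r_2$, $R=r_1r_2$ (Jacobians $2$ and $(v^2+4R)^{-1/2}$), which produce the stated form of \eqref{I_rep_Z}. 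So the substance of your route is the same as what the cited work carries out; what your sketch buys is a self-contained account, and what the paper's approach buys is brevity.

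Two points of care if you were to carry your plan out in full. First, your step~(iii) only mentions bosonic HS fields $Q$ and $P$; the quartic produced by $\E_{H_n}$ also contains mixed fermion--boson terms $(\bar\chi\psi)(\bar\psi\chi)$, whose decoupling requires Grassmann-valued HS components, and it is precisely the exact (finite) integration over these Grassmann HS fields that produces the polynomial prefactor $\varphi_1-n^{-1}\varphi_2$ in \eqref{notat_I_rep_Z}; describing this as ``a residual Gaussian integration expanded to the first subleading order'' is misleading since the expansion terminates exactly and $\varphi_1$ is the leading, not subleading, contribution. Second, the final parametrization is not a direct spectral decomposition of $Q$; as the paper's change of variables shows, the representation \eqref{I_rep_Z} is obtained from intermediate coordinates $(t_1,t_2,r_1,r_2)$ (the diagonal/radial parameters in \cite{SS:sng}) by the nonlinear maps above, which is where the factor $R/\sqrt{v^2+4R}$ and the indefinite combination $s^2-t^2$ come from. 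Neither of these is a gap in principle, but your write-up would need to make both explicit to close the argument.
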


This integral representation was in fact established in \cite{SS:sng} (see the proof of Proposition~2.1). They obtained the integral representation of $\mathcal{Z}(\ep,\ep_1,z,z)$ (which is $\mathcal{Z}(\ep,\ep_1)$ in \cite{SS:sng}) and then differentiated the identity with respect to $\ep_1$ in order to get $T(z,\ep)$. One can obtain the representation of $\mathcal{Z}(\ep,\ep_1,z,z_1)$ using the same strategy, changing $z$ to $z_1$ in the part which corresponds to Grassmann variables.
After that just make the following changes of variables:
\begin{equation*}
\begin{split}
(t_1,t_2)\to (s,t),\quad s=\dfrac{t_1-t_2}{2},\quad t=\dfrac{t_1+t_2}{2},\quad s\in\mathbb{R},\ t\in L,\\
(r_1,r_2)\to (v,R), \quad v=r_1-r_2,\quad R=r_1r_2,\quad v\in\mathbb{R},\ R\in[0,+\infty).
\end{split}
\end{equation*}
The Jacobians of these changes are $J_1=2$ and $J_2=\dfrac{1}{(v^2+4R)^{1/2}}$. It is easy to see that we obtain \eqref{I_rep_Z}.

\begin{rema}
We need to know $\mathcal{Z}(\ep,\ep_1,z,z_1)$ only for $\ep_1=\ep$ in order to find $\bar{\rho}_{\ep,n}(z)$. The formula for $\ep\neq \ep_1$ is used in Section 6 to prove Proposition~\ref{l:uni_conv}.
\end{rema}

\begin{rema}\label{rm:phi_z_1=z}
It follows from the proof of \cite[Proposition 2.1]{SS:sng} that for $z_1=z$ we have
\begin{equation*}
\begin{split}
&\varphi_{1}(x,y,z,z)=\Big(1-\tr G(z,y)+x\,\tr G(z,x)G(z,y)\Big)^2-xy(\tr G(z,x)G(z,y))^2;\\
&\varphi_{2}(x,y,z,z)=(x-y)(\tr G(z,x)G^2(z,y)-x\,\tr G^2(z,x)G^2(z,y)).
\end{split}
\end{equation*}
\end{rema}

In order to get an integral representation of $\bar{\rho}_{\ep,n}$, we need to differentiate \eqref{I_rep_Z} with respect to $z, z_1$ as in \eqref{rho_en_form}. To this end, we need to set $\ep_1=\ep$ and isolate the parts of the integrand which depend on~$z,z_1$. Introduce a functional 
\begin{equation}\label{def_averaging}
\begin{split}
\mathcal{I}(f(u_1,u_2,t,s,z,z_1))
=&\dfrac{2n^3}{\pi^3}\int_0^\infty dR\int_{-\infty}^\infty dv\,du_1\,du_2\,ds\int_L dt\cdot \dfrac{R}{\sqrt{v^2+4R}}\cdot f(u_1,u_2,t,s,z,z_1)\times\\
&\times \exp\{-n((u_1+\ep)^2+u_2^2+(t-i\ep)^2+(R+it+\ep)^2+\ep v^2)\},
\end{split}
\end{equation}
We can use it to rewrite \eqref{I_rep_Z} as
\begin{equation}\label{Z_repr_aver}
\mathcal{Z}(\ep,\ep,z,z_1)=\mathcal{I}(\varphi(z,z_1) e^{n\F_n(z,z_1)}),
\end{equation}
where $\mathcal{F}_n(z,z_1)=\L_n(z_1,u_1^2+u_2^2)-\L_n(z,s^2-t^2)$ and $\varphi(z,z_1)$ is a short notation for the function $\varphi(u_1^2+u_2^2,s^2-t^2,z,z_1)$ defined in \eqref{notat_I_rep_Z}.
Further we will use the following trivial observation:
\begin{equation}\label{<phi_exp_nF>=1}
\mathcal{I}( \varphi(z,z) e^{n\F_n(z,z)})=\mathcal{Z}(\ep,\ep,z,z)=1.
\end{equation}
Now notice that
\begin{equation*}
\partial_{\bar{z}}\bigl((\partial_{z_1}\mathcal{Z}(\ep,\ep,z,z_1))\bigr|_{z_1=z}\bigr)=
\bigl((\partial_{\bar{z}}+\partial_{\bar{z}_1})\partial_{z_1}\mathcal{Z}(\ep,\ep,z,z_1)\bigr)\bigr|_{z_1=z}
\end{equation*}
The last identity combined with \eqref{rho_en_form} and \eqref{Z_repr_aver} implies the next result:

\begin{pro}\label{pro:int_rep_dens}
Consider the following integrals: 
\begin{equation}\label{def_I_k}
\begin{split}
&\I_1:=\mathcal{I}\bigl(\partial_{z_1}\F_n(z,z_1)\cdot(\partial_{\bar{z}}+\partial_{\bar{z}_1})\bigl( \varphi(z,z_1)\,e^{n\F_n(z,z_1)}\bigr)\bigr)\Bigr|_{z_1=z};\\ 
&\I_2:=\mathcal{I}\bigl((\partial_{\bar{z}}+\partial_{\bar{z}_1}) \partial_{z_1}\F_n(z,z_1)\cdot \varphi(z,z_1)\,e^{n\F_n(z,z_1)}\bigr)\Bigr|_{z_1=z}; \\
&\I_3:=\mathcal{I}\bigl( \partial_{z_1}\varphi(z,z_1)\cdot(\partial_{\bar{z}}+\partial_{\bar{z}_1})\F_n(z,z_1)\cdot \,e^{n\F_n(z,z_1)}\bigr)\Bigr|_{z_1=z};\\
&\I_4:=\dfrac{1}{n}\mathcal{I}\bigl((\partial_{\bar{z}}+\partial_{\bar{z}_1})\partial_{z_1}\varphi(z,z_1)\cdot e^{n\F_n(z,z_1)}\bigr)\Bigr|_{z_1=z}.
\end{split}
\end{equation}
Then 
$
\bar{\rho}_{\ep,n}(z)=\dfrac{1}{\pi}(\I_1+\I_2+\I_3+\I_4).
$
\end{pro}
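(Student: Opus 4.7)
The strategy is a direct application of the product rule to the integrand in \eqref{Z_repr_aver}, combined with \eqref{rho_en_form}. Writing $D := \partial_{\bar{z}} + \partial_{\bar{z}_1}$, the identity $\partial_{\bar{z}}((\partial_{z_1}\mathcal{Z})|_{z_1=z}) = (D\partial_{z_1}\mathcal{Z})|_{z_1=z}$ noted just above the proposition reduces the task to computing
\[
\bar{\rho}_{\ep,n}(z) = \frac{1}{\pi n}\,\mathcal{I}\bigl[D\partial_{z_1}\bigl(\varphi(z,z_1)\,e^{n\F_n(z,z_1)}\bigr)\bigr]\Big|_{z_1=z},
\]
which I will expand explicitly. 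The proof then splits into two tasks: justifying that $D\partial_{z_1}$ can be commuted with $\mathcal{I}$, and an algebraic regrouping of the resulting terms into $\I_1,\ldots,\I_4$.

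For the analytic step, Leibniz's rule for differentiation under the integral sign applies because, after the contour shift $t \in L = \mathbb{R} + i\ep_0$, the exponential weight in \eqref{def_averaging} provides Gaussian decay of the form $\exp\{-n(u_1^2 + u_2^2 + s^2 + R^2 + \ep_0^2 v^2 + \cdots)\}$ uniformly in $(z,z_1)$ on compacts. This dominates the polynomial growth of $\varphi$, $\F_n$ and their $z,z_1,\bar{z},\bar{z}_1$-derivatives, since those derivatives are polynomials in $u_1,u_2,s,t,R$ with coefficients built from the bounded matrices $G(z,\cdot)$, $\wt G(z,\cdot)$ and the linear factors $A_n - z$, $A_n - z_1$ that are explicit from \eqref{notat_I_rep_Z}--\eqref{def_L_G}. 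Hence all derivatives in $z, z_1, \bar{z}, \bar{z}_1$ commute with $\mathcal{I}$.

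For the algebraic step, the product rule gives
\[
\partial_{z_1}\bigl(\varphi e^{n\F_n}\bigr) = (\partial_{z_1}\varphi)\,e^{n\F_n} + n\,\varphi\,(\partial_{z_1}\F_n)\,e^{n\F_n},
\]
and a second application of $D$ expands this into five summands. After dividing by $n$, two of them — namely $(D\varphi)(\partial_{z_1}\F_n)e^{n\F_n}$ and $n\varphi(\partial_{z_1}\F_n)(D\F_n)e^{n\F_n}$ — recombine via $D(\varphi e^{n\F_n}) = (D\varphi)e^{n\F_n} + n\varphi(D\F_n)e^{n\F_n}$ into $(\partial_{z_1}\F_n)\cdot D(\varphi e^{n\F_n})$, yielding $\I_1$ after applying $\mathcal{I}$. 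The remaining three summands $\tfrac{1}{n}(D\partial_{z_1}\varphi)e^{n\F_n}$, $(\partial_{z_1}\varphi)(D\F_n)e^{n\F_n}$, and $\varphi(D\partial_{z_1}\F_n)e^{n\F_n}$ match $\I_4$, $\I_3$, $\I_2$ respectively. Setting $z_1 = z$ at the end gives the claimed formula $\bar{\rho}_{\ep,n}=\tfrac{1}{\pi}(\I_1+\I_2+\I_3+\I_4)$. There is no real obstacle — once the integral representation \eqref{Z_repr_aver} is granted, the proposition is essentially bookkeeping, with the only delicate point being the interchange of differentiation and integration, which is standard given the Gaussian decay on the deformed contour.
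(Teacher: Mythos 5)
Your proposal is correct and follows the same route as the paper: the paper's "proof" of this proposition is precisely the observation that \eqref{rho_en_form}, \eqref{Z_repr_aver}, and the identity $\partial_{\bar z}(\,\cdot\,|_{z_1=z})=((\partial_{\bar z}+\partial_{\bar z_1})\,\cdot\,)|_{z_1=z}$ combine, with the product-rule bookkeeping you spell out producing exactly $\I_1,\ldots,\I_4$. One small inaccuracy in your justification of the interchange of $D\partial_{z_1}$ with $\mathcal{I}$: the weight in \eqref{def_averaging} does not contain $s$ at all, and for $t\in L=\mathbb{R}+i\ep_0$ the contributions $\Re(t-i\ep)^2$ and $\Re(R+it+\ep)^2$ cancel in the variable $\Re t$, so there is no Gaussian decay in $s$ or $\Re t$ coming from that weight alone; the actual decay in those variables is the polynomial decay supplied by the factor $e^{-n\L_n(z,s^2-t^2)}=\det\bigl(Y_0(z)+s^2-t^2\bigr)^{-1}$ (together with the $G,\wt G$ factors in $\varphi$), which dominates since the $z,z_1$-derivatives only introduce further polynomially bounded factors. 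The conclusion that differentiation commutes with $\mathcal{I}$ is unaffected, but the reason is polynomial rather than Gaussian decay.
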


\section{Asymptotic behaviour of \boldmath{$\bar{\rho}_{\ep,n}(z)$}}

In this section we perform an asymptotic analysis of $\I_1,\I_2,\I_3,\I_4$ as $n\to\infty$ for some fixed $\ep>0$ and $z\in\mathbb{C}\setminus \partial D$, which will give us the asymptotic behaviour of $\bar{\rho}_{\ep,n}(z)$ as $n\to\infty$.

Below we will write $\L_n(x)$, $G(x)$ and $\wt{G}(x)$ instead of $\L_n(z,x)$, $G(z,x)$ and $\wt{G}(z,x)$ to simplify the notations.

\subsection{Preparations for the saddle point method}\label{sbsc:saddle}

We will use the saddle point method to analyse certain integrals. First, we study the solutions of some equations that will appear further as saddle points. Assume that conditions (C1)--(C4) hold, and consider the following equations:
\begin{align}
&1-\tr G(x^2)=\dfrac{\ep}{x},\label{eq_x_en}\\
&1-\int (\lambda+x^2)^{-1} \,d\nu_z(\lambda)=\dfrac{\ep}{x},\label{eq_x_e}\\
&\tr G(x^2)=1,\label{eq_x_0n}\\
&\int (\lambda+x^2)^{-1} \,d\nu_z(\lambda)=1,\label{eq_x_0}
\end{align}
where $\ep>0$. 

Let us fix a compact set $E_{in}$ satisfying $E_{in}\subset\Int D$. First we study the solutions of the equations above for $z\in E_{in}$. 

\begin{pro}\label{pro:eq_sols_int}
The equations \eqref{eq_x_e} and \eqref{eq_x_0} have exactly one positive root each for $z\in\Int D$. Moreover, there exists $n_0=n_0(E_{in})$ such that for $n\ge n_0$ and $z\in E_{in}$ each of the equations \eqref{eq_x_en} and \eqref{eq_x_0n} has exactly one positive root.
\end{pro}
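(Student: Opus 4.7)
The plan is to reduce each claim to monotonicity plus a one-sided value check. For fixed $z$ introduce
$$
F(x)=\int(\lambda+x^2)^{-1}\,d\nu_z(\lambda),\qquad F_n(x)=\tr G(x^2),
$$
on $x>0$. Writing $u=x^2$ and differentiating gives $F'(x)=-2x\int(\lambda+x^2)^{-2}\,d\nu_z(\lambda)<0$ and $F_n'(x)=-2x\,\tr G^2(x^2)<0$, so both are strictly decreasing with limit $0$ at $+\infty$. Consequently \eqref{eq_x_0} and \eqref{eq_x_0n} have at most one positive root, and a root exists iff $F$ (resp.\ $F_n$) exceeds $1$ somewhere on $(0,\infty)$. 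For \eqref{eq_x_e} and \eqref{eq_x_en} I would look instead at $H(x):=1-F(x)-\ep/x$ (and $H_n$ analogously), whose derivative $-F'(x)+\ep x^{-2}$ is strictly positive, with $H(0^+)=-\infty$ (because $\ep/x\to+\infty$ while $F$ is bounded below by $0$) and $H(+\infty)=1>0$: IVT plus strict monotonicity yields a unique positive root, with no further input.

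For the limit equation \eqref{eq_x_0} I split $\Int D$ as $(\Int D\cap\sigma_{\ep_0})\cup(\Int D\setminus\sigma_{\ep_0})$. On the first part (which contains $\sigma_0$, since $\sigma_0\subset\sigma_{\ep_0}\subset\Int D$ by (C4) and openness of $\sigma_{\ep_0}$), the inequality in (C4) combined with the weak convergence $\nu_{n,z}\Rightarrow\nu_z$ from (C1) and the fact that $\lambda\mapsto(\lambda+\varrho_0^2)^{-1}$ is bounded continuous on $[0,\infty)$ passes to the limit and yields $F(\varrho_0)\ge 1+d_1>1$. On $\Int D\setminus\sigma_{\ep_0}$ we have $z\notin\sigma_0$ and $z\notin\partial D$, so \eqref{def_D} and \eqref{form_dD} together force the strict inequality $F(0^+)=\int\lambda^{-1}\,d\nu_z(\lambda)>1$. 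In either case $F$ exceeds $1$ at some $x>0$, closing the argument for \eqref{eq_x_0}.

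The finite-$n$ case proceeds along the same split of the given compact $E_{in}\subset\Int D$. On $E_{in}\cap\sigma_{\ep_0}$ condition (C4) gives $F_n(\varrho_0)>1+d_1$ directly, for every $\omega\in\Omega_n^{(3)}$. On $E_{in}\setminus\sigma_{\ep_0}$ (a compact subset of $\mathbb{C}\setminus\sigma_{\ep_0}$ on which $\int\lambda^{-1}\,d\nu_z$ is continuous by the smoothness noted after (C4)), the strict inequality $\int\lambda^{-1}\,d\nu_z>1$ combined with compactness supplies a uniform lower bound $1+c$, and (C3) with $\ep=\ep_0$ upgrades this to $F_n(0^+)=\tr Y_0^{-1}(z)>1+c/2$ uniformly in $z$ once $n\ge n_0(E_{in})$ and $\omega\in\Omega^{(2)}_{\ep_0,C(\ep_0),n}$. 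In both cases $F_n$ exceeds $1$ somewhere on $(0,\infty)$, and the monotonicity template finishes \eqref{eq_x_0n}; equation \eqref{eq_x_en} is then handled by the $H_n$-argument already noted.

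The only real obstacle is uniformity of $n_0$ when $E_{in}$ comes close to $\sigma_0$: there $\tr Y_0^{-1}(z)$ can itself blow up and the bound in (C3) degenerates as $\ep\to 0$. The pair $(\varrho_0,\ep_0)$ in (C4) is designed precisely to bypass this by controlling a mildly regularized trace, and splitting at the scale $\ep_0$ routes each $z\in E_{in}$ to whichever of (C3), (C4) is applicable.
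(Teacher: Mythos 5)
Your proof is correct, and it takes a genuinely more elementary route than the one the paper itself indicates. The paper dismisses Propositions~\ref{pro:eq_sols_int}--\ref{pro:sols_bound_ext} with the single remark that they ``follow from conditions (C1)--(C4) and Rouch\'e's theorem,'' which presumably means: pin down the root structure of the limiting equations by monotonicity, then transfer to the finite-$n$ equations via a Rouch\'e-type comparison supported by (C3). Your proposal avoids complex analysis entirely. Uniqueness for all four equations is read off from strict monotonicity of $x\mapsto\int(\lambda+x^2)^{-1}\,d\nu_z(\lambda)$ (resp.\ $\tr G(x^2)$) and, for the $\ep$-modified equations, of $H(x)=1-F(x)-\ep/x$ (the sign of $H'$ is correct since $-F'\ge0$ and $\ep/x^2>0$). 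Existence for \eqref{eq_x_e}, \eqref{eq_x_en} is a pure IVT argument requiring no hypotheses, and you correctly isolate the only substantive step for \eqref{eq_x_0}, \eqref{eq_x_0n} — namely that $F$, $F_n$ exceed $1$ somewhere on $(0,\infty)$ — and supply it via the split of $\Int D$ at scale $\sigma_{\ep_0}$, invoking (C4) together with (C1) inside (passing to the limit against the bounded continuous test function $(\lambda+\varrho_0^2)^{-1}$) and (C3) together with compactness of $E_{in}\setminus\sigma_{\ep_0}$ and \eqref{form_dD} outside. That split handles precisely the delicate region near $\sigma_0$ where $\tr Y_0^{-1}$ may diverge, which you correctly flag. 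Both routes give the result; yours is more self-contained and makes the role of each of (C1), (C3), (C4) more transparent, at the modest cost of not reusing the complex-analytic template the paper appeals to for the companion Propositions~\ref{pro:sols_bound_int}--\ref{pro:sols_bound_ext}.
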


Denote $x_{\ep,n}$, $x_{\ep}$, $x_{0,n}$ and $x_0$ the positive solutions of \eqref{eq_x_en}, \eqref{eq_x_e}, \eqref{eq_x_0n} and \eqref{eq_x_0} correspondingly.

\begin{pro}\label{pro:sols_bound_int}
For a given $E_{in}$, there exist $\kappa_0=\kappa_0(E_{in})>0$, $n_1=n_1(E_{in})$ and $\ep_0=\ep_0(E_{in})$ such that for all $z\in E_{in}$, $n\ge n_1$ and $\ep\le \ep_0$ the following inequalities hold:\smallskip

1. $\kappa_0\le x_{0,n}\le 1$,\qquad
2. $x_{0,n}\le x_{\ep,n}\le x_{0,n}+\dfrac{\ep}{\kappa_0^2}$.

\noindent
Also we have \quad $\lim\limits_{\ep\to 0} x_{\ep,n}= x_{0,n}$,
$\lim\limits_{n\to \infty} x_{0,n}= x_{0}$,
$\lim\limits_{n\to \infty} x_{\ep,n}= x_{\ep}$,
$\lim\limits_{\ep\to 0} x_{\ep}= x_{0}$.

\end{pro}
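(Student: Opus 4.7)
The plan is to exploit monotonicity of the strictly decreasing functions $\phi_n(s) := \tr (Y_0(z)+s)^{-1}$ and $\phi(s) := \int(\lambda+s)^{-1}\,d\nu_z(\lambda)$ on $(0,\infty)$, which characterize all four roots: $x_{0,n}$ and $x_0$ solve $\phi_n(x^2)=1$ and $\phi(x^2)=1$, while $x_{\ep,n}$ and $x_\ep$ solve $x(1-\phi_n(x^2))=\ep$ and $x(1-\phi(x^2))=\ep$. By (C1), $\phi_n(s)\to\phi(s)$ pointwise for $s>0$, and monotonicity together with Dini's theorem upgrades this to uniform convergence on compact subsets of $(0,\infty)$.

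First I would establish the bounds on $x_{0,n}$. The upper bound $x_{0,n}\le 1$ is immediate: $Y_0(z)\ge 0$ gives $\tr (Y_0+x^2)^{-1}\le x^{-2}$, so $1=\phi_n(x_{0,n}^2)\le x_{0,n}^{-2}$. For the lower bound I split $E_{in}$ into two pieces. On $E_{in}\cap\sigma_{\epsilon_0}$, condition~(C4) gives $\phi_n(\varrho_0^2)>1+d_1$, so monotonicity forces $x_{0,n}>\varrho_0$. On $E_{in}\setminus\sigma_{\epsilon_0}$, since $E_{in}\subset\Int D$, the boundary characterization~\eqref{form_dD} together with continuity of $z\mapsto\phi(0^+)=\int\lambda^{-1}\,d\nu_z$ on this compact set gives a uniform bound $\phi(0^+)\ge 1+\delta$. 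Choosing $\kappa_0$ small enough that $\phi(\kappa_0^2)\ge 1+\delta/2$ uniformly (made possible by $\supp\nu_z$ being bounded away from $0$ for $z\notin\sigma_{\epsilon_0}$), the pointwise convergence $\phi_n(\kappa_0^2)\to\phi(\kappa_0^2)$ forces $\phi_n(\kappa_0^2)>1$ for $n\ge n_1$, and hence $x_{0,n}>\kappa_0$.

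For part~2, I would introduce $f(x):=x(1-\phi_n(x^2))$ so that $f(x_{0,n})=0$ and $f(x_{\ep,n})=\ep$. A direct computation gives $f'(x)=(1-\phi_n(x^2))+2x^2\tr G^2(x^2)\ge 0$ on $[x_{0,n},\infty)$, so $x_{\ep,n}\ge x_{0,n}$ as a byproduct. Setting $\ep_0:=\kappa_0/2$ ensures $\tr G(x^2)\ge 1-\ep/x_{\ep,n}\ge 1/2$ on $[x_{0,n},x_{\ep,n}]$; combined with the Cauchy--Schwarz inequality $\tr G^2\ge (\tr G)^2$, this gives $f'(x)\ge 2\kappa_0^2\cdot(1/2)^2=\kappa_0^2/2$, and the mean value theorem produces $x_{\ep,n}-x_{0,n}\le 2\ep/\kappa_0^2$, matching the stated form after a cosmetic rescaling $\kappa_0\mapsto\kappa_0/\sqrt 2$. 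The convergences $x_{\ep,n}\to x_{0,n}$ and $x_\ep\to x_0$ as $\ep\to 0$ are then immediate; the convergences $x_{0,n}\to x_0$ and $x_{\ep,n}\to x_\ep$ as $n\to\infty$ follow from uniform convergence of $\phi_n\to\phi$ on $[\kappa_0,1]$ combined with strict monotonicity of $\phi$ through level $1$ at the unique point $x_0$.

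The main obstacle is the lower bound on $x_{0,n}$ for $z\in E_{in}\setminus\sigma_{\epsilon_0}$: one must simultaneously extract the strict inequality $\phi(0^+)\ge 1+\delta$ uniformly from the boundary characterization, control the right-continuity of $\phi$ at $s=0$ uniformly in $z$ (which requires $\supp\nu_z$ staying away from $0$), and juggle the three small parameters $\kappa_0$, $\epsilon_0$, $n^{-1}$ in the correct order so that no threshold accidentally depends on a later one.
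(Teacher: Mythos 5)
The paper does not actually give a proof of this proposition: immediately after Proposition~\ref{pro:sols_bound_ext} it states only that Propositions~\ref{pro:eq_sols_int}--\ref{pro:sols_bound_ext} ``follow from conditions (C1)--(C4) and Rouch\'e's theorem,'' so there is no line-by-line comparison to make. Your route via the monotone functions $\phi_n(s)=\tr(Y_0+s)^{-1}$ and $\phi(s)$ replaces the complex-analytic Rouch\'e argument with real-variable monotonicity, and for part 2 this is quite clean: the identity $f'(x)=(1-\phi_n(x^2))+2x^2\tr G^2(x^2)$ together with Cauchy--Schwarz ($\tr G^2\ge(\tr G)^2$) and the mean value theorem correctly produces $x_{\ep,n}-x_{0,n}\le 2\ep/\kappa_0^2$, which becomes the stated bound after rescaling $\kappa_0$. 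The upper bound $x_{0,n}\le 1$ via $\tr(Y_0+x^2)^{-1}\le x^{-2}$, the lower bound on $E_{in}\cap\sigma_{\epsilon_0}$ via (C4), and the $\ep\to 0$ limits are all fine.

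There are two points worth flagging. First, a small technical one: ``Dini's theorem'' as usually stated requires the convergence $\phi_n(s)\to\phi(s)$ to be monotone \emph{in $n$}, which is not available here; what you actually want is the P\'olya-type observation that pointwise convergence of monotone-in-$s$ functions to a continuous limit is automatically uniform on compact $s$-intervals. Second, and more substantively, the lower bound on $x_{0,n}$ for $z\in E_{in}\setminus\sigma_{\epsilon_0}$ has a genuine gap in uniformity, which you yourself signal in the final paragraph but do not resolve. Condition (C1) gives only pointwise (indeed a.e.-in-$z$) convergence $\phi_n(\kappa_0^2)\to\phi(\kappa_0^2)$, so the threshold $n_1$ you extract from ``pointwise convergence forces $\phi_n(\kappa_0^2)>1$'' a priori depends on $z$, whereas the proposition asserts $n_1=n_1(E_{in})$. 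Condition (C3) gives uniformity in $z$, but only for $\tr Y_0^{-1}$, not for $\tr(Y_0+\kappa_0^2)^{-1}$, and passing from one to the other requires controlling $\int_{\lambda\le\kappa_0}\lambda^{-1}\,d\nu_{n,z}(\lambda)$, which is not supplied by (C1)--(C4) in an obvious way. To make the argument rigorous you would need either to assume some joint continuity/uniformity for the family $\{\nu_{n,z}\}$ on $E_{in}\setminus\sigma_{\epsilon_0}$, or to restrict to the $\kappa_0$-level via a strengthened form of (C3); the paper itself does not address this and its one-line appeal to Rouch\'e is subject to the same concern.
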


Now let us fix a compact set $E_{out}$ satisfying $E_{out}\subset\mathbb{C}\setminus \ol{D}$. We study the solutions of the equations above for $z\in E_{out}$. 

\begin{pro}\label{pro:eq_sols_ext}
The equations  \eqref{eq_x_en} and \eqref{eq_x_e} have exactly one positive root each, while \eqref{eq_x_0} has no nonnegative roots for $z\in\mathbb{C}\setminus\ol{D}$. Moreover, there exists $n_0=n_0(E_{out})$ such that for $n\ge n_0$ and $z\in E_{out}$ the equation \eqref{eq_x_0n} has no nonnegative roots.
\end{pro}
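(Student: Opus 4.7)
The plan is to reduce each of the four claims to a monotonicity statement for a scalar function of $x\ge 0$, using the characterization of $\mathbb{C}\setminus\overline{D}$ coming from \eqref{def_D}--\eqref{form_dD} together with condition (C3) to transfer the argument from the limiting measure $\nu_z$ to the empirical spectrum of $Y_0(z)$.

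First I would record the uniform preliminaries on $E_{out}$. Since $\sigma_{\ep_0}\subset D$ (noted after (C4)) and $\dist(E_{out},\overline{D})>0$, the set $E_{out}$ lies in $\mathbb{C}\setminus\sigma_{\ep_0}$, so by the discussion around \eqref{form_dD} the map $z\mapsto\int\lambda^{-1}\,d\nu_z(\lambda)$ is continuous on $E_{out}$. It is everywhere strictly less than $1$ there (by the very definition of $\mathbb{C}\setminus\overline{D}$), and compactness provides a uniform gap
$$
\int\lambda^{-1}\,d\nu_z(\lambda)\le 1-\delta,\qquad z\in E_{out},
$$
for some $\delta=\delta(E_{out})>0$. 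I would also fix $\ep_1>0$ with $E_{out}\cap\sigma_{\ep_1}=\emptyset$ so that (C3) becomes applicable on $E_{out}$.

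For the limiting equations I would set $f(x):=\int(\lambda+x^2)^{-1}\,d\nu_z(\lambda)$. Since $z\notin\sigma_0$, one has $0\notin\supp\nu_z$, so $f$ extends continuously to $x=0$ with $f(0)<1$ and $f$ is strictly decreasing in $x\ge 0$; hence $f(x)<1$ for every $x\ge 0$, proving that \eqref{eq_x_0} has no nonnegative solution. Equation \eqref{eq_x_e} can be rewritten as $h(x)=\ep$ with $h(x):=x(1-f(x))$; then $h(0)=0$, $h(x)\to+\infty$ as $x\to+\infty$, and
$$
h'(x)=(1-f(x))+2x^2\int(\lambda+x^2)^{-2}\,d\nu_z(\lambda)>0,
$$
the first summand being strictly positive by the previous step. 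So $h$ is a strictly increasing bijection $[0,\infty)\to[0,\infty)$ and \eqref{eq_x_e} has exactly one positive root.

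The finite-$n$ claims follow from the same template applied to the empirical resolvent. Applying (C3) with the $\ep_1$ above, on the high-probability event $\Omega^{(2)}_{\ep_1,C(\ep_1),n}$ there exists $n_0=n_0(E_{out})$ such that for $n\ge n_0$,
$$
\sup_{z\in E_{out}}\bigl|\tr Y_0^{-1}(z)-\textstyle\int\lambda^{-1}\,d\nu_z(\lambda)\bigr|\le C(\ep_1)\,n^{-d_0}<\delta/2,
$$
so in particular $Y_0(z)$ is invertible and $\tr Y_0^{-1}(z)\le 1-\delta/2$ uniformly in $z\in E_{out}$. Since $x\mapsto\tr G(x^2)=\tr(Y_0(z)+x^2)^{-1}$ is strictly decreasing with value $\tr Y_0^{-1}(z)$ at $x=0$, we get $\tr G(x^2)<1$ for every $x\ge 0$, ruling out nonnegative roots of \eqref{eq_x_0n}. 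For \eqref{eq_x_en} the function $h_n(x):=x(1-\tr G(x^2))$ satisfies $h_n(0)=0$, $h_n(x)\to+\infty$, and $h_n'(x)=(1-\tr G(x^2))+2x^2\,\tr G(x^2)^2>0$, so it has a unique positive root. The only mildly delicate point is the transfer of $\int\lambda^{-1}\,d\nu_z<1$ to $\tr Y_0^{-1}(z)<1$ uniformly on $E_{out}$: this relies on both the continuity of the limit on $\mathbb{C}\setminus\sigma_{\ep_0}$ and the uniform rate in (C3) on $E_{out}\subset\mathbb{C}\setminus\sigma_{\ep_1}$. Everything else is a routine one-variable monotonicity computation.
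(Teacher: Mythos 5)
Your argument is correct, and it fills in details that the paper omits (the paper only says the result ``follows from (C1)--(C4) and Rouch\'e's theorem''). Your route is genuinely different: instead of invoking Rouch\'e to transfer root counts from the limiting equations to the finite-$n$ ones, you prove root counts for all four equations by a direct monotonicity argument. Specifically, for the limiting equations you use that $f(x)=\int(\lambda+x^2)^{-1}\,d\nu_z$ is decreasing with $f(0)<1$ for $z\in\mathbb{C}\setminus\ol{D}$, and that $h(x)=x(1-f(x))$ is strictly increasing and onto $[0,\infty)$; for the finite-$n$ equations you transfer the strict inequality $\int\lambda^{-1}\,d\nu_z<1$ to $\tr Y_0^{-1}(z)<1$ uniformly on $E_{out}$ via (C3), and then run the identical monotonicity computation for $\tr G(x^2)$ and $h_n(x)=x(1-\tr G(x^2))$. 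Both approaches deliver the same conclusion; yours is more elementary and self-contained, while the Rouch\'e route would prove the limiting statements once and then automatically propagate the root count to $n\ge n_0$ from the uniform convergence in (C3). Two small points worth making explicit in a final write-up: the derivative $h_n'(x)=(1-\tr G(x^2))+2x^2\tr G^2(x^2)$ uses $\tr G^2(x^2)$ (the trace of the square, not the square of the trace), and the uniqueness of the positive root of \eqref{eq_x_en} is established here only for $z\in E_{out}$ and $n\ge n_0$ on the event from (C3) (which matches how the proposition is actually used); the way the proposition is phrased leaves that quantifier implicit for \eqref{eq_x_en}, just as Proposition~\ref{pro:eq_sols_int} makes it explicit in the interior case.
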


As before, we denote $x_{\ep,n}$, $x_{\ep}$ the positive solutions of \eqref{eq_x_en}, \eqref{eq_x_e} correspondingly.

\begin{pro}\label{pro:sols_bound_ext}
For a given $E_{out}$, there exist $K_0=K_0(E_{out})>0, \kappa_0=\kappa_0(E_{out})>0$ and $n_1=n_1(E_{out})$ such that for all $z\in E_{out}$ and $n\ge n_1$ the following inequality holds:
$
\ep(1+\kappa_0)\le x_{\ep,n}\le \ep(1+K_0).
$
Also, we have $\lim\limits_{n\to\infty} x_{\ep,n}=x_\ep$.
\end{pro}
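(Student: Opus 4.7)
My plan is to reduce both \eqref{eq_x_en} and \eqref{eq_x_e} to the analysis of the strictly increasing function $f_n(x):=1-\tr(Y_0(z)+x^2)^{-1}$ and its limit $f(x):=1-\int(\lambda+x^2)^{-1}\,d\nu_z(\lambda)$, rewriting the equations as $f_n(x)=\ep/x$ and $f(x)=\ep/x$. The first step is to establish the following uniform bounds on $E_{out}$. Since $E_{out}$ is compact and satisfies $\dist(E_{out},\overline D)>0$, in particular $E_{out}\cap\sigma_{\epsilon_0}=\emptyset$ for some $\epsilon_0>0$, and the smooth function $z\mapsto\int\lambda^{-1}\,d\nu_z(\lambda)$ (smoothness noted after (C4)) takes values in some interval $[\alpha,\beta]\subset(0,1)$ on $E_{out}$; by (C3) the same holds, up to an additive $O(n^{-d_0})$, for the discrete quantity $\tr Y_0(z)^{-1}=f_n(0)$ once $n\ge n_1$. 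Compactness also yields a uniform gap $\dist(0,\supp\nu_z)\ge\delta>0$ for $z\in E_{out}$, which together with (C2) keeps the spectrum of $Y_0(z)$ in a fixed compact subinterval of $(0,\infty)$.

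Next I would derive the two-sided bound on $x_{\ep,n}$ purely from monotonicity. The equality $f_n(x_{\ep,n})=\ep/x_{\ep,n}$ together with $f_n<1$ immediately gives $x_{\ep,n}>\ep$, while $f_n(x_{\ep,n})\ge f_n(0)\ge 1-\beta-o(1)$ yields the upper bound $x_{\ep,n}\le \ep/(1-\beta-o(1))\le\ep(1+K_0)$ with $K_0=K_0(\beta)$. For the matching \emph{lower} bound $x_{\ep,n}\ge\ep(1+\kappa_0)$, I would use the already-derived $x_{\ep,n}=O(\ep)$ together with the spectral gap $\delta$ to expand $f_n$ at $0$: a second-order Taylor estimate gives
\begin{equation*}
f_n(x)\le f_n(0)+C\,x^2\qquad\text{for }0\le x\le 1,
\end{equation*}
with $C=C(\delta,M)$. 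Evaluated at $x=x_{\ep,n}$ this produces $f_n(x_{\ep,n})\le 1-\alpha+C'\ep^2\le 1-\alpha/2$ for $\ep\le\ep_0(E_{out})$, hence $\ep/x_{\ep,n}\le 1-\alpha/2$ and $x_{\ep,n}\ge\ep/(1-\alpha/2)=\ep(1+\kappa_0)$ with $\kappa_0=\tfrac{\alpha/2}{1-\alpha/2}$. Note that smallness of $\ep$ is used here implicitly, as the lower bound would trivially fail as $\ep\to\infty$ because $f_n(x_{\ep,n})\to 1$ would force $x_{\ep,n}/\ep\to 1$; this parallels the $\ep_0$ restriction in Proposition \ref{pro:sols_bound_int}.

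Finally, the convergence $x_{\ep,n}\to x_\ep$ is a standard compactness/monotonicity argument. Both roots lie in the fixed interval $[\ep,\ep(1+K_0)]$, on which (C3) delivers uniform convergence $f_n\to f$ (using also the spectral gap to pass from $\tr(Y_0+x^2)^{-1}$ to $\int(\lambda+x^2)^{-1}\,d\nu_z(\lambda)$ uniformly in $x$); strict monotonicity of $f$, whose derivative is bounded below on the interval by the gap $\delta$, then forces the unique roots of $f_n(x)=\ep/x$ and $f(x)=\ep/x$ to converge. The main obstacle throughout is genuinely \emph{uniform} control of all the constants ($\alpha,\beta,\delta,C$) over $z\in E_{out}$; this reduces to continuity of $z\mapsto\int\lambda^{-k}\,d\nu_z(\lambda)$ for $k=1,2$ on $\mathbb{C}\setminus\sigma_{\epsilon_0}$ and to uniform separation of $\supp\nu_z$ from $0$, both of which follow from compactness of $E_{out}$, its strict separation from $\overline D$, and conditions (C2)--(C4).
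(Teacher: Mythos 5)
Your plan takes a direct monotonicity route rather than the Rouch\'e-type argument the paper gestures at (the paper itself gives no detailed proof), and most of it is sound: the upper bound, and your observation that the lower bound tacitly requires $\ep\le\ep_0$ (mirroring the explicit $\ep_0$ in Proposition~\ref{pro:sols_bound_int}), are both correct. However, the step ``$\dist(0,\supp\nu_z)\ge\delta>0$ \ldots together with (C2) keeps the spectrum of $Y_0(z)$ in a fixed compact subinterval of $(0,\infty)$'' is a genuine gap, and it underlies both your lower-bound estimate $f_n(x)\le f_n(0)+Cx^2$ with $n$-uniform $C$ and your assertion that (C3) gives uniform convergence $f_n\to f$ on the interval. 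Weak convergence $\nu_{n,z}\to\nu_z$ does not exclude outlier eigenvalues of the finite-$n$ matrix $Y_0(z)$ arbitrarily close to $0$, and (C2) controls only the normalized Frobenius norm $n^{-1}\sum|a_{ij}|^2$, which says nothing about $\sigma_{\min}(A_n-z)$; even (C3) only forces $\lambda_{\min}(Y_0(z))\gtrsim 1/n$, not a uniform gap. So the second-order Taylor coefficient of $f_n$ at $0$ cannot be bounded uniformly in $n$ from (C1)--(C4).

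The conclusion nevertheless survives with a different lower-bound step. From the already-obtained $x_{\ep,n}\le\ep(1+K_0)$ and monotonicity, $\tr G(x_{\ep,n}^2)\ge\tr G(\ep^2(1+K_0)^2)$; (C2) gives $n^{-1}\Tr Y_0(z)\le 2(M+\sup_{E_{out}}|z|^2)=:M'$, so by Markov's inequality at least $n/2$ eigenvalues of $Y_0(z)$ lie below $2M'$, whence $\tr(Y_0+y)^{-1}\ge\tfrac12(2M'+y)^{-1}$ for all $y\ge 0$; putting $y=\ep^2(1+K_0)^2$ yields $\ep/x_{\ep,n}=1-\tr G(x_{\ep,n}^2)\le 1-c$ with $c=c(E_{out})>0$ for $\ep\le\ep_0$, which is $x_{\ep,n}\ge\ep(1+\kappa_0)$. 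For the limit $x_{\ep,n}\to x_\ep$, the spectral gap is again unnecessary: along a convergent subsequence $x_{\ep,n_k}\to x^*\ge\ep>0$, estimate $|\tr G(x_{\ep,n_k}^2)-\tr G((x^*)^2)|\le\ep^{-2}|x_{\ep,n_k}^2-(x^*)^2|\,\tr Y_0^{-1}\to 0$ using the bound on $\tr Y_0^{-1}$ from (C3), invoke (C1) at the fixed argument $(x^*)^2>0$, and let uniqueness of the root $x_\ep$ identify $x^*$. I recommend replacing the spectral-gap reasoning with arguments along these lines.
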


One can easily show that Propositions~\ref{pro:eq_sols_int}--\ref{pro:sols_bound_ext} follow from conditions (C1)--(C4) and Rouch\'e's theorem.

Now consider a set $E_{\partial D}=\{z\in\mathbb{C}\mid \dist(z,\partial D)\le d\}$ for some small $d$. We have the following result:

\begin{pro}\label{pro:sols_bound_boundary}
For any $z\in E_{\partial D}$ there exists exactly one positive solution $x_{\ep,n}$ of \eqref{eq_x_en} and exactly one positive solution $x_{\ep}$ of \eqref{eq_x_e}. Moreover, one can find $C=C(E_{\partial D})>0$ and $c=c(E_{\partial D})>0$ such that
\begin{equation*}
\begin{split}
&c\ep^{1/3}\le x_{\ep,n}\le C, \quad {when\ } \tr G(0)\ge 1;\quad
(1+c)\ep\le x_{\ep,n}\le C\ep^{1/3},\quad {when\ } \tr G(0)<1;\\
& c\ep^{1/3}\le x_{\ep}\le C, \quad {when\ } z\in D\cap E_{\partial D};\quad
(1+c)\ep\le x_{\ep}\le C\ep^{1/3},\quad {when\ } z\in E_{\partial D}\setminus D.
\end{split}
\end{equation*}
\end{pro}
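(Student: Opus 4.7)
The plan is to establish existence and uniqueness by a monotonicity argument and then extract the two-sided bounds from a Taylor expansion at $x=0$. First I would choose $d$ in the definition of $E_{\partial D}$ small enough (say $d<\epsilon_0/2$) so that by (C4) the set $E_{\partial D}$ is uniformly separated from $\sigma_0$; combined with the good event from the Borel--Cantelli remark and conditions (C2)--(C3), this guarantees that the eigenvalues of $Y_0(z)$ are uniformly bounded away from $0$ and $\infty$ for all $z\in E_{\partial D}$ and all large $n$. Consequently $x\mapsto \tr G(x^2)$ (and analogously $x\mapsto\int(\lambda+x^2)^{-1}\,d\nu_z(\lambda)$) is real-analytic near $x=0$ with uniform Taylor expansion
\[
\tr G(x^2)=\tr G(0)-a\,x^2+O(x^4),\qquad a:=\tr G(0)^2>0,
\]
both $a$ and the remainder being uniformly controlled in $z$ and $n$. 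Moreover, by (C3) and continuity of $\int\lambda^{-1}\,d\nu_z(\lambda)$ on $\mathbb{C}\setminus\sigma_0$, the quantity $\tr G(0)$ lies close to $1$ (the value defining $\partial D$) uniformly for $z\in E_{\partial D}$ at large $n$.

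For existence and uniqueness I would recast \eqref{eq_x_en} as $\phi_n(x)=1$ with $\phi_n(x):=\tr G(x^2)+\ep/x$. Since $\phi_n'(x)=-2x\,\tr G(x^2)^2-\ep/x^2<0$ on $(0,\infty)$ with $\phi_n(0^+)=+\infty$ and $\phi_n(+\infty)=0$, strict monotonicity together with the intermediate value theorem yields a unique positive root $x_{\ep,n}$; the same argument applies to $x_\ep$. For the quantitative bounds set $\delta_0:=1-\tr G(0)$, so that \eqref{eq_x_en} becomes
\[
\ep=\delta_0\,x+a\,x^3+O(x^5).
\]
In the case $\tr G(0)\ge 1$ (so $\delta_0\le 0$) one has $a\,x^3=\ep+|\delta_0|\,x-O(x^5)\ge \ep-O(x^5)$, and since $x_{\ep,n}$ is necessarily small for $\ep$ small and $z$ close to $\partial D$, the remainder is negligible and $x_{\ep,n}\ge c\,\ep^{1/3}$; the matching upper bound $x_{\ep,n}\le C$ follows from the trivial inequality $\tr G(x^2)\le 1/x^2$, which combined with $\phi_n(x_{\ep,n})=1$ forces $x_{\ep,n}^2\le 1+\ep\,x_{\ep,n}$. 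In the case $\tr G(0)<1$ (so $\delta_0>0$, and small because $z\in E_{\partial D}$), both terms on the right are nonnegative, so $a\,x^3\le\ep$ gives $x_{\ep,n}\le C\,\ep^{1/3}$; the lower bound comes from $\ep/x_{\ep,n}=\delta_0+O(x_{\ep,n}^2)=\delta_0+O(\ep^{2/3})$, which is strictly smaller than $1/(1+c)$ for some $c=c(d)>0$ once $d$ is sufficiently small (forcing $\delta_0$ small), yielding $x_{\ep,n}\ge(1+c)\ep$.

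The argument for $x_\ep$ is verbatim with $\tr G$ replaced by $\int(\lambda+x^2)^{-1}\,d\nu_z(\lambda)$; the dichotomy $\tr G(0)\ge 1$ vs $\tr G(0)<1$ becomes $z\in D\cap E_{\partial D}$ vs $z\in E_{\partial D}\setminus D$ by the very definition of $D$ in \eqref{def_D}, which matches the split stated in the proposition. The main obstacle I anticipate is bookkeeping: making the constants $a,c,C$ and the error estimates uniform in $z\in E_{\partial D}$ and in $n$, which reduces to invoking (C2), (C3) to obtain uniform bounds and uniform convergence of $\tr Y_0^{-k}(z)$ toward $\int\lambda^{-k}\,d\nu_z(\lambda)$ for $k=1,2,3$ on the compact set $E_{\partial D}$. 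Since $E_{\partial D}$ stays at positive distance from $\sigma_0$ by the choice of $d<\epsilon_0/2$, these uniform statements follow from the hypotheses and no new conceptual input is needed.
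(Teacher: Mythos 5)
Your argument is essentially the paper's: monotonicity of $x\mapsto\tr G(x^2)+\ep/x$ for existence and uniqueness, and the behavior of $\tr G(0)-\tr G(x^2)$ for small $x$ to extract the $\ep^{1/3}$ versus $\ep$ scales. The one point that deserves care is your use of a Taylor expansion at $x=0$ in the case $\tr G(0)\ge 1$: there the claimed upper bound is only $x_{\ep,n}\le C$, so $x_{\ep,n}$ need not be small, and the expansion (and the "remainder is negligible" step) is not automatically justified. The paper sidesteps this by using the mean-value theorem directly, $\tr G(0)-\tr G(x_{\ep,n}^2)=\tr G^2(\xi)\,x_{\ep,n}^2\le Cx_{\ep,n}^2$, which holds for all $x_{\ep,n}\le C$ without a smallness hypothesis and immediately gives $\ep\le Cx_{\ep,n}^3$; you can equivalently patch your version by observing that if $x_{\ep,n}$ is not small then the bound $c\ep^{1/3}$ is trivial. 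A second, smaller difference: for the lower bound when $\tr G(0)<1$ you rely on $\delta_0=1-\tr G(0)$ being small near $\partial D$, whereas the paper instead uses the uniform bound $\tr G(C^2)\ge\kappa>0$ (which follows from (C2) and does not require $\delta_0$ small); both close the argument, but the paper's route avoids tying $c$ to the choice of $d$. Finally, a notational nit: your $\tr G(x^2)^2$ and $a=\tr G(0)^2$ should read $\tr G^2(x^2)$ and $a=\tr G^2(0)$ (trace of the square), matching the paper's $\tr G^2(\xi)$.
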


\begin{proof}
Suppose that $\tr G(0)\ge 1$. The upper bound on $x_{\ep,n}$ is obvious. Next, we have
\begin{equation*}
\begin{split}
\dfrac{\ep}{x_{\ep,n}}=1-\tr G(x_{\ep,n}^2)\le \tr G(0)-\tr G(x_{\ep,n}^2)=\tr G^2(\xi)\cdot x_{\ep,n}^2\le C x_{\ep,n}^2
\end{split}
\end{equation*}
for some $\xi\in(0,x_{\ep,n}^2)$, which gives us the lower bound.

Now suppose that $\tr G(0)< 1$. For the upper bound, observe that
\begin{equation*}
\begin{split}
\dfrac{\ep}{x_{\ep,n}}=1-\tr G(x_{\ep,n}^2)> \tr G(0)-\tr G(x_{\ep,n}^2)=\tr G^2(\xi)\cdot x_{\ep,n}^2> c x_{\ep,n}^2.
\end{split}
\end{equation*}
For the lower bound, we have
$
\dfrac{\ep}{x_{\ep,n}}=1-\tr G(x_{\ep,n}^2)< 1-\tr G(C^2)<1-\kappa
$
for some $\kappa>0$, since $x_{\ep,n}<C\ep^{1/3}<C$. 

One can similarly obtain the bounds on $x_{\ep}$ using $\int (\lambda+x)^{-1}\,d\nu_z(\lambda)$ instead of $\tr G(x)$.
\end{proof}

\subsection{Integrals of the form $\mathcal{I}\bigl( g \cdot e^{n\F_n(z,z)}\bigr)$}\label{sbsc:<ge^nF>}

In order to derive the asymptotic behaviour of $\I_k$, 
the integrals of more general form are studied. Set
\begin{equation}\label{def_I_g_n}
\I_{g_n}=\mathcal{I}\bigl( g_n(u_1^2+u_2^2,s^2-t^2) e^{n\F_n(z,z)}\bigr),
\end{equation}
where a sequence of complex-valued functions $g_n(x,y)$ satisfies the following conditions:
\begin{enumerate}
\item $g_n(x,y)$ are analytic in some neighbourhood of $(x_{\ep,n}^2,x_{\ep,n}^2)$;
\item $g_n(x,y)$ are bounded uniformly in~$n$ in some neighbourhood of $(x_{\ep,n}^2,x_{\ep,n}^2)$;
\item $\mathcal{I}\bigl( |g_n(u_1^2+u_2^2,s^2-t^2) e^{N_1 \F_n(z,z)}|\bigr)\le C$ for some fixed $N_1$ and $C>0$.
\end{enumerate}
Our goal is to prove that 
\begin{equation}\label{form_<ge^nF>}
\I_{g_n}=\dfrac{g_n(u_1^2+u_2^2,s^2-t^2)}{\varphi(u_1^2+u_2^2,s^2-t^2,z,z)}\Bigg|_{\substack{u_1=-x_{\ep,n},\ u_2=0,\\ t=ix_{\ep,n},\ s=0}}+O(n^{-\alpha}),
\end{equation}
where $x_{\ep,n}$ is the positive root of \eqref{eq_x_en} and $\varphi$ is defined in \eqref{notat_I_rep_Z}.

In this subsection we fix some $\ep >0$ and some $z\in\partial D$. In case $z\in\Int D$ we set $E_{in}:=\{z\}$, while in case $z\in\mathbb{C}\setminus \ol{D}$ we set $E_{out}:=\{z\}$. Then we can use the results of Subsection~\ref{sbsc:saddle}. In particular, a bound of the form 
\begin{equation}\label{bound_x_en}
c_{\ep}\le x_{\ep,n}\le C_{\ep}
\end{equation}
holds, where $c_\ep, C_\ep>0$ are independent of $n$, but may depend on $\ep$ which is now fixed.  Henceforth in this section, the multiplicative constant in expressions of the form $O(f(n))$ and constants denoted by $c,C$ may depend on $\ep$.

Recall the definition \eqref{def_averaging}. Make a change of variable $r:=R+it+\ep$ and denote 
$$
\mathcal{R}(t)=\{r\colon r=it+\ep+\rho, \rho\ge 0\}.
$$ 
Notice that the integrand is even with respect to $u_2$, thus we can integrate with respect to $u_2$ over $[0,+\infty)$ and write a multiplier 2 before the integral. Also we have
$$
L_n(u_1^2+u_2^2)-(u_1+\ep)^2-u_2^2=L_n(u_1^2+u_2^2)-\Bigl(\sqrt{u_1^2+u_2^2}-\ep\Bigl)^2-2\ep\,\Bigl(u_1+\sqrt{u_1^2+u_2^2}\,\Bigr).
$$
Since $u_1+\sqrt{u_1^2+u_2^2}\ge 0$, we can make a change of variables $(u_1,u_2)\to (u,w)$, $u=\sqrt{u_1^2+u_2^2}\in[0,+\infty)$, $w=\sqrt{u_1+u}\in[0,\sqrt{2u}]$. One can see that $u_1=w^2-u$, $u_2=\sqrt{u^2-u_1^2}$ and the Jacobian of this change is equal to $J=-\dfrac{2u}{\sqrt{2u-w^2}}$.

Set $\mathbf{x}=(u,t,s,r,v,w)$ and $\mathcal{V}=[0,+\infty)\times L_t\times \mathbb{R}\times \mathcal{R}(t)\times \mathbb{R}\times[0,\sqrt{2u}]$. After all of the changes above, we obtain 
\begin{equation*}
\I_{g_n}=\frac{8n^3}{\pi^3}\int_{\mathcal{V}} \Phi_n(\mathbf{x})\,e^{nF_n(\mathbf{x})}\,d\mathbf{x},
\end{equation*}
where
\begin{equation*}
\begin{split}
&F_{n,1}(u)=\L_n(u^2)-(u-\ep)^2,\quad F_{n,2}(t,s,r)=-\Bigl(\L_n(s^2-t^2)+(t-i\ep)^2+r^2\Bigr),\\
&F_n(\mathbf{x})=F_{n,1}(u)+F_{n,2}(t,s,r)-\ep v^2-2\ep w^2,\quad 
\Phi_n(\mathbf{x})=\dfrac{r-it-\ep}{\sqrt{v^2+4r-4it-4\ep}}\cdot\dfrac{u}{\sqrt{2u-w^2}}\cdot g_n(u^2,s^2-t^2).
\end{split}
\end{equation*}

\noindent
Next we analyse $F_{n}(\mathbf{x})$. Observe that  
$
\lim\limits_{u\to +\infty} F_{n,1}(u)=-\infty
$
and $
\partial_{u} F_{n,1}(u)=2u\cdot\tr G(u^2)-2u+2\ep
$,
which means that $u=x_{\ep,n}$ is the maximum point of $F_{n,1}(u)$, where $x_{\ep,n}$ is the positive solution of \eqref{eq_x_en} defined in Subsection~\ref{sbsc:saddle}.

We can expand $F_{n,1}(u)$ for $u$ lying in some neighbourhood of $x_{\ep,n}$:
\begin{equation}\label{F_1_expand}
F_{n,1}(u)=\L_n(x_{\ep,n}^2)-(x_{\ep,n}-\ep)^2-\kappa_1(u-x_{\ep,n})^2+O\bigl((u-x_{\ep,n})^3\bigr),
\end{equation}
where 
\begin{equation}\label{def_k_1}
\kappa_1=\dfrac{\ep}{x_{\ep,n}}+\tr G^2(x_{\ep,n}^2)\cdot 2x_{\ep,n}^2.
\end{equation}
Bounds~\eqref{bound_x_en} imply that $\kappa_1\ge c>0$ uniformly in $n$, thus 
\begin{equation}\label{F_1_ineq_rstr}
\begin{split}
&F_{n,1}(u)\le F_{n,1}(x_{\ep,n})-c n^{-1}\log^2 n   \text{\quad when }|u-x_{\ep,n}|>n^{-1/2}\log n,
\end{split}
\end{equation}
for large $n$ and small $\ep$, where $c>0$ does not depend on $n$.
\bigskip

Dealing with $F_{n,2}(t,s,r)$, we start with the contour shift for $t$ and $r$.
Consider the function 
$$
h_n(t)=F_{n,2}(t,0,0)=-\L_n(-t^2)-(t-i\ep)^2.
$$ 
It is easy to see that $h_n(t)$ is analytic in the upper halfplane and 
 $t=ix_{\ep,n}$ is a stationary point of $h_n(t)$.
We can move the integration with respect to $t$ to a contour 
$$
L_t=L_-\cup L_0\cup L_+\subset\{z\colon \Im z\ge |\Re z|+\ep\}
$$ 
symmetric with respect to the imaginary axis, such that $L_0=[ix_{\ep,n}-\delta;ix_{\ep,n}+\delta]$, $\Re h_n(t)$ decreases on $[ix_{\ep,n},ix_{\ep,n}+\delta]$ and $\Re h_n(t)\le h_n(ix_{\ep,n})-\sigma$ for $t\in L_{\pm}$. One can check this using level lines of $\Re h_n(t)$ similarly to \cite[Lemma 4.1]{SS:sng}. Moreover, one can choose $\delta,\sigma>0$ independent of $n$ since for $\wt h_n(t)=\Re h_n(ix_{\ep,n}+t)$ we have $\wt h'_n(0)=0$, $\wt h''_n(0)=-2\kappa_1$, where $\kappa_1\ge c>0$ is defined in \eqref{def_k_1}, and $\wt h'''_n(t)$ is bounded uniformly in $n$ for small $t$. We should also ensure that $L_0$ lies inside $\{z\colon \Im z\ge |\Re z|+\ep\}$, which imposes the following condition: $\delta\le x_{\ep,n}-\ep$. Proposition~\ref{pro:sols_bound_int} and Proposition~\ref{pro:sols_bound_ext} yield that $x_{\ep,n}-\ep\ge c_\ep>0$, hence we can choose $\delta>0$ independent of $n$.

Also we deform the $r$-contour for each $t\in L_t$ as follows: 
$
\wt{\mathcal{R}}(t)=\mathcal{R}_1(t)\cup\mathcal{R}_2(t)$, where \linebreak
$\mathcal{R}_1(t)=[it+\ep,-\delta]$,
$\mathcal{R}_2(t)=\{r\colon r=-\delta+\rho, \rho\ge 0\}$.
Such a contour shift is allowed since for each fixed $t$ we have $|\Im r|\le C$ and thus $-\Re r^2\le C-|r|^2$ for big~$r$.

Next we prove that $(ix_{\ep,n},0,0)$ is a  maximum point of $\Re F_{n,2}(t,s,r)$ when $t\in L_t$, $s\in\mathbb{R}$,  $r\in\wt{\mathcal{R}}(t)$ that is `good enough' for the saddle point method. More precisely:
\begin{pro}
$(ix_{\ep,n},0,0)$ is the maximum point of $\Re F_{n,2}(t,s,r)$ when $s\in\mathbb{R}$, $t\in L_t$ and $r\in\wt{\mathcal{R}}(t)$. Moreover,
\begin{equation}\label{F_2_ineq_rstr}
\begin{split}
\Re F_{n,2}(t,s,r)\le  F_{n,2}(ix_{\ep,n},0,0)-cn^{-1}\log^2 n \text{\textnormal{\quad when}\quad} \max\{|t-ix_{\ep,n}|,|s|,|r|\}>n^{-1/2}\log n.
\end{split}
\end{equation}
\end{pro}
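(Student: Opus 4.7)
My plan is to decompose $F_{n,2}(t,s,r) = F_{n,2}^{(0)}(t,s) - r^2$ with $F_{n,2}^{(0)}(t,s) = -\L_n(s^2-t^2)-(t-i\ep)^2$, and to combine a uniform local Taylor expansion at the saddle with global monotonicity/drop arguments along the deformed contour. The point $(ix_{\ep,n},0,0)$ is a critical point: $\partial_s F_{n,2}$ and $\partial_r F_{n,2}$ vanish trivially at $s=r=0$, while $\partial_t F_{n,2}(t,0,0)=2t\tr G(-t^2)-2(t-i\ep)$ evaluated at $t=ix_{\ep,n}$ reduces to the saddle equation \eqref{eq_x_en}. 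A direct computation of the Hessian there yields a diagonal matrix with entries $-2\kappa_1$, $-2(1-\ep/x_{\ep,n})$, $-2$ (the mixed derivative $\partial_s\partial_t F_{n,2}=-4st\,\tr G^2(s^2-t^2)$ vanishes at $s=0$), all three bounded above by $-c<0$ uniformly in $n$ by \eqref{def_k_1} and Propositions~\ref{pro:sols_bound_int}--\ref{pro:sols_bound_ext}. Taylor's theorem with uniformly bounded cubic remainder then gives, on a ball of radius $\delta_0$ independent of $n$,
\[
\Re F_{n,2}(t,s,r)\ \le\ F_{n,2}(ix_{\ep,n},0,0)-c\bigl(|t-ix_{\ep,n}|^2+s^2+|r|^2\bigr),
\]
which delivers \eqref{F_2_ineq_rstr} whenever $\max(|t-ix_{\ep,n}|,|s|,|r|)\in[n^{-1/2}\log n,\delta_0]$.

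Outside this local ball, the estimate is obtained in three strokes. (a)~For $t\in L_\pm$, the construction of $L_t$ already ensures $\Re h_n(t)\le h_n(ix_{\ep,n})-\sigma$. To extend this to arbitrary $(s,r)$, I exploit the monotonicity
\[
\partial_s \Re \L_n(s^2-t^2)=2s\,\Re\tr G(s^2-t^2)\ge 0\qquad\text{for }s\ge 0,
\]
which is valid because $\Re(s^2-t^2)=s^2+(\Im t)^2-(\Re t)^2>0$ throughout $L_t\subset\{\Im z\ge|\Re z|+\ep\}$, so $\Re F_{n,2}(t,s,r)\le\Re F_{n,2}(t,0,r)$ and the argument reduces to the already established $s=0$ case. (b)~For $t\in L_0$ and $|s|\ge\delta_0$, the same $s$-monotonicity reduces the bound to $|s|=\delta_0$, which falls under the local estimate. (c)~For $r\in\wt{\mathcal{R}}(t)$ with $|r|\ge\delta_0$: on the real ray $\mathcal{R}_2(t)=\{-\delta+\rho\colon\rho\ge 0\}$ the variable $r$ is real, so $-\Re(r^2)=-r^2$ decays quadratically away from $r=0$; on the slanted segment $\mathcal{R}_1(t)=[it+\ep,-\delta]$, parametrizing $r=(1-\tau')(it+\ep)-\tau'\delta$ reduces $-\Re((t-i\ep)^2)-\Re(r^2)$ to a concave quadratic in $\tau'$ whose maximum over $[0,1]$ is bounded above by its value at the reference $t=ix_{\ep,n}$, where $\mathcal{R}_1$ collapses to the real interval $[\ep-x_{\ep,n},-\delta]$ and $-r^2\le 0$ is automatic.

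The main technical obstacle I expect is the combined bound on the piece $L_\pm\times\mathcal{R}_1(t)$: because $\mathcal{R}_1(t)$ tilts with $\Im t$, the quantity $-\Re(r^2)$ is not manifestly nonpositive there, and the apparent gain must be absorbed by the $\sigma$-drop of $h_n$. I would handle this by noting that the integrand is jointly holomorphic in $(t,r)$ on the product region $\{\Im t\ge|\Re t|+\ep\}\times\{\Re r\le\ep\}$, so the deformed contour $\wt{\mathcal{R}}(t)$ varies continuously with $t$ and the pointwise bound on $\Re F_{n,2}$ established at the reference $t=ix_{\ep,n}$ propagates to nearby $t$ by a uniform perturbation argument, with the $\sigma$-drop along $L_\pm$ dominating any residual growth from the tilt of $\mathcal{R}_1(t)$. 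This splits the verification into a clean reference-contour inequality plus a small-perturbation estimate, both controllable uniformly in $n$.
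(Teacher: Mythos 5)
Your local analysis at the saddle (the Hessian entries $-2\kappa_1$, $-2\kappa_2=-2(1-\ep/x_{\ep,n})$, $-2$, all $\le -c<0$ uniformly) and the $s$-monotonicity argument both match the paper's calculations and are correct. The decomposition into a local quadratic drop plus global monotonicity is essentially the same framework as the paper's chain of inequalities~\eqref{ineq_2}--\eqref{ineq_8}.

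The genuine gap is exactly where you flag a ``technical obstacle'': the piece $L_\pm\times\mathcal{R}_1(t)$. You claim $-\Re(r^2)$ ``is not manifestly nonpositive'' there and propose to absorb ``any residual growth from the tilt'' into the fixed $\sigma$-drop of $h_n$ via a perturbation argument. That strategy does not close: $L_\pm$ is unbounded, $\sigma$ is a fixed constant, and the integrand is not merely perturbed near the reference contour as you move along $L_\pm$, so a local ``uniform perturbation argument'' cannot control the full unbounded piece. What actually resolves this --- and is the key step the paper does explicitly --- is a direct elementary computation showing $\Re r^2\ge 0$ for \emph{every} $r\in\wt{\mathcal{R}}(t)$ and \emph{every} $t\in L_t$, not just for $t\in L_0$. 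Writing $t=t_1+it_2$ with $t_2-|t_1|\ge\ep$ (which is the whole point of requiring $L_t\subset\{\Im z\ge|\Re z|+\ep\}$) and $r=\alpha(it+\ep)-(1-\alpha)\delta$ on $\mathcal{R}_1(t)$, one gets
\[
\Re r^2=(\alpha(t_2-\ep)+(1-\alpha)\delta)^2-\alpha^2 t_1^2\ \ge\ \alpha^2\bigl((t_2-\ep)^2-t_1^2\bigr)\ \ge\ 0,
\]
so $-\Re r^2\le 0$ uniformly along $L_\pm$ with no residual to absorb. Once you have $\Re F_{n,2}(t,s,r)\le\Re F_{n,2}(t,s,0)$ for all $t\in L_t$, the $r$-variable decouples entirely, and the rest of your argument (monotonicity in $s$, then the $\sigma$-drop in $t$ on $L_\pm$, then the local quadratic estimates on $L_0$) goes through cleanly. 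You should replace the perturbation heuristic with this explicit inequality; without it, the proof does not establish the claimed bound on the unbounded portion of the contour.
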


\begin{proof}
The statement above is a straightforward consequence of the following inequalities:
\begin{align}
\label{ineq_2}
&\Re F_{n,2}(t,s,r)\le \Re F_{n,2}(t,s,0) &  \quad\text{ when\ } &t\in L_t;\\
\label{ineq_3}
&\Re F_{n,2}(t,s,0)\le \Re F_{n,2}(t,0,0)&\quad\text{ when\ }&t\in L_t;\\
\label{ineq_4}
&\Re F_{n,2}(t,0,0)\le \Re F_{n,2}(ix_{\ep,n},0,0)-\sigma &\quad\text{ when\ }&t\in L_{\pm};\\
\label{ineq_5}
&\Re F_{n,2}(t,0,0)\le \Re F_{n,2}(ix_{\ep,n},0,0)&\quad\text{ when\ }&t\in L_{0};\\
\label{ineq_6}
&\Re F_{n,2}(t,s,0)\le \Re F_{n,2}(t,0,0)-c\epsilon^2&\quad\text{ when\ }&t\in L_{0},\ |t-ix_{\ep,n}|<\epsilon,\ |s|>\epsilon;\\
\label{ineq_7}
&\Re F_{n,2}(t,0,0)\le \Re F_{n,2}(ix_{\ep,n},0,0)-c\epsilon^2&\quad\text{ when\ }&t\in L_{0},\ |t-ix_{\ep,n}|>\epsilon;\\
\label{ineq_8}
&\Re F_{n,2}(t,s,r)\le \Re F_{n,2}(t,s,0)-c\epsilon^2 &\quad\text{ when\ }& t\in L_0,\ |r|>\epsilon.
\end{align}

Let us start from the proof of  \eqref{ineq_2}.
It suffices to prove that $\Re r^2\ge 0$. Set $t=t_1+it_2$, then $t_2-|t_1|\ge \ep>0$ for $t\in L_t$. For $r\in \mathcal{R}_2(t)$ the inequality is obvious. For $r\in\mathcal{R}_1(t)$ we have ${r=\alpha(-t_2+\ep+it_1)-(1-\alpha)\delta}$, thus
\begin{equation*}
\Re r^2=(-\alpha (t_2-\ep)-(1-\alpha)\delta)^2-(\alpha t_1)^2\ge \alpha^2((t_2-\ep)^2-t_1^2)\ge 0.
\end{equation*}

Next, we prove \eqref{ineq_3}.
Set $t=t_1+it_2$, then $t_2-|t_1|\ge \ep>0$ since $t\in L_t$. Also, $t^2=t_1^2-t_2^2+2it_1t_2$, thus
\begin{equation*}
\Re F_{n,2}(t,s,0)=-\dfrac{1}{2}\int \log((\lambda+t_2^2-t_1^2+s^2)^2+4t_1^2t_2^2)\,d\nu_{n,z}(\lambda)-\Re (t-i\ep)^2
\end{equation*}
Since $t_2^2-t_1^2>0$, then $\Re F_{n,2}(t,s,0)$ decreases for $s\in[0,+\infty)$, which gives us~\eqref{ineq_3}.

Notice that $\Re F_{n,2}(t,0,0)=\Re h_n(t)$. The inequalities $\Re h_n(t)\le \Re h_n(ix_{\ep,n})-\sigma$ for $t\in L_{\pm}$ and $\Re h_n(t)\le \Re h_n(ix_{\ep,n})$ for $t\in L_{0}$ imply that \eqref{ineq_4} and \eqref{ineq_5} hold.

The inequality \eqref{ineq_6} follows from the fact that $\Re F_{n,2}(t,s,0)$ decreases for $s\in[0,+\infty)$ and 
$$
\partial_{s}^2 \Re F_{n,2}(t,s,0)\bigr|_{s=0} =-2\Re\tr G(-t^2)<-1
$$
for $t$ lying in some neighbourhood of $ix_{\ep,n}$,
while \eqref{ineq_7} follows from the fact that $\Re F_{n,2}(t,0,0)=\Re h_n(t)$ decreases when $t\in [ix_{\ep,n},ix_{\ep,n}+\delta]$, and 
$\partial_{\tau}^2\Re h_n(ix_{\ep,n}+\tau)\bigr|_{\tau=0}=-2\kappa_1<-c$. 

Finally, we prove \eqref{ineq_8}.
It suffices to show that $\Re r^2\ge c\epsilon^2$. For $r\in\mathcal{R}_2(t)$ it is obvious. In case $r\in\mathcal{R}_1(t)$ set $t=ix_{\ep,n}+\tau$, $|\tau|<\frac{1}{2}x_{\ep,n}$, then $r=\alpha(-x_{\ep,n}+i\tau+\ep)-(1-\alpha)\delta$ and for small $\tau>0$,
\begin{equation*}
\begin{split}
\Re r^2&=(-\alpha(x_{\ep,n}-\ep)-(1-\alpha)\delta)^2-(\alpha\tau)^2\ge \alpha^2((x_{\ep,n}-\ep)^2-\tau^2)+(1-\alpha)^2\delta^2\ge c>0.
\end{split}
\end{equation*}

\end{proof}

It is easy to check that for $(t,s,r)$ lying in some neighbourhood of $(ix_{\ep,n},0,0)$ we have
\begin{equation}\label{F_2_expand}
\begin{split}
F_{n,2}(t,s,r)
=-\L_n(x_{\ep,n}^2)+(x_{\ep,n}-\ep)^2-\kappa_1 (t-ix_{\ep,n})^2-\kappa_2s^2-r^2+O(|s|^3+|t-ix_{\ep,n}|^3),
\end{split}
\end{equation}
where $\kappa_1$ defined in \eqref{def_k_1} and
\begin{equation}\label{def_k_2}
\kappa_2=\tr G(x_{\ep,n}^2).
\end{equation}
Bounds~\eqref{bound_x_en} imply that $\kappa_2\ge c>0$ uniformly in $n$.

Observe that \eqref{F_1_ineq_rstr} and \eqref{F_2_ineq_rstr} give us
\begin{equation*}
\begin{split}
\Re F_n(\mathbf{x})\le -cn^{-1}\log^2 n&\text{\quad when\quad} \max\{|u-x_{\ep,n}|,|t-ix_{\ep,n}|,|s|,|r|,|v|,|w|\}>n^{-1/2}\log n,
\end{split}
\end{equation*}
which allows us to restrict the integration to the neighbourhood
\begin{equation*}
\begin{split}
&U_{n}=\{\mathbf{x}\in \wt{\mathcal{V}}\colon |u-x_{\ep,n}|,\, |t-ix_{\ep,n}|,\,|s|,\, |r|,\,|v|,\, |w| <n^{-1/2}\log n\}.
\end{split}
\end{equation*}
with an error term $O(e^{-c\log^2 n})$.
Making the changes of variables $u=x_{\ep,n}+n^{-1/2}\wt u$, $t=ix_{\ep,n}+n^{-1/2}\wt t$, $s=n^{-1/2}\wt s$,  $r=n^{-1/2}\wt r$, $v=n^{-1/2}\wt{v}$, $w=n^{-1/2}\wt{w}$, using the expansions \eqref{F_1_expand}, \eqref{F_2_expand} and expanding the integrand, we obtain
\begin{equation}\label{I_g_n_expr}
\begin{split}
\I_{g_n}
&=g_n(x_{\ep,n}^2, x_{\ep,n}^2)
\cdot\dfrac{x_{\ep,n}-\ep}{\sqrt{4x_{\ep,n}-4\ep}}
\cdot\dfrac{x_{\ep,n}}{\sqrt{2x_{\ep,n}}}\times\\
&\times\dfrac{8}{\pi^3}\int_{\wt{U}_n} 
e^{-\kappa_1 \wt{u}^2-\kappa_1\wt{t}^2-\kappa_2\wt{s}^2-\wt{r}^2-\ep \wt{v}^2-2\ep\wt{w}^2}\,
d\wt{u}\,d\wt{t}\,d\wt{s}\,d\wt{r}\,d\wt{v}\,d\wt{w}\,+ O(n^{-1/2}\log^k n)=\\
&=C(\ep,n,z)\cdot g_n(x_{\ep,n}^2,x_{\ep,n}^2)+O(n^{-1/3}),
\end{split}
\end{equation}
where
$
C(\ep,n,z)=\sqrt{\dfrac{(x_{\ep,n}-\ep)x_{\ep,n}}{\kappa_1^2\kappa_2\ep^2}}.
$
Notice that $C(\ep,n,z)$ does not depend on $g_n$ and  $C(\ep,n,z)=O(1)$ for fixed $\ep>0$ as $n\to\infty$, since $\kappa_{1,2}\ge c>0$. Substituting $g_n(u_1^2+u_2^2,s^2-t^2)=\varphi(u_1^2+u_2^2,s^2-t^2,z,z)$ in \eqref{I_g_n_expr}, we obtain
\begin{equation*}
\I_{\varphi(z,z)}=C(\ep,n,z)\cdot \varphi(x_{\ep,n}^2,x_{\ep,n}^2,z,z)+O(n^{-1/3})
\end{equation*}
On the other hand, according to \eqref{Z_repr_aver}, 
\begin{equation*}
\I_{\varphi(z,z)}=\mathcal{I}\bigl( \varphi(u_1^2+u_2^2,s^2-t^2,z,z) e^{n\F_n(z,z)}\bigr)=\mathcal{Z}(\ep,\ep,z,z)=1.
\end{equation*}
Therefore, 
\begin{equation*}
\I_{g_n}=\dfrac{\I_{g_n}}{\I_{\varphi(z,z)}}=\dfrac{g_n(x_{\ep,n}^2,x_{\ep,n}^2)}{ \varphi(x_{\ep,n}^2,x_{\ep,n}^2,z,z)}+O(n^{-1/3}),
\end{equation*}
which gives us \eqref{form_<ge^nF>}.

\begin{rema}
In fact, one can write a better error term $O(n^{-1})$ as in usual Gaussian integral. This is due to the fact that integrals of the form $\int x^k e^{-ax^2}\,dx$ are equal to zero for odd $k$ and bounded for even $k$. However, at this point we are not interested in the best possible bound. We perform more precise analysis of the error term when $\ep$ depends on $n$ in Section~\ref{sc:rate}.

\end{rema}

\subsection{Asymptotic behaviour of $\I_k$ and $\rho_{\ep,n}(z)$}

Using the formula \eqref{form_<ge^nF>}, we can now easily obtain the following result.
\begin{pro}\label{pro:for_I_k}
For $\I_2,\I_3,\I_4$ defined in \eqref{def_I_k} and $z\notin \partial D$ we have
\begin{equation}\label{for_I_k}
\begin{split}
&\I_2=x^2_{\ep,n}\cdot\tr G(x_{\ep,n}^2)\wt{G}(x_{\ep,n}^2)+O(n^{-\alpha}),\quad
\I_3=O(n^{-\alpha}),\quad \I_4=O(n^{-1}).
\end{split}
\end{equation}
\end{pro}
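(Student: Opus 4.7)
The approach is to cast each $\I_k$ in the form $\mathcal{I}\bigl(g_n^{(k)}(u_1^2+u_2^2,s^2-t^2)\,e^{n\F_n(z,z)}\bigr)$ and apply formula \eqref{form_<ge^nF>} from Subsection~\ref{sbsc:<ge^nF>}, which reduces the asymptotics to evaluating the ratio $g_n^{(k)}/\varphi$ at the saddle $u_1^2+u_2^2=s^2-t^2=x_{\ep,n}^2$. All derivatives in \eqref{def_I_k} act only on the $z,z_1$ arguments of $\varphi$ and $\F_n$, so after setting $z_1=z$ the resulting $g_n^{(k)}$ is automatically a function of $(u_1^2+u_2^2,s^2-t^2)$ alone and inherits analyticity and $n$-uniform boundedness near the saddle from Propositions~\ref{pro:sols_bound_int}--\ref{pro:sols_bound_boundary}.

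For $\I_2$, the only $z_1$-dependent piece of $\F_n(z,z_1)$ is $\L_n(z_1,u_1^2+u_2^2)$, so $(\partial_{\bar z}+\partial_{\bar z_1})\partial_{z_1}\F_n\bigr|_{z_1=z}$ reduces to $\partial_{\bar z_1}\partial_{z_1}\L_n(z_1,x)\bigr|_{z_1=z}$. Combining $\partial_{z_1}\L_n(z_1,x)=\tr G(z_1,x)(\bar z_1-A_n^*)$ with $\partial_{\bar z_1}G(z_1,x)=-G(z_1,x)(z_1-A_n)G(z_1,x)$ yields
\begin{equation*}
(\partial_{\bar z}+\partial_{\bar z_1})\partial_{z_1}\F_n(z,z_1)\bigr|_{z_1=z}=\tr G(x)-\tr G(x)(z-A_n)G(x)(\bar z-A_n^*),\qquad x=u_1^2+u_2^2.
\end{equation*}
The intertwining $G(z-A_n)=(z-A_n)\wt G$ together with $\wt Y_0\wt G=\I-x\wt G$ collapses the second trace to $\tr G(x)-x\tr G(x)\wt G(x)$, so the whole expression equals $x\cdot\tr G(x)\wt G(x)$. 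Since $g_n^{(2)}$ is this factor multiplied by $\varphi$, the $\varphi$ cancels the denominator in \eqref{form_<ge^nF>}, and evaluating at $x=x_{\ep,n}^2$ produces the claimed formula for $\I_2$.

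For $\I_3$, an analogous computation yields $(\partial_{\bar z}+\partial_{\bar z_1})\F_n\bigr|_{z_1=z}=\tr G(u_1^2+u_2^2)(z-A_n)-\tr G(s^2-t^2)(z-A_n)$, which vanishes identically at the saddle because both trace arguments coincide with $x_{\ep,n}^2$; since $\partial_{z_1}\varphi\bigr|_{z_1=z}$ is bounded there, \eqref{form_<ge^nF>} gives $\I_3=O(n^{-\alpha})$. For $\I_4$, the explicit $1/n$ prefactor and the boundedness of $(\partial_{\bar z}+\partial_{\bar z_1})\partial_{z_1}\varphi\bigr|_{z_1=z}/\varphi$ at the saddle immediately yield $\I_4=O(n^{-1})$.

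The main technical item left is the verification of the third hypothesis behind \eqref{form_<ge^nF>}, namely the uniform bound $\mathcal{I}(|g_n^{(k)}\,e^{N_1\F_n(z,z)}|)\le C$ for some fixed $N_1$. I expect this to follow from the same contour deformations used in Subsection~\ref{sbsc:<ge^nF>}, since each $g_n^{(k)}$ grows only polynomially in the integration variables through bounded traces of $G$, $\wt G$ and their combinations with $A_n-z$; choosing $N_1$ slightly above $1$ absorbs this polynomial growth against the exponential decay established there.
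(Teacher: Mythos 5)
Your proof is correct and follows essentially the same route as the paper: identify the functions $g_{n,k}$, simplify them with the resolvent identities for $G,\wt G$ and the intertwining relation, and then invoke \eqref{form_<ge^nF>} at the saddle where both arguments equal $x_{\ep,n}^2$. You supply the algebraic derivation of the identity $(\partial_{\bar z}+\partial_{\bar z_1})\partial_{z_1}\F_n|_{z_1=z}=x\,\tr G(x)\wt G(x)$, which the paper states without computation, and your observation that the vanishing of $g_{n,3}$ at the saddle and the explicit $1/n$ in $g_{n,4}$ close the argument matches the paper exactly.
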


\begin{proof}
One can recall the definition of $\I_k$ and apply \eqref{form_<ge^nF>} for the following functions:
\begin{equation*}
\begin{split}
g_{n,2}(x,y)&=\bigl((\partial_{\bar{z}}+\partial_{\bar{z}_1}) \partial_{z_1}\F_n(z,z_1)\cdot \varphi(z,z_1)\bigr)\Bigr|_{z_1=z}=
x\cdot\tr G(x)\wt{G}(x)\cdot\varphi(x,y,z,z);\\
g_{n,3}(x,y)&=\bigl(\partial_{z_1}\varphi(z,z_1)\cdot(\partial_{\bar{z}}+\partial_{\bar{z}_1})\F_n(z,z_1)\bigr)\Bigr|_{z_1=z}=\\
&=\Bigl(\tr\,(z-A_n)G(x)-\tr\,(z-A_n)G(y)\Bigr)\partial_{z_1}\varphi(x,y,z,z);\\
g_{n,4}(x,y)&=\dfrac{1}{n}\bigr((\partial_{\bar{z}}+\partial_{\bar{z}_1})\partial_{z_1}\varphi(z,z_1)\bigr)\Bigr|_{z_1=z}=O(n^{-1}).
\end{split}
\end{equation*}

\end{proof}

Next, we are going to find an asymptotic formula for $\I_1$. If we set
\begin{equation*}
g_{n,1}=\partial_{z_1}\F_n(z,z)\Bigl(\partial_{\bar{z}}+\partial_{\bar{z}_1}\Bigr)\Bigl( \varphi(z,z_1)\,e^{n\F_n(z,z_1)}\Bigr)\cdot e^{-n\F_n(z,z)}
\end{equation*}
then we cannot apply \eqref{form_<ge^nF>} for $g_n=g_{n,1}$ since $g_{n,1}$ is not bounded as $n\to\infty$. However, we will implement a method similar to the one in Subsection~\ref{sbsc:<ge^nF>}.

\begin{pro}\label{pro:for_I_1}
For $\I_1$ defined in \eqref{def_I_k} and $z\notin \partial D$ we have
\begin{equation}\label{for_I_1}
\I_1=\dfrac{\Bigl|\tr(A_n-z)G^2(x_{\ep,n}^2)\Bigr|^2}{\tr G^2(x_{\ep,n}^2)+\ep/(2x_{\ep,n}^3)}+O(n^{-\alpha}).
\end{equation}
\end{pro}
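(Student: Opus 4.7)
My plan is to decompose $\I_1$ by the Leibniz rule as
\begin{equation*}
\I_1 = \mathcal{I}\bigl[\partial_{z_1}\F_n\cdot(\partial_{\bar z}+\partial_{\bar z_1})\varphi\cdot e^{n\F_n}\bigr]\Bigr|_{z_1=z} + n\,\mathcal{I}\bigl[\partial_{z_1}\F_n\cdot\varphi\cdot(\partial_{\bar z}+\partial_{\bar z_1})\F_n\cdot e^{n\F_n}\bigr]\Bigr|_{z_1=z}.
\end{equation*}
The first summand can be handled directly by \eqref{form_<ge^nF>}, but the second carries an explicit factor of $n$ that must be absorbed. The essential observation is that at the saddle point $(u,u_2,s,t)=(x_{\ep,n},0,0,ix_{\ep,n})$ both $G(u_1^2+u_2^2)$ and $G(s^2-t^2)$ reduce to $G(x_{\ep,n}^2)$, hence $(\partial_{\bar z}+\partial_{\bar z_1})\F_n|_{z_1=z}$ vanishes there, and this $n^{-1/2}$ smallness is exactly what will compensate the factor $n$.

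To exploit the vanishing in a way compatible with \eqref{form_<ge^nF>}, I use the normalization identity \eqref{<phi_exp_nF>=1}. Since $\mathcal{Z}(\ep,\ep,z,z)=1$ for all $z$, differentiating in $\bar z$ and applying \eqref{Z_repr_aver} yields
\begin{equation*}
\mathcal{I}\bigl[(\partial_{\bar z}+\partial_{\bar z_1})\varphi\cdot e^{n\F_n}\bigr]\Bigr|_{z_1=z} + n\,\mathcal{I}\bigl[\varphi\,(\partial_{\bar z}+\partial_{\bar z_1})\F_n\cdot e^{n\F_n}\bigr]\Bigr|_{z_1=z} = 0.
\end{equation*}
Setting $B_0:=-\tr[(A_n^*-\bar z)G(x_{\ep,n}^2)]$, the saddle value of $\partial_{z_1}\F_n|_{z_1=z}$, I subtract $B_0$ times this identity from the Leibniz decomposition and obtain
\begin{equation*}
\I_1 = \mathcal{I}\bigl[(\partial_{z_1}\F_n-B_0)(\partial_{\bar z}+\partial_{\bar z_1})\varphi\cdot e^{n\F_n}\bigr]\Bigr|_{z_1=z} + n\,\mathcal{I}\bigl[(\partial_{z_1}\F_n-B_0)\varphi(\partial_{\bar z}+\partial_{\bar z_1})\F_n\cdot e^{n\F_n}\bigr]\Bigr|_{z_1=z}.
\end{equation*}
The first term now has a bounded prefactor that vanishes at the saddle, so \eqref{form_<ge^nF>} immediately gives $O(n^{-\alpha})$.

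For the second term I Taylor-expand around the saddle in the rescaled coordinates $\tilde u=\sqrt n(u-x_{\ep,n})$, $\tilde t=\sqrt n(t-ix_{\ep,n})$ of Subsection~\ref{sbsc:<ge^nF>}. A direct computation using $\partial_{z_1}\L_n=-\tr[(A_n^*-\bar z)G]$ and $\partial_{\bar z}\L_n=-\tr[(A_n-z)G]$ yields
\begin{equation*}
\partial_{z_1}\F_n|_{z_1=z}-B_0 = n^{-1/2}\bar A\,\tilde u + O(n^{-1}),\quad (\partial_{\bar z}+\partial_{\bar z_1})\F_n|_{z_1=z} = n^{-1/2}A(\tilde u+i\tilde t)+O(n^{-1}),
\end{equation*}
with $A:=2x_{\ep,n}\tr[(A_n-z)G^2(x_{\ep,n}^2)]$ and $\bar A$ its complex conjugate. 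Multiplied by $n\,\varphi|_{\mathrm{sad}}$ the leading integrand is $\varphi|_{\mathrm{sad}}|A|^2\tilde u(\tilde u+i\tilde t)+O(n^{-1/2})$, and the Gaussian moments $\int\tilde u^2\mathcal G/\int\mathcal G=(2\kappa_1)^{-1}$ together with $\int\tilde u\tilde t\,\mathcal G=0$ (by parity in $\tilde t$) give a contribution $|A|^2/(2\kappa_1)$. Substituting $|A|^2=4x_{\ep,n}^2|\tr(A_n-z)G^2(x_{\ep,n}^2)|^2$ and $\kappa_1=\ep/x_{\ep,n}+2x_{\ep,n}^2\tr G^2(x_{\ep,n}^2)$ from \eqref{def_k_1} produces the claimed formula.

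The main obstacle is controlling the subleading $O(n^{-1/2})$ contributions — from cubic Taylor corrections to $e^{n\F_n}$, from the linear expansion of $\varphi$ around the saddle, and from the $O(n^{-1})$ remainders in both vanishing prefactors — and showing that after multiplication by the explicit $n$ they combine to $O(n^{-\alpha})$. The $B_0$-subtraction is precisely what makes this bookkeeping tractable: without it, the integrand contains a spurious $O(n^{1/2})$ term $nB_0 A\varphi|_{\mathrm{sad}}(\tilde u+i\tilde t)$ whose leading Gaussian moment vanishes by parity but whose pairing with cubic corrections to $e^{n\F_n}$ would otherwise generate genuine $O(1)$ contributions that do not fit into the target formula.
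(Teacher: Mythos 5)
Your argument follows the paper's proof essentially step for step: the same zero identity from differentiating $\mathcal{Z}(\ep,\ep,z,z)=1$, the same subtraction of the saddle value $B_0=\ol{p(x_{\ep,n}^2)}$ to manufacture an extra zero at the saddle, the same Taylor expansions $p(u^2)-p(x_{\ep,n}^2)\approx n^{-1/2}\gamma\wt u$ and $p(s^2-t^2)-p(x_{\ep,n}^2)\approx -in^{-1/2}\gamma\wt t$, and the same Gaussian-moment computation yielding $|\gamma|^2/(2\kappa_1)$. The only cosmetic difference is bookkeeping: you apply the Leibniz rule first and handle the term without the explicit factor of $n$ as a separate $\I_{g_n}$ via \eqref{form_<ge^nF>}, while the paper subtracts first and folds that same subleading contribution into the $O(n^{-1})$ inside its equation \eqref{I_1_int_form}; both lead to the identical final estimate.
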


\begin{proof}
The idea of the proof is to show that the integrand of $\I_1$ has a zero of the second order at the saddle point, which will neutralize an extra multiplier $n$ before the integral.

Denote $p(x)=\tr\bigl((z-A_n)G(x)\bigr)$. Then $\ol{p(x)}=\tr\bigl((\bar{z}-A_n^*)G(x)\bigr)$, $\partial_{z_1} \F_n(z,z)=\ol{p(u_1^2+u_2^2)}$ and
\begin{equation}\label{I_1_two_summands}
\begin{split}
\I_1
&=\mathcal{I}\Bigl( \ol{p(u_1^2+u_2^2)}\,\Bigl(\partial_{\bar{z}}+\partial_{\bar{z}_1}\Bigr)\Bigl( \varphi(z,z_1)\,e^{n\F_n(z,z_1)}\Bigr) \Bigr)\Bigr|_{z_1=z}
\end{split}
\end{equation}
We start with creating an extra root of the integrand at the saddle point. 
The identity~\eqref{<phi_exp_nF>=1} implies that
$
\left(\partial_{\bar{z}}+\partial_{\bar{z}_1}\right)\mathcal{Z}(\ep,\ep,z,z_1)\Bigr|_{z=z_1}=0,
$
which can be rewritten as
\begin{equation}\label{zero_equal}
\mathcal{I}\Bigl( \Bigl(\partial_{\bar{z}}+\partial_{\bar{z}_1}\Bigr)\Bigl( \varphi(z,z_1)\,e^{n\F_n(z,z_1)}\Bigr) \Bigr)\Bigr|_{z_1=z}=0.
\end{equation}
Subtracting \eqref{zero_equal} multiplied by $\ol{p(x_{\ep,n}^2)}$ from \eqref{I_1_two_summands}, we get
\begin{equation*}
\I_1=\mathcal{I}\Bigl( \bigl(\,\ol{p(u_1^2+u_2^2)}-\ol{p(x_{\ep,n}^2)}\,\bigr)\Bigl(\partial_{\bar{z}}+\partial_{\bar{z}_1}\Bigr)\Bigl( \varphi(z,z_1)\,e^{n\F_n(z,z_1)}\Bigr) \Bigr)\Bigr|_{z_1=z}
\end{equation*} 
Observe that
\begin{equation*}
\Bigl(\partial_{\bar{z}}+\partial_{\bar{z}_1}\Bigr)\Bigl( \varphi(z,z_1)\,e^{n\F_n(z,z_1)}\Bigr)\Bigr|_{z_1=z}
=\Bigl(n\left(\partial_{\bar z} \F_n+\partial_{\bar{z}_1} \F_n\right)\varphi +\partial_{\bar z} \varphi+\partial_{\bar{z}_1} \varphi\Bigr)\Bigr|_{z_1=z}e^{n\F_n(z,z)},
\end{equation*}
where 
\begin{equation}\label{Tr-Tr}
\left(\partial_{\bar{z}} \F_n+\partial_{\bar{z}_1} \F_n\right)\Bigr|_{z_1=z}
=p(u_1^2+u_2^2)-p(s^2-t^2).
\end{equation}
We can rewrite
\begin{equation}\label{I_1_int_form}
\I_1=n\mathcal{I}\Bigl(\bigl(\,\ol{p(u_1^2+u_2^2)}-\ol{p(x_{\ep,n}^2)}\,\bigr)\,\bigl(p(u_1^2+u_2^2)-p(s^2-t^2)+O(n^{-1})\bigr) \varphi(z,z)e^{n\F_n(z,z)}\Bigr)
\end{equation}
We can move contours, restrict the integration and change the variables as in Subsection~\ref{sbsc:<ge^nF>}. After the change of variables we get
\begin{equation*}
\begin{split}
\I_1=&n\int d\wt{u}\,d\wt{t}\,d\wt{s}\,d\wt{r}\,d\wt{v}\,d\wt{w}\, (\,\ol{p_1(u^2)}-\ol{p_1(x_{\ep,n}^2)}\,\bigr)\times\\
&\times\Bigl( \bigl(p(u^2)-p(x_{\ep,n}^2)\bigr)-\bigl(p(s^2-t^2)-p(x_{\ep,n}^2)\bigr)\Bigr) \Phi_n(u,t,s,r,v,w)\times\\
&\times\exp\{-\kappa_1\wt{u}^2-\kappa_1\wt t^2-\kappa_2 \wt s^2-\wt r^2-\ep\wt v^2-2\ep\wt{w}^2+O(n^{-1/2}\log^k n)\}
\end{split}
\end{equation*}
for some $\Phi_n$. Observe that $\ol{p(u^2)}-\ol{p(x_{\ep,n}^2)}$, $p(u^2)-p(x_{\ep,n}^2)$ and $p(s^2-t^2)-p(x_{\ep,n}^2)$ have zeros of the first order at the saddle point, thus the multiplier $n$ before the integral vanishes. Taking into account that
\begin{equation*}\begin{split}
&p(u^2)-p(x_{\ep,n}^2)=n^{-1/2}\gamma\wt u+O(n^{-1}\log^k n);\\
&p(s^2-t^2)-p(x_{\ep,n}^2)=-n^{-1/2} i\gamma\wt t+O(n^{-1}\log^k n),
\end{split}
\end{equation*}
where $\gamma=-2x_{\ep,n}\cdot\tr (z-A_n)G^2(x_{\ep,n}^2)$, we obtain
\begin{equation}\label{I_1_fin}
\begin{split}
\I_1=& \Phi_n(x_{\ep,n},ix_{\ep,n},0,0,0,0)\cdot\int d\wt{u}\,d\wt{t}\,d\wt{s}\,d\wt{r}\,d\wt{v}\,d\wt{w}\,
(\,\ol{\gamma}\gamma\cdot\wt{u}^2+i\ol{\gamma}\gamma\cdot\wt u\wt t)\times\\
&\times\exp\{-\kappa_1\wt{u}^2-\kappa_1\wt t^2-\kappa_2 \wt s^2-\wt r^2-\ep\wt v^2-2\ep\wt{w}^2\}+O(n^{-1/2}\log^k n).
\end{split}
\end{equation}
Since the Gaussian integral $\int x^ne^{-kx^2}\, dx$ equals zero for odd $n$, we can omit the summand $i\ol{\gamma}\gamma\cdot\wt u\wt t$. Now recall that \eqref{<phi_exp_nF>=1} holds and we can write
\begin{equation}\label{1_expand}
\begin{split}
1=&\mathcal{I}\bigl( \varphi(z,z) e^{n\F_n(z,z)}\bigr)=\Phi_n(x_{\ep,n},ix_{\ep,n},0,0,0,0)\times\\
&\times\int d\wt{u}\,d\wt{t}\,d\wt{s}\,d\wt{r}\,d\wt{v}\,d\wt{w}\,\exp\{-\kappa_1\wt{u}^2-\kappa_1\wt t^2-\kappa_2 \wt s^2-\wt r^2-\ep\wt v^2-2\ep\wt{w}^2\}+O(n^{-1/2}\log^k n).
\end{split}
\end{equation}
Dividing \eqref{I_1_fin} by \eqref{1_expand}, we obtain
\begin{equation*}
\begin{split}
\I_1&=\dfrac{\ol{\gamma}\gamma\int \wt u^2 \exp\{-\kappa_1\wt u^2\}}{\int \exp\{-\kappa_1\wt u^2\}}+O(n^{-1/3})=\dfrac{|\gamma|^2}{2\kappa_1}+O(n^{-1/3})=\\
&=\dfrac{\bigl|\tr(A_n-z)G^2(x_{\ep,n}^2)\bigr|^2}{\tr G^2(x_{\ep,n}^2)+\ep/(2x_{\ep,n}^3)}+O(n^{-1/3}).
\end{split}
\end{equation*}

\end{proof}

\noindent
Substituting \eqref{for_I_k} and \eqref{for_I_1} into the formula from Proposition~\ref{pro:int_rep_dens}, we obtain
\begin{equation}\label{asympt_rho_ep,n}
\pi\cdot\bar{\rho}_{\ep,n}(z)=\dfrac{\Bigl|\tr(A_n-z)G^2(x_{\ep,n}^2)\Bigr|^2}{\tr G^2(x_{\ep,n}^2)+\ep/(2x_{\ep,n}^3)}+x^2_{\ep,n}\cdot\tr G(x_{\ep,n}^2)\wt{G}(x_{\ep,n}^2)+\dfrac{\beta(\ep,n)}{n^{\alpha}},
\end{equation}
where $|\beta(\ep,n)|\le C(\ep)$ as $n\to\infty$.

\section{Formula for \boldmath{$\rho(z)$}}

We continue to implement the plan described in Remark~\ref{rm:to_prove}. In this section, we derive a formula for $\rho(z)$ using asymptotic formula~\eqref{asympt_rho_ep,n} for $\bar{\rho}_{\ep,n}(z)$.

\subsection{Case $z\in\Int D$}

We want to take the limit in \eqref{asympt_rho_ep,n} as $n\to\infty$ for fixed $\ep>0$ and $z\in\Int D$. Recall that ${\lim\limits_{n\to\infty} x_{\ep,n}=x_\ep}$, $\lim\limits_{n\to\infty} \tr(A_n-z)G^2(x)=T_1(z,x)$, $\lim\limits_{n\to\infty} \tr G(x)\wt{G}(x)=T_2(z,x)$, $\lim\limits_{n\to\infty}\tr G^2(x)=\int (\lambda+x)^{-2}\,d\nu_z(\lambda)$, $x>0$, where $T_1,T_2$ and $\nu_z$ are defined in Subsection~\eqref{sbsc:bas_notat}. From the fact that all the functions in those limits are analytic for $x>0$ one can easily obtain that   
\begin{equation*}
\begin{split}
&\lim\limits_{n\to\infty} \tr(A_n-z)G^2(x_{\ep,n}^2)=T_1(z,x_{\ep}^2),\\
&\lim\limits_{n\to\infty} \tr G(x_{\ep,n}^2)\wt{G}(x_{\ep,n}^2)=T_2(z,x_\ep^2),\\
&\lim\limits_{n\to\infty}\tr G^2(x_{\ep,n}^2)=\int (\lambda+x_\ep^2)^{-2}\,d\nu_z(\lambda).
\end{split}
\end{equation*}
Thus, for $\rho_{\ep}(z)=\lim\limits_{n\to\infty} \bar{\rho}_{\ep,n}(z)$ we obtain the following identity:
\begin{equation}\label{rho_ep_form}
\pi\cdot\rho_{\ep}(z)=\dfrac{|T_1(z,x^2_\ep)|^2}{\int (\lambda+x_\ep^2)^{-2}\,d\nu_z(\lambda)+\ep/(2x_{\ep}^3)}+x^2_{\ep}\cdot T_2(z,x^2_\ep).
\end{equation}
We are left to find the limit $\rho(z)=\lim\limits_{\ep\to 0} \rho_{\ep}(z)$. 
Recall that $\lim\limits_{\ep\to 0}x_{\ep}=x_0$. According to Subsection~\ref{sbsc:bas_notat}, the functions $T_{1,2}(z,x)$ are analytic in $z,x$ for $x>0$, hence they are continuous for $x>0$. Obviously, the function ${\int (\lambda+x^2)^{-2}\,d\nu_z(\lambda)}$ is also continuous for $x>0$. Hence,
\begin{equation}\label{rho_form}
\pi\cdot\rho(z)=\dfrac{|T_1(z,x^2_0)|^2}{\int (\lambda+x_0^2)^{-2}\,d\nu_z(\lambda)}+x^2_{0}\cdot T_2(z,x^2_0),
\end{equation}
which shows that $\rho(z)$ is equal to $\rho_\mu(z)$ defined in \eqref{rho_main_form} for $z\in\Int D$.

Also we need to check that $\rho_{\ep}(z)$ is bounded uniformly in $\ep\le\ep_0$ and $z\in\Int D$. To estimate $T_j(z,x^2_{\ep,n})$, it is enough to obtain upper bounds on $\bigl|\tr (A_n-z)G^2(x_{\ep,n}^2)\bigr|$ and $\tr G(x_{\ep,n}^2)\wt{G}(x_{\ep,n}^2)$. Observe that $\dist(\sigma_\epsilon,\partial D)>0$, where $\sigma_\epsilon$ is defined in (C3). Now one can easily obtain upper bounds on ${\bigl|\tr (A_n-z)G^2(x_{\ep,n}^2)\bigr|}$, $\tr G(x_{\ep,n}^2)\wt{G}(x_{\ep,n}^2)$ and a lower bound on $\int (\lambda+x_\ep^2)^{-2}\,d\nu_z(\lambda)$, using (C2), Proposition~\ref{pro:sols_bound_int} for $z\in\sigma_\epsilon$ (taking $E_{in}=\sigma_\epsilon$) and (C3) for $z\notin \sigma_{\epsilon}$.

\subsection{Case $z\in \mathbb{C}\setminus\ol{D}$}

Similarly to the case $z\in\Int D$ we obtain 
\begin{equation*}
\pi\cdot\rho_{\ep}(z)=\dfrac{|T_1(z,x^2_\ep)|^2}{\int (\lambda+x_\ep^2)^{-2}\,d\nu_z(\lambda)+\ep/(2x_{\ep}^3)}+x^2_{\ep}\cdot T_2(z,x^2_\ep).
\end{equation*}
To find the limit $\rho(z)=\lim\limits_{\ep\to 0} \rho_{\ep}(z)$, use the bound
\begin{equation}\label{rho_ep_out}
\pi\cdot\rho_{\ep}(z)\le\dfrac{|T_1(z,x^2_\ep)|^2}{\ep/(2x_{\ep}^3)}+x^2_{\ep}\cdot T_2(z,x^2_\ep)
=|T_1(z,x^2_\ep)|^2\cdot 2x_{\ep}^2\cdot\dfrac{x_\ep}{\ep}+x^2_{\ep}\cdot T_2(z,x^2_\ep).
\end{equation}
According to Proposition~\ref{pro:sols_bound_ext}, $x_\ep\le (1+K_0)\ep$ for a fixed $z\in\mathbb{C}\setminus\ol{D}$. 
Using condition (C3) one can show that $|T_{1,2}(z,x_\ep^2)|$ are bounded uniformly in $\ep$.  Then \eqref{rho_ep_out} gives us
\begin{equation}\label{rho_ep_estim}
|\rho_{\ep}(z)|\le C\ep^2
\end{equation} 
for some $C>0$, which shows that 
$
\rho(z)=\lim\limits_{\ep\to 0} \rho_{\ep}(z)=0,
$
and $\rho(z)$ is equal to $\rho_\mu(z)$ defined in \eqref{rho_main_form} for $z\in\mathbb{C}\setminus D$.

Again, we need to prove that $\rho_{\ep}(z)$ is bounded uniformly in $\ep\le\ep_0$ and $z\in \mathbb{C}\setminus \overline{D}\colon |z|\le C$. Proposition~\ref{pro:sols_bound_ext} and Proposition~\ref{pro:sols_bound_boundary} imply that $x_{\ep}\le C\ep^{1/3}$ for $z\notin D$, $|z|\le C$. We obtain a bound
\begin{equation}
\pi\cdot\rho_{\ep}(z)\le\dfrac{|T_1(z,x^2_\ep)|^2}{\ep/(2x_{\ep}^3)}+x^2_{\ep}\cdot T_2(z,x^2_\ep)
\le C|T_1(z,x^2_\ep)|^2+C\ep^{2/3} T_2(z,x^2_\ep).
\end{equation}
The fact that $|T_j(z,x^2_{\ep})|$ are uniformly bounded follows from (C3) and (C2).

\section{Proof of Proposition \ref{l:uni_conv}}\label{sc:uni_conv_prf}

Recall that we have a random matrix $X_n=A_n+H_n$ with eigenvalues $z_1,\ldots,z_n$ and 
$$
Y(z)=(X_n-z)(X_n-z)^*.
$$ 
Also, we have a function $h(z)\in C_c^2(\mathbb{C})$ with a compact support $E$. According to \eqref{potent_formula}, it suffices to prove that
\begin{equation*}
\lim_{\ep\to 0} \int_E  h(z)\cdot  \dfrac{1}{4\pi n}\Delta\E_{H_n}\Bigl\{\log\det (Y(z)+\ep^2)\Bigr\}\,\dz = \int_E h(z)\cdot \dfrac{1}{4\pi n}\Delta  \E_{H_n}\Bigl\{\log\det Y(z)\Bigr\}\,\dz
\end{equation*}
unifromly in $n$ for $n\ge n_0$, which is equivalent to the fact that
\begin{equation}
\label{l:lim_val_toprove}
\lim_{\ep\to 0}\dfrac{1}{4\pi n}\E_{H_n}\Bigl\{ \int_E \Delta h(z)\cdot  \log\det (Y(z)+\ep^2)\,\dz \Bigr\}= \dfrac{1}{4\pi n}\E_{H_n}\Bigl\{\int_E \Delta h(z)\cdot \log\det Y(z)\,\dz\Bigr\}
\end{equation}
uniformly in $n$ for $n\ge n_0$.

Let us split the proof into two parts: in Subsection~\ref{sbsc:lim_val} we check that the convergence~\eqref{l:lim_val_toprove} holds for each $n$, and in Subsection~\ref{sbsc:uni_conv} we check that the convergence is uniform in $n$.

\subsection{Pointwise convergence}\label{sbsc:lim_val}

Since $\log\det (Y(z)+\ep^2)\to \log\det Y(z)$ pointwise as $\ep\to 0$, it suffices to estimate the integrand in~\eqref{l:lim_val_toprove} by some integrable over $E$ function such that the integral has a finite $H_n$-expectation.

Since $h$ is smooth with compact support, we have $|\Delta h(z)|\le C$, and it is sufficient to estimate $\log\det (Y(z)+\ep^2)$. For $\ep\in(0;1)$ we have 
\begin{equation}\label{logdet_est1}
|\log\det (Y(z)+\ep^2)|\le |\log\det Y(z)|+\log\det (Y(z)+1).
\end{equation}
Obviously,  $\log\det (Y(z)+1)\le C$ for a fixed $X_n$ and $z\in E$, while
$
|\log\det Y(z)|\le \sum\limits_{j=1}^n \bigl|\log |z-z_j|^2\bigr|,
$
which means that $|\log\det Y(z)|$
has integrable singularities at $z_1,\ldots,z_n$. This shows that for a fixed $X_n$, $\log\det (Y(z)+\ep^2)$ is dominated by some integrable over $E$ function, and thus
\begin{equation*}
\lim_{\ep\to 0}\int_E \Delta h(z)\cdot  \log\det (Y(z)+\ep^2)\,\dz = \int_E \Delta h(z)\cdot \log\det Y(z)\,\dz.
\end{equation*}
Now we need to estimate $\int_E \Delta h(z)\cdot  \log\det (Y(z)+\ep^2)\,\dz$. We can write $\log\det Y(z)=A+B$, where 
$$
A=\sum\limits_{j\colon |z-z_j|>1} \log |z-z_j|^2,\quad 
B=\sum\limits_{j\colon |z-z_j|\le 1} \log |z-z_j|^2.
$$ 
Then $A>0$, $B\le 0$ and $|A+B|\le A-B=A+B-2B$, which means that
\begin{equation*}
|\log \det Y(z)|\le \log\det Y(z)-2\sum_{j\colon |z-z_j|\le1} \log |z-z_j|^2.
\end{equation*}
Using the inequality above, \eqref{logdet_est1} and an obvious inequality $\log x<x$, after integrating we obtain
\begin{equation*}
\begin{split}
\Big|\int_E \Delta h(z)\cdot  \log\det (Y(z)+\ep^2)\,\dz\Big|\le 
C\Bigl(&\int_E \det Y(z)\,\dz+\int_E \det (Y(z)+1)\,\dz-\\
&-2\sum_{j=1}^n\int_{|z-z_j|\le 1} \log|z-z_j|^2\,\dz\Bigr)\le\\
\le C\Bigl(&\int_Q \det Y(z)\,\dz+\int_Q \det (Y(z)+1)\,\dz+2n\pi\Bigr),
\end{split}
\end{equation*}
where $Q=\{z\in\mathbb{C}: |\Re z|\le  r,\ |\Im z|\le r\}$ is a square containing $E$. Is is easy to see that $\int_Q \det Y(z)\,\dz$ and $\int_Q \det (Y(z)+1)\,\dz$ are polynomials depending on the entries of $X_n$, i.e. on $\{a_{ij}\}$ and $\{h_{ij}\}$. Since $h_{ij}$ are independent Gaussian random variables, we have $\E_{H_n}\{\prod |h_{ij}|^{k_{ij}}\}<\infty$, thus 
\begin{equation*}
\E_{H_n}\Bigl\{\int_Q \det Y(z)\,\dz+\int_Q \det (Y(z)+1)\,\dz+2n\pi\Bigr\}<\infty.
\end{equation*}
Dominated convergence theorem then gives us \eqref{l:lim_val_toprove} for fixed $n$, which finishes the proof.

\subsection{Uniform convergence}\label{sbsc:uni_conv}

It suffices to prove that $\mathsf{\Phi}(\ep,n,z)=\dfrac{1}{n}\E_{H_n}\Bigl\{\log\det (Y(z)+\ep^2)\Bigr\}$ converges uniformly for $n\ge n_0$ and $z\in E$ as $\ep\to 0$.
Set
$$
T(\ep,n,z)=\partial_{\ep}\mathsf{\Phi}(\ep,n,z)=\E_{H_n}\Bigl\{2\ep\cdot\tr(Y(z)+\ep^2)^{-1}\Bigr\}
.$$
Similarly to \eqref{rho_en_form} one can prove that
$
T(\ep,n,z)=\dfrac{1}{n}\Bigl(\partial_{\ep_1}\mathcal{Z}(\ep,\ep_1,z,z)\Bigr)\Bigr|_{\ep_1=\ep},
$
where $\mathcal{Z}(\ep,\ep_1,z,z)$ is defined in \eqref{def_Z}.
By differentiating \eqref{I_rep_Z} with respect to $\ep_1$, we obtain the following integral representation:
\begin{equation}\label{T_epn_repr1}
\begin{split}
T(\ep,n,z)
=&-\dfrac{4n^3}{\pi^3}\int_0^\infty dR\int_{-\infty}^\infty dv\,du_1\,du_2\,ds\int_L dt\cdot \dfrac{R}{\sqrt{v^2+4R}}\cdot \varphi(u_1^2+u_2^2,s^2-t^2,z,z)\times\\
&\times (u_1+\ep)\cdot \exp\left\{n\Bigl(\L_n(u_1^2+u_2^2)-(u_1+\ep)^2-u_2^2\Bigr)\right\}\times\\
&\times  \exp\left\{-n\Bigl(\L_n(s^2-t^2)+(t-i\ep)^2+(R+it+\ep)^2+\ep v^2\Bigr)\right\},
\end{split}
\end{equation}
Suppose that $T(\ep,n,z)\le \dfrac{C}{\ep^{1-\alpha}}$ for $\ep\le \ep_0$, $n\ge n_0$, $z\in E$ and some fixed $\alpha,C>0$. Then, for $\ep_1<\ep_2\le \ep_0$ we have
\begin{equation*}
|\mathsf{\Phi}(\ep_2,n,z)-\mathsf{\Phi}(\ep_1,n,z)|=\Bigl|\int_{\ep_1}^{\ep_2} T(\ep,n,z) \,d\ep\Bigr|\le C \int_{\ep_1}^{\ep_2} \dfrac{d\ep}{\ep^{1-\alpha}}\le \dfrac{C}{\alpha}\,\ep_2^\alpha,
\end{equation*}
which means that $\mathsf{\Phi}(\ep,n,z)$ converges uniformly for $n\ge n_0$, $z\in E$ as $\ep\to 0$.

As we see, it suffices to prove the following facts:

\begin{thm}\label{thm:uni_bound_far}
Fix an arbitrary $d>0$ and set $E_d:=\{z\in E\mid \dist(z,\partial D)\ge d\}$. Then there exist $n_0$, $\ep_0>0$ and $C>0$ such that 
$$
\ep^{1/2}\,|T(\ep,n,z)|\le C
$$ 
for $n\ge n_0$, $0<\ep\le\ep_0$ and $z\in E_d$.
\end{thm}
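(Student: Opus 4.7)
The plan is to apply the saddle-point machinery developed in Subsection~\ref{sbsc:<ge^nF>} to the integral representation \eqref{T_epn_repr1} of $T(\ep,n,z)$, which has exactly the same structure as the representation of $\mathcal{Z}(\ep,\ep,z,z)$ in \eqref{Z_repr_aver} except for the extra factor $-2(u_1+\ep)$ inside the integrand (arising from differentiating $\exp(-n(u_1+\ep_1)^2)$ in $\ep_1$). Since \eqref{<phi_exp_nF>=1} gives $\mathcal{I}(\varphi(z,z)\,e^{n\F_n(z,z)})=1$, I will rewrite
\begin{equation*}
T(\ep,n,z)\;=\;-2\,\frac{\mathcal{I}\bigl((u_1+\ep)\,\varphi(z,z)\,e^{n\F_n(z,z)}\bigr)}{\mathcal{I}\bigl(\varphi(z,z)\,e^{n\F_n(z,z)}\bigr)}
\end{equation*}
and analyze both integrals with the same saddle, tracking the $\ep$-dependence carefully.

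\textbf{Leading-order evaluation.} First, I will perform the same changes of variables $u_1=w^2-u$, $u_2=\sqrt{u^2-u_1^2}$, the same contour shifts for $t$ and $r$, and the same restriction to the saddle neighborhood as in Subsection~\ref{sbsc:<ge^nF>}. The saddle point sits at $u=x_{\ep,n}$, $t=ix_{\ep,n}$, $s=r=v=w=0$, where $u_1+\ep=-x_{\ep,n}+\ep$. Applying the analogue of \eqref{form_<ge^nF>} with $g_n=(u_1+\ep)\varphi$ (a bounded analytic function of $u,w,t,s$ in the new coordinates) will give
\begin{equation*}
T(\ep,n,z)\;=\;2(x_{\ep,n}-\ep)+R(\ep,n,z).
\end{equation*}
For $z\in\Int D\cap E_d$, Proposition~\ref{pro:sols_bound_int} provides $\kappa_0\le x_{\ep,n}\le C$, so the leading term is $O(1)$; for $z\in E_d\setminus\overline{D}$, Proposition~\ref{pro:sols_bound_ext} yields $x_{\ep,n}\le\ep(1+K_0)$, so the leading term is $O(\ep)$. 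In both cases $\ep^{1/2}\,|2(x_{\ep,n}-\ep)|\le C$.

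\textbf{Main obstacle.} The hard part will be obtaining an $\ep$-uniform bound on the remainder $R(\ep,n,z)$. After the rescaling $u=x_{\ep,n}+n^{-1/2}\wt u$, $w=n^{-1/2}\wt w$, one has $u_1+\ep=(\ep-x_{\ep,n})+n^{-1}\wt w^2-n^{-1/2}\wt u$, while the Gaussian weight in $\wt w$ is $e^{-2\ep\wt w^2}$ of width $(2\ep)^{-1/2}$, diverging as $\ep\to 0$. A naive estimate of the $n^{-1}\wt w^2$ fluctuation gives $|R|\lesssim(\ep n)^{-1}$, which is insufficient for the claim. The crucial point is that the $\wt w$-integrations in the numerator and the normalizing denominator share the same diverging $(\ep n)^{-1/2}$ prefactor, which cancels in the ratio; making this cancellation precise, and handling the tail $|\wt w|>\log n$ (where the inequalities \eqref{F_1_ineq_rstr}--\eqref{F_2_ineq_rstr} cease to provide enough suppression once $\ep\lesssim n^{-1}$), should yield $|R|\le C(\ep n)^{-1/2}$, so that $\ep^{1/2}|R|\le Cn^{-1/2}$. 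The delicate regime $\ep\lesssim n^{-1}$ is where I expect the main technical work: there I will bound the excess via $|u_1+\ep|\le n^{-1}\wt w^2+C$ and exploit the weaker but still effective Gaussian suppression $e^{-2\ep\wt w^2}$, which together with the analogous bound on the denominator closes the argument.
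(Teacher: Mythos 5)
Your high-level plan—using the representation $T=-2\,\mathcal{I}\bigl((u_1+\ep)\,\varphi\,e^{n\F_n}\bigr)$ together with the normalization $\mathcal{I}(\varphi\,e^{n\F_n})=1$ and subtracting the saddle value of $u_1+\ep$—is essentially the paper's trick from Case~2 of Proposition~\ref{pro:uni_bound_close_lar} and Proposition~\ref{pro:uni_bound_far_out_lar}, so the strategy is sound for the large-$\ep$ regime. However, there is a genuine gap in the regime $\ep\lesssim n^{-1}$, and you have misdiagnosed where it lies: the problem is not a tail estimate to be patched, but that your claimed leading order $2(x_{\ep,n}-\ep)$ is wrong there. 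After the change $u=\sqrt{u_1^2+u_2^2}$, $w=\sqrt{u_1+u}$, the quantity $u_1+\ep=w^2-(u-\ep)$ is a genuine function of $w$, whereas formula~\eqref{form_<ge^nF>} is stated for $g_n$ depending only on $(u_1^2+u_2^2,s^2-t^2)$; it does not apply. The $w$-marginal of the "measure" is governed exactly by $\int_0^{\sqrt{2u}}e^{-2n\ep w^2}(2u-w^2)^{-1/2}\,dw=\tfrac{\pi}{2}e^{-2n\ep u}I_0(2n\ep u)$, and the corresponding mean of $w^2$ is $u\bigl(1-I_1(2n\ep u)/I_0(2n\ep u)\bigr)$, which equals $\approx(4n\ep)^{-1}$ only when $n\ep\gg1$; when $n\ep\lesssim1$ it is $\approx u\approx x_{\ep,n}$. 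Thus $\mathcal{I}\bigl((u_1+x_{\ep,n})\varphi\,e^{n\F_n}\bigr)$ is of the same order as the putative leading term, not a remainder of order $(n\ep)^{-1/2}$; the correct leading behavior is $T\approx 2x_{\ep,n}\,I_1(2n\ep x_{\ep,n})/I_0(2n\ep x_{\ep,n})-2\ep\approx 2\ep(nx_{\ep,n}^2-1)$ for $n\ep\ll1$, which is small but for a completely different reason than the one you propose.

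Relatedly, the "restriction to the saddle neighborhood" step from Subsection~\ref{sbsc:<ge^nF>} is not available for small $\ep$: the bounds that localize $v$ and $w$ on the scale $n^{-1/2}\log n$ rely on $n\ep v^2$ and $2n\ep w^2$ exceeding $c\log^2 n$, which requires $\ep$ bounded below. The paper's Propositions~\ref{pro:uni_bound_far_in_sm}--\ref{pro:uni_bound_close_small} circumvent this by never introducing $(u,w)$ when $\ep<n^{-1}$: they pass to polar coordinates $(u_1,u_2)=(u\cos\theta,u\sin\theta)$, integrate $\theta$ and $v$ exactly (producing Bessel factors in $\Phi_n$), and run the saddle-point analysis for $F_{n,1}(u)=\L_n(u^2)-u^2$ with the saddle at $x_{0,n}$ (for $z\in D$) or $u=0$ (for $z\notin D$)—not at $x_{\ep,n}$. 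This is a different parametrization, not a refinement of the one in Subsection~\ref{sbsc:<ge^nF>}. To close your argument you would effectively have to redo this Bessel computation, so the step you flag as "main technical work" is in fact a missing idea rather than a quantitative estimate.
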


\begin{thm}\label{thm:uni_bound_near}
There exist $d>0$, $n_0$, $\ep_0>0$ and $C>0$ such that 
$$
\ep^{5/6}\,|T(\ep,n,z)|\le C
$$
 for $n\ge n_0$, $0<\ep\le\ep_0$ and $z\in E\setminus E_d$.
\end{thm}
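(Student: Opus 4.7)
The proof follows the saddle-point framework of Subsection~\ref{sbsc:<ge^nF>} applied to the integral representation \eqref{T_epn_repr1}, with careful tracking of the $\ep$-dependence of every saddle parameter as $z$ approaches $\partial D$.

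I would first choose $d > 0$ small enough so that Proposition~\ref{pro:sols_bound_boundary} applies on $E \setminus E_d$ and yields the dichotomy $c\ep^{1/3} \le x_{\ep,n} \le C$ (when $\tr G(0) \ge 1$) or $(1+c)\ep \le x_{\ep,n} \le C\ep^{1/3}$ (when $\tr G(0) < 1$). Then I would repeat the change of variables $(u_1, u_2) \to (u, w)$ with $u_1 = w^2 - u$, the contour deformations $t \to L_t$ and $r \to \wt{\mathcal R}(t)$, and the localisation to the $n^{-1/2}\log n$-neighbourhood of the saddle $(x_{\ep,n}, ix_{\ep,n}, 0, 0, 0, 0)$ exactly as in Section~4. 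This reduces the problem to a six-fold Gaussian integral with variance parameters $\kappa_1, \kappa_2$ from \eqref{def_k_1},~\eqref{def_k_2}. After rescaling by $\sqrt n$, the new multiplicative factor expands as $-2(u_1+\ep) = -2(\ep - x_{\ep,n}) - 2(\wt w^{\,2}/n - \wt u/\sqrt n)$, and its saddle value $-2(\ep - x_{\ep,n})$ is bounded in absolute value by $2C$; exploiting the normalisation \eqref{<phi_exp_nF>=1} to cancel the common Gaussian volume between numerator and denominator, the leading term in the saddle expansion of $T$ is the bounded quantity $-2(\ep - x_{\ep,n})$.

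The key new ingredient is the uniform lower bound $\kappa_1 \ge c\,\ep^{2/3}$ on $E \setminus E_d$: when $x_{\ep,n} \le C\ep^{1/3}$ the summand $\ep/x_{\ep,n}$ already delivers it, and when $x_{\ep,n} \ge c\ep^{1/3}$ (which forces $\tr G(0) \ge 1$), the identity $\tr G(0) - \tr G(x_{\ep,n}^2) = x_{\ep,n}^{2}\,\tr G^2(\xi)$ combined with the saddle equation \eqref{eq_x_en} forces $x_{\ep,n}^{2}\,\tr G^2(x_{\ep,n}^2) \ge c\ep^{2/3}$, while $\kappa_2 = \tr G(x_{\ep,n}^2)$ stays bounded below uniformly. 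Bounding the cubic and quartic Taylor coefficients of $F_{n,1}, F_{n,2}$ via $\tr G^k(x_{\ep,n}^2) \le x_{\ep,n}^{-2(k-1)}$ (which gives $|F_{n,1}^{(k)}(x_{\ep,n})| \le C x_{\ep,n}^{-(k-2)}$ for $k=3,4$) and pairing them with the Gaussian moments $\int \wt u^{\,2m}\,e^{-\kappa_1 \wt u^{\,2}}\,d\wt u \le C\,\kappa_1^{-(2m+1)/2}$, and using the vanishing of odd Gaussian moments, a careful error analysis then produces the stated bound $|T(\ep,n,z)| \le C\ep^{-5/6}$ uniformly for $n \ge n_0$ and $\ep \le \ep_0$.

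The main technical obstacle is that two saddle directions ($\wt u$ and $\wt t$) simultaneously become soft at rate $\kappa_1 \sim \ep^{2/3}$, while the Taylor coefficients of $F_{n,1}$ blow up like $x_{\ep,n}^{-(k-2)}$; a naive termwise absolute-value estimate of the resulting mixed moments would give a strictly weaker bound than $C\ep^{-5/6}$, so one must pair each cubic and quartic contribution with the appropriate cancellation supplied by the normalisation \eqref{<phi_exp_nF>=1} before passing to absolute values, and treat the two sub-cases ($\tr G(0) \ge 1$ versus $\tr G(0) < 1$) separately since the dominant summand in $\kappa_1$ differs. A secondary point is to re-verify the descent inequalities \eqref{F_1_ineq_rstr}--\eqref{ineq_8} quantitatively with the now $\ep$-dependent $\kappa_1$, ensuring that the tail outside the saddle neighbourhood still decays as $O(e^{-c\log^2 n})$ uniformly in $\ep \le \ep_0$.
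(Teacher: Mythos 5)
Your proposal correctly identifies several organising ideas of the paper's argument (saddle at $x_{\ep,n}$, the $\ep$-degenerate Hessian $\kappa_1$, the $\tr G(0)\ge 1$ versus $\tr G(0)<1$ dichotomy from Proposition~\ref{pro:sols_bound_boundary}, and exploiting $\llangle 1\rrangle=1$ to isolate the constant part of the factor $u_1+\ep$). Nevertheless, several steps as described would not go through.

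First, the paper treats $\ep<n^{-1}$ and $\ep>n^{-1}$ by entirely separate localisation schemes (Propositions~\ref{pro:uni_bound_far_in_sm}--\ref{pro:uni_bound_close_small} versus~\ref{pro:uni_bound_far_in_lar}--\ref{pro:uni_bound_far_out_lar}); for small $\ep$ the $(u_1,u_2)\to(u,\theta)$ polar change is used, the effective saddle in $u$ is $x_{0,n}$ or $0$ (not $x_{\ep,n}$), and the $v$-, $w$-integrals cannot be integrated out the way your computation of $-2(u_1+\ep)$ presumes. Your sketch only covers the $\ep>n^{-1}$ regime and leaves the other half of the $\ep$-range unaddressed. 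Second, the $(u_1,u_2)\to(u,w)$ change with $w=\sqrt{u_1+u}$ has a Jacobian singular near $u=0$; when $\tr G(0)<1$ the saddle $x_{\ep,n}\le C\ep^{1/3}$ is small, the saddle-neighbourhood has width comparable to $x_{\ep,n}$, and the Jacobian is no longer controllable. This is precisely why the paper reverts to the original $(u_1,u_2)$ coordinates in Case 2 of Proposition~\ref{pro:uni_bound_close_lar}. Third, the claim that both $\wt u$- and $\wt t$-directions soften at a common rate $\kappa_1\sim\ep^{2/3}$ is an oversimplification that would derail the power count: on the tilted $t$-contour the real quadratic coefficient $\Re\kappa_3$ scales like $x_{\ep,n}^3$ (not $x_{\ep,n}^2$), and when $x_{\ep,n}\lesssim\ep^{1/5}$ the cubic and quartic terms of $F_{n,2}$ are not subdominant to the quadratic one, forcing the $n^{-1/4}$ quartic scaling of Subcases~1c, 2a. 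Rescaling solely by the square root of $\kappa_1$, as you propose, leaves the cubic term $\tfrac{1}{6}F_{n,1}'''\,\kappa_1^{-3/2}n^{-1/2}\wt u^3$ of size $x_{\ep,n}^{-4}n^{-1/2}$, which is $\gtrsim 1$ when $x_{\ep,n}\lesssim n^{-1/8}$ — a regime that actually occurs near $\partial D$. Finally, the fine bounds on $\wt\varphi$ and its derivatives at the saddle from Lemma~\ref{phi_bounds} and Remark~\ref{rem:phi_bounds}, e.g.\ $\wt\varphi(x_{\ep,n},ix_{\ep,n},0)=O(\ep^2/x_{\ep,n}^2+\ep x_{\ep,n})$, and the two alternative bounds on $\int_{\mathcal R_{1+}(t)}e^{-nr^2}\,dr$, supply the remaining powers of $\ep$ and are indispensable to reach $\ep^{-5/6}$; none of these enter your error analysis, so there is no route in the proposal from the Gaussian-moment estimates to the stated exponent.
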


\noindent
\textbf{Outline of the proof.}
We split the proof into several cases depending on the size of $\ep$ with respect to $n$ and the location of $z$. In each of the parts the plan similar to the one in Subsection~\ref{sbsc:<ge^nF>}  is implemented. We make the following steps:
\begin{enumerate}

\item
We make minor changes of variables in \eqref{T_epn_repr1}. In particular, for small $\ep$ we take $r=R+it$. After that we shift the $t$-contour and $r$-contour so that $L_t\subset \{z\colon \Im z\ge |\Re z|\}$ and $\wt{\mathcal{R}}(t)=\mathcal{R}_1(t)\cup\mathcal{R}_2(t)$, where $\mathcal{R}_1(t)=[it;0]$ and $\mathcal{R}_2(t)=\{\rho: \rho>0\}$. Then we have $\Re r^2\ge 0$ and $\Re (r-it)\ge 0$. For large $\ep$ we take $r=R+it+\ep$, shift the $t$-contour and $r$-contour so that $L_t\subset \{z\colon \Im z\ge |\Re z|+\ep\}$ and $\wt{\mathcal{R}}(t)=\mathcal{R}_1(t)\cup\mathcal{R}_2(t)$, where $\mathcal{R}_1(t)=[it+\ep;0]$ and $\mathcal{R}_2(t)=\{\rho: \rho>0\}$. Then we have $\Re r^2\ge 0$ and $\Re (r-it-\ep)\ge 0$.

\item 
Then it suffices to prove that
$n^\alpha\ep^{\beta}\int_{\mathcal{V}} \Phi_n(\mathbf{x})\,e^{n F_n(\mathbf{x})}\,d\mathbf{x}\le C$ uniformly in $n,\ep,z$, where $\mathbf{x}=(u,t,s,r)$ or $(u_1,u_2,t,s,r)$, $\mathcal{V}$ is a product of certain contours and $\Phi_n$, $F_n$ are some functions.

\item
We prove that $n F_n(\mathbf{x})\le -c\log^2 n$ for $\mathbf{x}\in\mathcal{V}$ lying outside of the neighbourhood 
$$
U=\{|u-x_*|<n^{-\alpha_1}\log^{\beta_1} n,\ |s|<n^{-\alpha_2}\log^{\beta_2} n,\ |t-ix_*|<n^{-\alpha_3}\log^{\beta_3} n, r\in \mathcal{R}_{1+}(t)\}
$$ 
of the saddle point $u=x_*$,  $t=ix_*$, $s=0$, $r=0$, where $\mathcal{R}_{1+}(t)=\mathcal{R}_1(t)\cup [0;n^{-1/2}\log n]$. This allows us to restrict the integration to the neighbourhood $U$, if $n^\alpha\ep^\beta\le Cn^\gamma$ for some $\gamma$.

\item 
We make a change $u=x_*+n^{-\alpha_1}\wt{u}$, $t=ix_*+n^{-\alpha_3}\wt{t}$, $s=n^{-\alpha_2}\wt{s}$  and estimate $\Phi_n(\mathbf{x})$ by expanding it into Taylor series:
$$
\Phi_n(\mathbf{x})\le f(\ep,x_*,n)\cdot \mathcal{P}(\wt{u},\wt{t},\wt{s}),
$$
where $\mathcal{P}(\wt{u},\wt{t},\wt{s})$ stands for an arbitrary polynomial in $|\wt{u}|$, $|\wt{s}|$, $|\wt{t}|$. We also expand $nF_n(\mathbf{x})$:
$$
nF_n(\mathbf{x})\le -a_1\wt{u}^{k_1}-a_2\wt{t}^{k_2}-a_3\wt{s}^2-nr^2+O(n^{-1/4}\log^k n),
$$
where $k_j\in\{2;4\}$ and $a_j$ are bounded from below uniformly by some positive constant. This gives the bound
$$
n^\alpha\ep^{\beta}\int_{U} \Phi_n(\mathbf{x})\,e^{n F_n(\mathbf{x})}\,d\mathbf{x}
\le \dfrac{n^{\alpha}\ep^{\beta}}{n^{\alpha_1}n^{\alpha_2}n^{\alpha_3}}\,f(\ep,x_*,n)\cdot\int_{\wt{U}} \mathcal{P}(\wt{u},\wt{t},\wt{s})\, e^{-a_1\wt{u}^{k_1}-a_2\wt{t}^{k_2}-a_3\wt{s}^2-nr^2}\,d\wt{\mathbf{x}}
$$

\item 
Finally, we either estimate $\int_{\mathcal{R}_{1+}(t)} e^{-nr^2}\,dr$ as $C(|t|+n^{-1/2})$ simply by considering the length of $\mathcal{R}_1(t)$  or write
$$
\left|\int_{\mathcal{R}_{1}(t)} e^{-nr^2}\,dr\right|\le \int_0^{|t|} \exp\{-\Re(\tfrac{-t^2}{|t|^2})\,n\rho^2\}\,d\rho\le \dfrac{Cn^{-1/2}}{\sqrt{\Re(\tfrac{-t^2}{|t|^2})}}.
$$
The first bound is better if $x_*$ is small and the second one is better 
if $\mathcal{R}_1(t)$ is `far enough' from ${\{z\in\mathbb{C}\colon \arg z=\frac{3\pi}{4}\}}$. In both cases  we get a bound of the form 
$$
\int_{\mathcal{R}_{1+}(t)} e^{-nr^2}\,dr\le g(\ep,x_*,n)\mathcal{P}(\wt{u},\wt{t},\wt{s}),
$$ 
and this bound implies that
$$
n^\alpha\ep^{\beta}\int_{U} \Phi_n(\mathbf{x})\,e^{n F_n(\mathbf{x})}\,d\mathbf{x}
\le \dfrac{n^{\alpha}\ep^{\beta}}{n^{\alpha_1}n^{\alpha_2}n^{\alpha_3}}\cdot f(\ep,x_*,n)\,g(\ep,x_*,n)\cdot\int_{\wt{U}} \mathcal{P}(\wt{u},\wt{t},\wt{s})\, e^{-a_1\wt{u}^{k_1}-a_2\wt{t}^{k_2}-a_3\wt{s}^2}\,d\wt{\mathbf{x}}.
$$
Then it suffices to prove that $\dfrac{n^{\alpha}\ep^{\beta}}{n^{\alpha_1}n^{\alpha_2}n^{\alpha_3}}\cdot f(\ep,x_*,n)\,g(\ep,x_*,n)$ is bounded uniformly in $\ep,n,z$.

\end{enumerate}

Set $\wt{\varphi}(u,t,s)=\varphi(u^2,s^2-t^2,z,z)$, where $\varphi(x,y,z,z)$ is defined in \eqref{notat_I_rep_Z}. In order to obtain a bound on $\Phi_n(\mathbf{x})$, we need to expand $\wt{\varphi}(u,t,s)$ in the neighbourhood of the saddle point $(x_*,ix_*,0)$, where $x_*$ is either $x_{\ep,n}$ or $x_{0,n}$. This bounds on the derivatives of $\varphi(x,y,z,z)$ are given in the following lemma.

\begin{lem}\label{phi_bounds}
Let $\varphi_1(x,y)=\varphi_1(x,y,z,z)$, $\varphi_2(x,y)=\varphi_2(x,y,z,z)$ be the functions defined as in \eqref{notat_I_rep_Z}, and let $x_{0,n}$, $x_{\ep,n}$ be the roots of \eqref{eq_x_0n} and \eqref{eq_x_en} respectively as in Subsection~\ref{sbsc:saddle}. Then $\varphi_{1,2}(x,y)$ together with all their derivatives are bounded at $(x_{0,n}^2,x_{0,n}^2)$ and $(x_{\ep,n}^2,x_{\ep,n}^2)$. Moreover,
\begin{equation*}
\begin{split}
&\varphi_1(x_{0,n}^2,x_{0,n}^2)=0,\quad 
\partial_x\varphi_1(x_{0,n}^2,x_{0,n}^2)=
\partial_y\varphi_1(x_{0,n}^2,x_{0,n}^2)=O(x_{0,n}^2);\\
&\varphi_1(x_{\ep,n}^2,x_{\ep,n}^2)=O\Bigl(\frac{\ep^2}{x_{\ep,n}^2}+\ep x_{\ep,n}\Bigr),\quad
\partial_x\varphi_1(x_{\ep,n}^2,x_{\ep,n}^2)=
\partial_y\varphi_1(x_{\ep,n}^2,x_{\ep,n}^2)=O\Bigl(\frac{\ep}{x_{\ep,n}}+x_{\ep,n}^2\Bigr).
\end{split}
\end{equation*} 
\end{lem}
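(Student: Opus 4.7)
The plan is to reduce everything to direct algebraic computation starting from the simplified closed-form expressions for $\varphi_1(x,y,z,z)$ and $\varphi_2(x,y,z,z)$ recorded in Remark~\ref{rm:phi_z_1=z}, combined with the saddle point identities $\tr G(x_{0,n}^2)=1$ and $1-\tr G(x_{\ep,n}^2)=\ep/x_{\ep,n}$, and the uniform lower bound $x_{0,n},x_{\ep,n}\ge \kappa_0>0$ from Proposition~\ref{pro:sols_bound_int}.

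First I would verify uniform boundedness of $\varphi_{1,2}$ together with all of their $x,y$ partial derivatives. The key point is that $\partial_x G(z,x)=-G^2(z,x)$, so iterated differentiation of the Remark formulas produces polynomials in the traces $\tr G^k(z,x)$, $\tr G^\ell(z,y)$ and $\tr G^k(z,x)G^\ell(z,y)$. Because $G(z,x)$ is positive with $\|G(z,x)\|\le x^{-1}$, each such trace is bounded by $x^{-k}y^{-\ell}$ (via Cauchy--Schwarz for the mixed trace), and evaluating at the critical points $x=y=x_{0,n}^2$ or $x_{\ep,n}^2$ gives a uniform bound of order $\kappa_0^{-2(k+\ell)}$. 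For $\varphi_2$ the overall factor $(x-y)$ from Remark~\ref{rm:phi_z_1=z} vanishes on the diagonal $x=y$, so $\varphi_2$ and its first partials are automatically controlled at both points.

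Next I would substitute the saddle point identities into the Remark formula for $\varphi_1$. At $(x_{0,n}^2,x_{0,n}^2)$, using $\tr G(x_{0,n}^2)=1$, the first bracket collapses to $x_{0,n}^2\tr G^2(x_{0,n}^2)$, so
\[
\varphi_1(x_{0,n}^2,x_{0,n}^2)=\bigl(x_{0,n}^2\tr G^2\bigr)^2-x_{0,n}^4\bigl(\tr G^2\bigr)^2=0.
\]
Differentiating the Remark formula in $x$ and evaluating at the same point, the derivative of the first bracket is $\tr G(x)G(y)-x\tr G^2(x)G(y)$, which on the diagonal equals $\tr G^2-x_{0,n}^2\tr G^3$; collecting the three resulting terms one finds the $\pm 2x_{0,n}^4\tr G^2\tr G^3$ contributions cancel and the remainder is $x_{0,n}^2(\tr G^2)^2=O(x_{0,n}^2)$. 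The parallel computation for $\partial_y$, which picks up an extra $\tr G^2(y)$ contribution from the $-\tr G(y)$ piece, gives the same estimate.

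Finally, repeating the substitution at $(x_{\ep,n}^2,x_{\ep,n}^2)$, the first bracket becomes $\ep/x_{\ep,n}+x_{\ep,n}^2\tr G^2$ and squaring yields
\[
\varphi_1(x_{\ep,n}^2,x_{\ep,n}^2)=\frac{\ep^2}{x_{\ep,n}^2}+2\ep x_{\ep,n}\tr G^2=O\Bigl(\frac{\ep^2}{x_{\ep,n}^2}+\ep x_{\ep,n}\Bigr),
\]
and the analogous differentiation produces the three separately controlled pieces of sizes $\ep/x_{\ep,n}$, $x_{\ep,n}^2$, and $x_{\ep,n}^4$, giving $\partial_x\varphi_1=O(\ep/x_{\ep,n}+x_{\ep,n}^2)$. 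There is no conceptual obstacle: the argument is a bookkeeping exercise. The only place that requires care is confirming the exact cancellations at $(x_{0,n}^2,x_{0,n}^2)$ that force both $\varphi_1$ and its first partials to reach the stated small orders rather than merely being uniformly bounded.
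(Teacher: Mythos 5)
Your algebraic computations — substituting the saddle-point identities $1-\tr G(x_{0,n}^2)=0$ and $1-\tr G(x_{\ep,n}^2)=\ep/x_{\ep,n}$ into the Remark~\ref{rm:phi_z_1=z} formulas, verifying the cancellations to get $\varphi_1(x_{0,n}^2,x_{0,n}^2)=0$ and $\partial_x\varphi_1(x_{0,n}^2,x_{0,n}^2)=x_{0,n}^2(\tr G^2)^2$, and the parallel computation at $x_{\ep,n}^2$ — are precisely the computations the paper leaves to the reader, and they are correct. (One small slip: the three pieces of $\partial_x\varphi_1(x_{\ep,n}^2,x_{\ep,n}^2)$ have sizes $\ep/x_{\ep,n}$, $\ep x_{\ep,n}$, and $x_{\ep,n}^2$, not $x_{\ep,n}^4$; the conclusion $O(\ep/x_{\ep,n}+x_{\ep,n}^2)$ is unaffected since $\ep x_{\ep,n}\le\ep/x_{\ep,n}$.)

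However, there is a genuine gap in the way you justify the uniform boundedness of the traces $\tr G^k(x_{0,n}^2)$, $\tr G^k(x_{\ep,n}^2)$. You invoke the lower bound $x_{0,n},x_{\ep,n}\ge\kappa_0>0$ from Proposition~\ref{pro:sols_bound_int} and then use $\|G(z,x)\|\le x^{-1}$ to get $\tr G^k\le\kappa_0^{-2k}$. But Proposition~\ref{pro:sols_bound_int} is stated only for $z\in E_{in}$, a fixed compact subset of $\Int D$; it does not apply uniformly. The lemma, via Remark~\ref{rem:phi_bounds}, is used in Propositions~\ref{pro:uni_bound_close_small} and \ref{pro:uni_bound_close_lar} for $z$ near $\partial D$, where $x_{0,n}$ can be as small as $n^{-1/4}$ and $x_{\ep,n}$ as small as $\ep^{1/3}$ (Proposition~\ref{pro:sols_bound_boundary}). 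In that regime your bound $\tr G^k(x_{0,n}^2)\le x_{0,n}^{-2k}$ diverges, so the argument does not establish either the claimed uniform boundedness of the derivatives of $\varphi_{1,2}$ or the interpretation of $x_{0,n}^2(\tr G^2)^2$ as $O(x_{0,n}^2)$. Notice also that if one really could assume $x_{0,n}\ge\kappa_0$, the bound $O(x_{0,n}^2)$ would be the same as $O(1)$ and the ``Moreover'' part of the lemma would carry no new information; the finer estimates matter precisely when $x_{0,n}$ is small, which is exactly the regime your argument excludes. The correct source of boundedness is conditions (C1)--(C4): when $z\in\sigma_{\epsilon_0}$, (C4) forces $x_{0,n}>\varrho_0$; when $z\notin\sigma_{\epsilon_0}$, $0\notin\supp\nu_z$ and the spectrum of $Y_0(z)$ stays bounded away from zero, so $\tr G^k(0)=\tr Y_0^{-k}(z)$ is bounded and monotonicity gives $\tr G^k(x_{0,n}^2)\le\tr G^k(0)$.

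Separately, for the first derivatives at $(x_{\ep,n}^2,x_{\ep,n}^2)$ the paper also quotes the value $O(x_{\ep,n}^2)$ in the lemma statement and uses the weaker form $O(\ep+x_{\ep,n}^3)$ for the derivatives of $\wt\varphi$ in Remark~\ref{rem:phi_bounds}, so be careful: the bound $O(\ep/x_{\ep,n}+x_{\ep,n}^2)$ is the one matching your derived expression and the lemma; your phrasing is fine modulo the typographical slip noted above.
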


\begin{proof}
The first half of the statement is obvious. Using Remark~\ref{rm:phi_z_1=z} one can get
\begin{equation*}
\begin{split}
\varphi_{1}(x,y)=&\Big(1-\tr G(y)+x\,\tr G(x)G(y)\Big)^2-xy\,(\tr G(x)G(y))^2;\\
\partial_x\varphi_{1}(x,y)&=2\Big(1-\tr G(y)+x\,\tr G(x)G(y)\Big)\Big(\tr G(x)G(y)-x\,\tr G^2(x)G(y)\Big)-\\
&-y\,(\tr G(z,x)G(z,y))^2+2xy\,\tr G(x)G(y)\,\tr G^2(x)G(y);\\
\partial_y\varphi_{1}(x,y)&=2\Big(1-\tr G(y)+x\,\tr G(x)G(y)\Big)\Big(\tr G^2(y)-x\,\tr G(x)G^2(y)\Big)-\\
&-x\,(\tr G(x)G(y))^2+2xy\,\tr G(x)G(y)\,\tr G(x)G^2(y).
\end{split}
\end{equation*}
Now it is easy to obtain more precise bounds on $\varphi_1$ and its first derivatives at $(x_{0,n}^2,x_{0,n}^2)$ and $(x_{\ep,n}^2,x_{\ep,n}^2)$ using the identities above.

\end{proof}

\begin{rema}\label{rem:phi_bounds}
Lemma~\ref{phi_bounds} implies that for the function $\wt{\varphi}(u,t,s)=\varphi(u^2,s^2-t^2,z,z)$ we have \linebreak ${\wt{\varphi}(x_{0,n},ix_{0,n},0)=O(n^{-1})}$, the first order derivatives of $\wt{\varphi}$ at $(x_{0,n},ix_{0,n},0)$ are $O(x_{0,n}^3)$, the second order derivatives of $\wt{\varphi}$ at $(x_{0,n},ix_{0,n},0)$ are $O(x_{0,n}^2)$, the third order derivatives of $\wt{\varphi}$ at $(x_{0,n},ix_{0,n},0)$ are $O(x_{0,n})$ and the higher order derivatives are bounded. Also, $\wt{\varphi}(x_{\ep,n},ix_{\ep,n},0)=O\Bigl(\dfrac{\ep^2}{x_{\ep,n}^2}+\ep x_{\ep,n}\Bigr)$,  the first order derivatives of $\wt{\varphi}$ at $(x_{\ep,n},ix_{\ep,n},0)$ are $O\bigl(\ep+x_{\ep,n}^3\bigr)$, the second order derivatives of $\wt{\varphi}$ at $(x_{\ep,n},ix_{\ep,n},0)$ are $O\Bigl(\dfrac{\ep}{x_{\ep,n}}+x_{\ep,n}^2\Bigr)$, the third order derivatives of $\wt{\varphi}$ at $(x_{\ep,n},ix_{\ep,n},0)$ are $O(x_{\ep,n})$ and the higher order derivatives are bounded.

\end{rema}

\subsubsection{Bounds on \boldmath{$T(\ep,n,z)$} for \boldmath{$\ep<n^{-1}$}}

We start with the following result:

\begin{pro}\label{pro:uni_bound_far_in_sm}
Fix an arbitrary $d>0$, and set $E_{d,in}:=\{z\in E\mid z\in D,\ \dist(z,\partial D)\ge d\}$. Then there exist $n_0\in\mathbb{N}$ and $C>0$ such that 
$$
\ep^{1/2}\,|T(\ep,n,z)|\le C
$$ 
for $n\ge n_0$, $0<\ep<n^{-1}$ and $z\in E_{d,in}$.
\end{pro}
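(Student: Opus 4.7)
My strategy is to apply the five-step outline of the author to the integral representation \eqref{T_epn_repr1}, specialized to the case $z\in\Int D$ bounded away from $\partial D$ with $\ep<n^{-1}$. Since $z\in E_{d,in}$, Proposition~\ref{pro:sols_bound_int} gives $x_{0,n}\ge\kappa_0>0$ uniformly, and the smallness of $\ep$ makes $|x_{\ep,n}-x_{0,n}|=O(\ep)=O(n^{-1})$ negligible, so I will work with the saddle $x_*=x_{0,n}$. First I would change variables $(u_1,u_2)\to(u,w)$ via $u_1=w^2-u$, $u_2=\sqrt{u^2-u_1^2}$ as in Subsection~\ref{sbsc:<ge^nF>}, then, following the small-$\ep$ prescription, substitute $r=R+it$ and shift the $t$- and $r$-contours to $L_t\subset\{\Im z\ge|\Re z|\}$ and $\wt{\mathcal{R}}(t)=[it,0]\cup[0,\infty)$. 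The critical point becomes $(u,w,t,s,r,v)=(x_{0,n},0,ix_{0,n},0,0,0)$.

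The localization arguments of Subsection~\ref{sbsc:<ge^nF>} (analogues of \eqref{F_1_ineq_rstr} and \eqref{F_2_ineq_rstr}) carry over in the $(u,t,s)$ directions, with uniform lower bounds $\kappa_1,\kappa_2\ge c>0$ inherited from $x_{0,n}\ge\kappa_0$; consequently the integration can be restricted to $|u-x_{0,n}|,|t-ix_{0,n}|,|s|<n^{-1/2}\log n$ at the cost of an exponentially small error. The main obstacle, and the reason the small-$\ep$ regime needs separate attention, is that the Gaussian factors $e^{-n\ep v^2}$ and $e^{-2n\ep w^2}$ no longer enforce concentration when $\ep<n^{-1}$, so $v$ and $w$ will have to be handled directly rather than via Laplace-type scaling.

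For $v$ I would simply use the full-line bound $\int_{-\infty}^{\infty}e^{-n\ep v^2}\,dv=O((n\ep)^{-1/2})$. For $w$, I plan to exploit the crucial cancellation between the factor $u_1+\ep=w^2-u+\ep$ and the Jacobian $u/\sqrt{2u-w^2}$: a direct computation gives
\begin{equation*}
\int_0^{\sqrt{2u}}\frac{u(w^2-u+\ep)}{\sqrt{2u-w^2}}\,dw=\frac{\pi u\ep}{2},
\end{equation*}
so restoring $e^{-2n\ep w^2}$ and estimating $|e^{-2n\ep w^2}-1|\le 2n\ep w^2$ should yield a $w$-integral of size $O(\ep)+O(n\ep)=O(n\ep)$. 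Meanwhile the $r$-integral on $\wt{\mathcal{R}}(t)$ should be $O(n^{-1/2})$ from the nondegenerate saddle at $r\approx 0$: the $\mathcal{R}_1(t)$ piece gives $O(n^{-1/2})$ using $\Re(-t^2/|t|^2)\to 1$ at $t=ix_{0,n}$ (the better of the two bounds in step~5 of the outline), while the $\mathcal{R}_2(t)$ piece is a direct Gaussian on the positive real axis. Rescaling $(u,t,s)$ by $n^{-1/2}$ contributes Jacobian $n^{-3/2}$ times $O(1)$ Gaussian integrals.

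To make the final count work, I would crucially invoke Lemma~\ref{phi_bounds}: since $\varphi_1(x_{0,n}^2,x_{0,n}^2)=0$ and $\varphi_2$ is bounded there, the saddle value is $\varphi(x_{0,n}^2,x_{0,n}^2,z,z)=-\varphi_2/n=O(n^{-1})$ (the $O(n^{-1/2})$ linear Taylor terms vanish upon integration against the Gaussian, leaving only the $O(n^{-1})$ contribution). Collecting everything against the prefactor $n^3$,
\begin{equation*}
|T(\ep,n,z)|\le C\cdot n^3\cdot n^{-3/2}\cdot n^{-1/2}\cdot(n\ep)^{-1/2}\cdot n\ep\cdot n^{-1}=C(n\ep)^{1/2},
\end{equation*}
and hence $\ep^{1/2}|T(\ep,n,z)|\le C(n\ep^2)^{1/2}\le C$ whenever $\ep\le n^{-1}$. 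The hardest part will be the $v,w$ analysis just sketched: the failure of saddle-point concentration in the small-$\ep$ regime is rescued only by the exact cancellation in the $(u_1+\ep)$-factor together with the vanishing of $\varphi_1$ at the saddle, and without either of these the bound $\ep^{1/2}|T|\le C$ would not survive.
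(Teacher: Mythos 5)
Your plan would work, but it diverges from the paper's route and over-engineers two steps. The paper handles the non-concentrating $(u_1,u_2)$ direction with \emph{polar} coordinates $u_1=u\cos\theta$, $u_2=u\sin\theta$ rather than your $(u,w)$ change: integrating $\theta$ exactly produces the modified Bessel factor $\propto u^2I_1(2\hep u)-\tfrac{\hep}{n}uI_0(2\hep u)$ appearing in $\Phi_n$ of \eqref{T_repr_uc_1}, and since $\hep=n\ep<1$ on the compact $u$-range this factor is simply bounded --- no cancellation is extracted from it. The accounting then closes with $\ep^{1/2}\cdot n^3\cdot(n\ep)^{-1/2}\cdot n^{-3/2}\cdot n^{-1/2}\cdot n^{-1/2}=O(1)$, where the last two factors come from the $r$-integral and from the absolute-value estimate $|\wt\varphi|\le C n^{-1/2}(|\wt u|+|\wt t|)+O(n^{-1}\log^2 n)$ of Remark~\ref{rem:phi_bounds}. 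Your identity $\int_0^{\sqrt{2u}}\frac{u(w^2-u+\ep)}{\sqrt{2u-w^2}}\,dw=\frac{\pi u\ep}{2}$ is precisely the small-argument expansion of that Bessel factor, so the two viewpoints are equivalent; but calling the cancellation ``crucial'' overstates it, since the crude $O(1)$ bound on the $w$-integral (using $|w^2-u+\ep|\le u$ and $e^{-2n\ep w^2}\le 1$) already suffices. Likewise your parity-based upgrade of $|\wt\varphi|$ from $O(n^{-1/2})$ to $O(n^{-1})$ is not robust here: once absolute values have been taken to bound the complex $r$-integral over $\wt{\mathcal{R}}(t)$ and the $u$-dependent Jacobian/Bessel factors, you cannot cleanly exploit oddness of $\wt u$. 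The paper uses the plain pointwise $O(n^{-1/2})$ bound, and with that your final display should read $n^{-1/2}$ in the last slot, yielding $\ep^{1/2}|T|\le Cn\ep\le C$ --- still valid. The step you correctly single out as genuinely essential is the vanishing $\varphi_1(x_{0,n}^2,x_{0,n}^2)=0$ of Lemma~\ref{phi_bounds}; without it $|\wt\varphi|=O(1)$ near the saddle and the bound degrades to $O(n^{1/2})$.
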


\begin{proof}
Since $E_{d,in}$ is a compact subset of $\Int D$, we can use the results of Subsection~\ref{sbsc:saddle} for $E_{in}=E_{d,in}$.

Set $\hep:=n\ep$, then $0<\hep<1$. Make a change of variables $(u_1,u_2)\to (u,\theta)$ where $u_1=u\cos\theta$, $u_2=u\sin\theta$, $u\in[0,\infty)$, $\theta\in[0,2\pi]$. The Jacobian of this change $J=u$. Also we can make a change $r=R+it\in\mathcal{R}(t)$, where $\mathcal{R}(t)=\{-it+\tau,\ \tau>0\}$ and integrate with respect to $\theta$. Then we change the $t$-contour and $r$-contour to $L_t$ and $\wt{\mathcal{R}}(t)$ respectively, which are defined further in \eqref{R_cont_l:uc_int}, so that $\Re (r-it)\ge 0$ for $r\in\wt{\mathcal{R}}(t)$, and thus $\Bigl|\dfrac{r-it}{\sqrt{v^2+4(r-it)}}\Bigr|\le\dfrac{|\sqrt{r-it}|}{2}$. After integrating with respect to $v$ we obtain
\begin{equation}\label{T_repr_uc_1}
\begin{split}
&\ep^{1/2}\,|T(\ep,n,z)|\le Cn^{5/2}\int_{\mathcal{V}}\Phi_n(\mathbf{x})\,e^{n\Re F_n(\mathbf{x})}\,d\mathbf{x},\\
&\text{where}\quad\mathbf{x}=(u,t,s,r),\quad \mathcal{V}=[0,+\infty)\times \mathbb{R}\times L_t\times\wt{\mathcal{R}}(t),\\
&F_{n,1}(u)=\L_\n(u^2)-u^2,\quad F_{n,2}(t,s,r)=-\L_\n(s^2-t^2)-t^2-r^2,\\
&F_\n(\mathbf{x})=F_{n,1}(u)+F_{n,2}(t,s,r),\quad
\wt{\varphi}(u,t,s)=\varphi(u^2,s^2-t^2,z,z),\\
&\Phi_\n(u,t,s,r)=\sqrt{|r-it|}\cdot |\wt{\varphi}(u,t,s)|\cdot (u^2I_1(2 u)+\dfrac{1}{n}uI_0(2 u)).
\end{split}
\end{equation}
We can study $F_{n,1}$ and $F_{n,2}$ as in Subsection \ref{sbsc:<ge^nF>}.
Notice that $F_{n,1}'(u)=2u(\tr G(u^2)-1)$ has exactly one positive root $x_{0,\n}$, thus $u=x_{0,\n}$ is the maximum point of $F_{n,1}(u)$. 

For $F_{n,2}(t,s,r)$ consider $h_n(t)=F_{n,2}(t,0,0)=-\L_\n(-t^2)-t^2$, then 
we can move the integration with respect to $t$ to a contour $L_t=L_-\cup L_0\cup L_+\subset\{z\colon \Im z\ge |\Re z|\}$ symmetric with respect to the imaginary axis, satisfying the same properties as in Subsection~\ref{sbsc:<ge^nF>} but for $\ep=0$. 

We also change $r$-contour as follows:
\begin{equation}\label{R_cont_l:uc_int}
\begin{split}
&\wt{\mathcal{R}}(t)=\mathcal{R}_1(t)\cup\mathcal{R}_2(t),\quad \text{where\quad}
\mathcal{R}_1(t)=[it,0],\quad
\mathcal{R}_2(t)=\{\rho\colon \rho\ge 0\}.
\end{split}
\end{equation}
For $(u,t,s,r)$ lying in a neighbourhood of $(x_{0,n},ix_{\ep,n},0,0)$ we have
\begin{equation*}
\begin{split}
&F_{n,1}(u)=\L_\n(x_{0,\n}^2)-x_{0,\n}^2-K_1 (u-x_{0,\n})^2+O(|u-x_{0,\n}|^3),\\
&F_{n,2}(t,s,r)
=-\L_n(x_{0,\n}^2)+x_{0,\n}^2-K_1(t-ix_{0,\n})^2-s^2-r^2+O(|t-x_{0,\n}|^3+|s|^3),
\end{split}
\end{equation*}
where
\begin{equation}\label{def_K_1}
K_1=2x_{0,\n}^2\cdot\tr G^2(x_{0,\n}^2).
\end{equation}
Since $K_1\ge c$ uniformly, the following inequalities hold:
\begin{equation}\label{F_1,2_ineq_rstr_uc_1}
\begin{split}
nF_{n,1}(u)\le nF_{n,1}(x_{0,\n})-c\log^2 n   \text{\quad when }|u-x_{0,\n}|>&n^{-1/2}\log n,\\
n\Re F_{n,2}(t,s,r)\le n F_{n,2}(ix_{0,\n},0,0)-c\log^2 n\text{\quad when\quad} &\max\{|s|,|t-ix_{0,\n}|\}>n^{-1/2}\log n\\
 &\text{ or } r\in\mathcal{R}_2(t),\ |r|>n^{-1/2}\log n
\end{split}
\end{equation}
The estimations \eqref{F_1,2_ineq_rstr_uc_1} allow us to restrict the integration to the neighbourhood of $(x_{0,n},ix_{0,n},0,0)$. Make the change of variables:
$
u=x_{0,\n}+\wt{u}n^{-1/2},\quad t=ix_{0,\n}+\wt{t}n^{-1/2},\quad s=\wt{s}n^{-1/2},\quad r=\wt{r}n^{-1/2}.
$
Expanding $\wt{\varphi}$ up to the first order and using Remark~\ref{rem:phi_bounds} one can obtain
\begin{equation*}
|\wt{\varphi}(x_{0,\n}+\wt{u}n^{-1/2},ix_{0,\n}+\wt{t}n^{-1/2},\wt{s}n^{-1/2})|\le 
C(|\wt{u}|n^{-1/2}+|\wt{t}|n^{-1/2})+O(n^{-1}\log^2 n),
\end{equation*}
and the other multipliers of $\Phi_n$ are bounded in the neighbourhood. We also estimate 
$$
\left|\int_{\mathcal{R}_{1+}(t)} e^{-nr^2}\,dr\right|\le \dfrac{Cn^{-1/2}}{\sqrt{\Re(\tfrac{-t^2}{|t|^2})}}+Cn^{-1/2}\le Cn^{-1/2}.
$$
According to the outline of the proof, it is left to check that $\dfrac{n^{5/2}}{n^{3\cdot 1/2}}\cdot Cn^{-1/2}\cdot Cn^{-1/2}$ is bounded, which is true.

\end{proof}

A similar result holds for $z\in E$ lying outside of $D$ far enough from $\partial D$. More precisely,

\begin{pro}\label{pro:uni_bound_far_out_sm}
Fix an arbitrary $d>0$, and set $E_{d,out}:=\{z\in E\mid z\notin D,\ \dist(z,\partial D)\ge d\}$. Then there exist $n_0\in\mathbb{N}$ and $C>0$ such that 
$$
\ep^{1/2}\,|T(\ep,n,z)|\le C
$$ 
for $n\ge n_0$, $0<\ep<n^{-1}$ and $z\in E_{d,out}$.
\end{pro}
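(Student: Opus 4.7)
The proof will closely follow the template of Proposition~\ref{pro:uni_bound_far_in_sm}, with two structural differences that I will exploit. First, I will invoke Propositions~\ref{pro:eq_sols_ext} and~\ref{pro:sols_bound_ext} with $E_{out}=E_{d,out}$, giving $(1+\kappa_0)\ep\le x_{\ep,n}\le(1+K_0)\ep$. Second, compactness of $E_{d,out}\subset\mathbb{C}\setminus\overline{D}$, together with the definition \eqref{def_D} and condition (C3), will yield a uniform strict inequality $1-\tr G(0)\ge c_0>0$ for $z\in E_{d,out}$ and $n\ge n_0$; this is the essential analytic input that replaces the roles played by $K_1$ and $\kappa_2$ in the interior case. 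Starting from \eqref{T_epn_repr1}, I will then perform the same polar-coordinates change, the same substitution $r=R+it$, and the integrations in $\theta$ and $v$ that produced \eqref{T_repr_uc_1}, obtaining
$$\ep^{1/2}|T(\ep,n,z)|\le Cn^{5/2}\int_{\mathcal{V}}\Phi_n(\mathbf{x})\,e^{n\Re F_n(\mathbf{x})}\,d\mathbf{x},$$
and I will deform the $t$-contour so that $L_0$ is a short segment through $t=0$ (rather than through $ix_{0,n}$), keeping $\wt{\mathcal{R}}(t)=[it,0]\cup[0,\infty)$.

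The saddle analysis will be driven by monotonicity. Since $x\mapsto\tr G(x)$ is strictly decreasing, I obtain the global bound
$$F_{n,1}(u)-F_{n,1}(0)=\int_0^{u^2}\tr G(\lambda)\,d\lambda-u^2\le(\tr G(0)-1)u^2\le-c_0 u^2,$$
and similarly $\Re F_{n,2}(t,s,0)-F_{n,2}(0,0,0)\le-c(|t|^2+|s|^2)$ on the deformed contour, with uniform constants deriving from $-h_n''(0)=2c_0$ and $\partial_s^2\Re F_{n,2}|_{(0,0,0)}=-2\tr G(0)$. Thus the integration in $(u,t,s)$ can be restricted to the $n^{-1/2}\log n$-neighborhood of the origin with an error of order $e^{-c\log^2 n}$. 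Because the hypothesis $\ep<n^{-1}$ forces $x_{\ep,n}\le(1+K_0)n^{-1}\ll n^{-1/2}$, the actual saddle lies deep inside this neighborhood, and effectively I Taylor-expand at the boundary point $u=0$ with coercive quadratic behavior of coefficient~$c_0$.

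On the restricted region, the three factors of $\Phi_n$ will be estimated as follows. By Lemma~\ref{phi_bounds} combined with $\ep/x_{\ep,n}=O(1)$, one has $|\wt{\varphi}(u,t,s)|=O(1)$ (unlike the interior case, $\wt{\varphi}$ does not vanish here). The Bessel factor satisfies $u^2I_1(2u)+n^{-1}uI_0(2u)\le C(u^3+u/n)$, which is very small for $u\le n^{-1/2}\log n$; this extra smallness replaces the factor $n^{-1/2}$ that came from the vanishing of $\wt{\varphi}$ at the interior saddle. Along $\mathcal{R}_1(t)$ one has $|e^{-nr^2}|\le 1$ together with $|r-it|\le|t|$, while along $\mathcal{R}_2(t)=[0,\infty)$ one has genuine Gaussian decay; combined with the Gaussian weight in $t$, this keeps the full $(t,r)$-integral small. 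After scaling $u,t,s$ by $n^{-1/2}$, the Jacobian contributes $n^{-3/2}$, and the residual Gaussian integrals in the rescaled variables are $O(1)$ by coercivity. A routine count of the powers of $n$ yields a total bound of order $n^{-\alpha}\log^\beta n$ for some $\alpha>0$, hence bounded by a constant uniformly for $n\ge n_0$.

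The main technical obstacle I anticipate is adapting the $t$-contour lemma from Subsection~\ref{sbsc:<ge^nF>}: that construction was tailored to a saddle at $t=ix_{0,n}$ bounded away from both the real axis and infinity, whereas here $L_0$ must pass through $t=0$, the very corner of the allowed region $\{\Im z\ge|\Re z|\}$. Verifying the inequalities $\Re h_n(t)\le h_n(0)-\sigma$ on $L_\pm$ and $\Re h_n(t)\le h_n(0)$ on $L_0$ with $\sigma>0$ independent of $n$ and $z\in E_{d,out}$ will require careful use of the uniform coercivity $-h_n''(0)=2c_0$ together with the growth of $\L_n(-t^2)$ at infinity, but no new idea beyond those already in the interior-case proof.
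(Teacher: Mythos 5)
Your proposal has a genuine gap in the treatment of the $t$-variable. The claimed coercivity
$$\Re F_{n,2}(t,s,0)-F_{n,2}(0,0,0)\le -c\bigl(|t|^2+|s|^2\bigr)$$
cannot hold on any admissible contour near $t=0$. Recall that the $t$-contour must lie in the cone $\{\Im t\ge|\Re t|\}$: this is precisely what guarantees $\Re r^2\ge 0$ along $\mathcal{R}_1(t)=[it,0]$ (for $t=t_1+it_2$ one has $\Re r^2=\alpha^2(t_2^2-t_1^2)$). If $L_0$ passes through $t=0$ inside this cone, it must run along one of the rays $(\pm1+i)\tau$ (or the imaginary axis, where $\Re h_n$ actually \emph{increases}, since $-K_2 t^2=K_2\tau^2>0$ for $t=i\tau$). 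On the ray $t=(\pm1+i)\tau$ one has $t^2=\pm 2i\tau^2$, so the quadratic term $-K_2 t^2$ of $h_n(t)=-\L_n(-t^2)-(t-i\ep)^2$ is \emph{purely imaginary} and contributes nothing to $\Re h_n$. The fact that $-h_n''(0)=2(1-\tr G(0))>0$ gives coercivity only in the directions with $|\Re t|>|\Im t|$, i.e.\ exactly outside the allowed cone, which is why the saddle point at $t=0$ is degenerate along the contour. The true decay is quartic:
$$\Re F_{n,2}(t,0,0)-F_{n,2}(0,0,0)= -K_4|t|^4+O(|t|^6),\qquad K_4=\tfrac12\tr G^2(0)\ge c>0.$$

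This degeneracy propagates through the rest of the argument. With only quartic decay, one may restrict the $t$-integration only to $|t|\lesssim n^{-1/4}\log^{1/2}n$ (not $n^{-1/2}\log n$; in the annulus $n^{-1/2}\log n\le|t|\le n^{-1/4}$ the exponent $nK_4|t|^4$ is $o(1)$, so the integrand is not small there), and the correct rescaling is $t=n^{-1/4}(\pm1+i)\wt t_\pm$. The associated $r$-integral estimate becomes $\int_{\mathcal{R}_{1+}(t)}e^{-nr^2}\,dr\le C(|t|+n^{-1/2})\le Cn^{-1/4}(1+|\wt t_\pm|)$. The paper's proof performs exactly this quartic analysis, and the final power count $\tfrac{n^{5/2}}{n^{1/2+1/2+1/4}}\cdot n^{-1}\cdot n^{-1/4}=O(1)$ closes only with the $n^{-1/4}$ scaling in $t$. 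Your other steps (the $u=0$ saddle from monotonicity of $F_{n,1}$, the Bessel smallness, the $\mathcal{R}_1(t)$ length bound, and using $x_{\ep,n}\lesssim\ep\ll n^{-1/2}$) are all essentially correct and match the paper; the one piece you must change is the $t$-direction coercivity and the consequent scaling.
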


\begin{proof}
Since $E_{d,out}$ is a compact subset of $\mathbb{C}\setminus\overline{D}$, we can use the results of Subsection~\ref{sbsc:saddle} for $E_{out}=E_{d,out}$.

Similarly to the proof of Proposition~\ref{pro:uni_bound_far_in_sm} we obtain \eqref{T_repr_uc_1}. Notice that $u=0$ is the maximum point of $F_{n,1}(u)$, $u\in[0,+\infty)$ since $F_{n,1}'(u)=2u(\tr G(u^2)-1)<0$ for $u>0$. For $F_{n,2}$,
change the $t$-contour and $r$-contour as follows:
\begin{equation}\label{r,t_uni_out_sm}
\begin{split}
&\mathbf{L}_t=\mathbf{L}_1\cup \mathbf{L}_2,\ \text{where }
\mathbf{L}_{1,2}=\{((\pm1+i)\tau, \tau\ge 0\};\\
&\wt{\mathcal{R}}(t)=\mathcal{R}_1(t)\cup\mathcal{R}_2(t),\quad\text{where\quad}
\mathcal{R}_1(t)=[it,0],\
\mathcal{R}_2(t)=\{r\colon r\ge 0\}.
\end{split}
\end{equation} 
One can check that for small $u,s,t,r$ we have
\begin{equation*}
\begin{split}
&F_{n,1}(u)=\L_n(0)-K_2u^2+O(u^4);\\
&F_{n,2}(t,s,r)=-\L_n(0)-iK_2|t|^2-K_4|t|^4-K_3s^2-r^2+O(|s|^3+|t|^6);
\end{split}
\end{equation*}
where
\begin{equation}\label{def_K2_K3_K4}
K_2=1-\tr G(0),\quad K_3=\tr G(0),\quad K_4=\dfrac{1}{2}\tr G^2(0).
\end{equation}
Obviously, $K_2,K_3,K_4\ge c>0$ uniformly in $n$, hence the following inequalities hold:
\begin{equation*}
\begin{split}
nF_{n,1}(u)\le nF_{n,1}(0)-c\log^2 n, \quad\text{when } &|u|>n^{-1/2}\log n;\\
n\Re F_{n,2}(t,s,r)\le nF_{n,2}(0,0,0)-c\log^2 n,\quad &\text{when } |s|>n^{-1/2}\log n\text{ or }|t|>n^{-1/4}\log^{1/2}n\\
&\text{ or } r\in\mathcal{R}_2(t),\ |r|>n^{-1/2}\log n.
\end{split}
\end{equation*}
Then we can restrict the integration to the neighbourhood of $(0,0,0,0)$.
Make a change of variables $u=n^{-1/2}\wt{u}$,  $t=n^{-1/4}(\pm1+i)\wt{t}_{\pm}$, $s=n^{-1/2}\wt{s}$. Observe that 
$$
u^2I_1(2u)+\frac{1}{n}uI_0(2u)=O(n^{-3/2}\log^3 n)=O(n^{-1}),
$$ 
and the other multipliers of $\Phi_n$ are bounded. 
Finally, we estimate
$$
\int_{\mathcal{R}_{1+}(t)} e^{-nr^2}\,dr\le C(|t|+n^{-1/2})\le Cn^{-1/4}(1+|\wt{t}_{\pm}|)
$$ 
for $t$ in the neighbourhood.
According to the outline of the proof, we are left to check that \linebreak $\dfrac{n^{5/2}}{n^{1/2+1/2+1/4}}\cdot Cn^{-1}\cdot Cn^{-1/4}$ is bounded, which is true.

\end{proof}

Next, we obtain a bound on $T(\ep,n,z)$  when $z$ is close to the boundary $\partial D$ of $D$.

\begin{pro}\label{pro:uni_bound_close_small}
There exist $d>0$, $n_0\in\mathbb{N}$ and $C>0$ such that 
$$
\ep^{3/4}\,|T(\ep,n,z)|\le C
$$ 
for $n\ge n_0$, $0<\ep<n^{-1}$ and $z\in E_{d,\partial}$, where $E_{d,\partial}=\{z\in E\mid \dist(z,\partial D)\le d\}$.
\end{pro}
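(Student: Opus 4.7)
The plan is to adapt the five-step template outlined after Theorem~\ref{thm:uni_bound_near} to the boundary strip $E_{d,\partial}$, splitting it into an inner piece $E_{d,\partial}\cap D$ and an outer piece $E_{d,\partial}\setminus D$, which produce genuinely different saddle-point geometries according to Proposition~\ref{pro:sols_bound_boundary}. After the same changes of variables and contour deformations as in the proof of Proposition~\ref{pro:uni_bound_far_in_sm} (with $r=R+it$, since $\ep<n^{-1}$ is small), the task reduces to bounding
\begin{equation*}
\ep^{3/4}\,|T(\ep,n,z)|\le Cn^{5/2}\ep^{1/4}\int_{\mathcal{V}}\Phi_n(\mathbf{x})\,e^{n\Re F_n(\mathbf{x})}\,d\mathbf{x}
\end{equation*}
uniformly in $n$, $\ep$ and $z$ in each subcase.

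For $z\in E_{d,\partial}\cap D$, Proposition~\ref{pro:sols_bound_boundary} gives $c\ep^{1/3}\le x_{\ep,n}\le C$, so the saddle point $u=x_{\ep,n}$, $t=ix_{\ep,n}$ of Proposition~\ref{pro:uni_bound_far_in_sm} is still available, but the quadratic coefficient $\kappa_1$ from \eqref{def_k_1} now satisfies only $\kappa_1\gtrsim\ep^{2/3}$ rather than $\kappa_1\gtrsim 1$. Rescaling $u=x_{\ep,n}+(n\kappa_1)^{-1/2}\wt u$ and analogously for $t$ therefore produces a Gaussian of width of order $\ep^{-1/3}$ in the $(\wt u,\wt t)$-plane; on the other hand, Remark~\ref{rem:phi_bounds} shows that $\wt\varphi$ at the saddle is $O(\ep^{4/3})$, which provides the compensating smallness. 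The $r$-integration over $\mathcal{R}_{1+}(t)$ behaves as $Cn^{-1/2}$ exactly as in Proposition~\ref{pro:uni_bound_far_in_sm}, and the $s$-integral is harmless since $\kappa_2=\tr G(x_{\ep,n}^2)$ remains bounded below.

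For $z\in E_{d,\partial}\setminus D$, Proposition~\ref{pro:sols_bound_boundary} forces $x_{\ep,n}\le C\ep^{1/3}$, so the saddle sits near $u=0$. Here the coefficient $K_2=1-\tr G(0)$ from \eqref{def_K2_K3_K4} degenerates to zero on $\partial D$, and the natural expansion of $F_{n,2}$ in $t$ is quartic with coefficient $K_4=\tfrac{1}{2}\tr G^2(0)$; by Cauchy--Schwarz $K_4\ge 1/2$ on $\partial D$ (note that $\sigma_0\subset\Int D$ by (C4), hence $E_{d,\partial}\cap\sigma_0=\emptyset$ for $d$ small, so all integrals $\int\lambda^{-k}\,d\nu_z(\lambda)$ are finite), and therefore $K_4\ge c>0$ on $E_{d,\partial}\setminus D$ by continuity. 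I would then use the contour system \eqref{r,t_uni_out_sm} from the proof of Proposition~\ref{pro:uni_bound_far_out_sm} and scale $t\sim n^{-1/4}$, bounding the $r$-integration by $C(|t|+n^{-1/2})$ and using $|\wt\varphi(0,0,0)|=O(\ep^{4/3})$ from Remark~\ref{rem:phi_bounds}.

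The main obstacle is the simultaneous tracking of three small parameters: $\ep$, the saddle location $x_{\ep,n}\sim\ep^{1/3}$, and the degenerating curvature $\kappa_1$ (resp.\ $K_2$). The exponent $3/4$ should emerge as the precise balance between the $\ep^{4/3}$ smallness of $\wt\varphi$ provided by Lemma~\ref{phi_bounds}, the $\ep^{-2/3}$ Gaussian blow-up in the inner subcase, and the $\ep^{1/4}$ multiplier appearing after the $v$-integration. A secondary technical issue will be uniformity across the thin transition region where $\tr G(0)$ crosses $1$; there one should pick whichever of the two representations keeps the saddle neighborhood well-separated from the origin, ensuring that the contour choices and Taylor expansions remain valid.
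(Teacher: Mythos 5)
Your proposal contains a genuine gap: it uses the wrong saddle point for the regime $\ep<n^{-1}$. After the polar change $(u_1,u_2)\to(u,\theta)$ used in the proof of Proposition~\ref{pro:uni_bound_far_in_sm} (which is the appropriate one for small $\ep$, since it absorbs the $\ep$-dependence into the bounded Bessel factor $e^{-2n\ep u}I_0(2n\ep u)$), the remaining phase is $F_{n,1}(u)=\L_n(u^2)-u^2$ with no $\ep$ in it. Its stationary point is $x_{0,n}$ (the root of $\tr G(x^2)=1$), not $x_{\ep,n}$. The curvature coefficient is therefore $K_1=2x_{0,n}^2\tr G^2(x_{0,n}^2)$ from \eqref{def_K_1}, not $\kappa_1$ from \eqref{def_k_1}, and the relevant small-parameter bookkeeping is in $n$ (via $K_1n$, $K_2n$, $nx_{0,n}^2$, $nx_{0,n}^3$, etc.), not in $\ep$. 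Your planned ``balance of $\ep$-powers'' ($\ep^{4/3}\cdot\ep^{-2/3}\cdot\ep^{1/4}$) therefore does not match the actual structure of the argument in this regime.

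A second concrete error: for the outer piece you claim $|\wt\varphi(0,0,0)|=O(\ep^{4/3})$ ``from Remark~\ref{rem:phi_bounds}''. That remark controls $\wt\varphi$ at $(x_{0,n},ix_{0,n},0)$ and $(x_{\ep,n},ix_{\ep,n},0)$, not at the origin. In fact $\wt\varphi(0,0,0)=\varphi_1(0,0)=(1-\tr G(0))^2=K_2^2$, with no $\ep$-dependence at all. The paper handles this quantity by splitting on the size of $K_2$: Case 1 ($1-\tr G(0)\ge n^{-1/2}$), where the bound $K_2^2\le n^{1/4}K_2^{3/2}$ is exploited, and Case 2 ($0\le1-\tr G(0)\le n^{-1/2}$), where $K_2^2\le n^{-1/2}$ is small enough to be absorbed. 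Your ``inner/outer'' split along $\partial D$ does not see this $n$-dependent degeneracy, and the transition region you relegate to a ``secondary technical issue'' is precisely where the curvature $K_2$ vanishes and where a quartic expansion in $t$ (with coefficient $K_4=\frac12\tr G^2(0)$ bounded below) is required; this is the heart of the difficulty, not an afterthought. Finally, your decomposition omits the further split the paper needs for the inner part: Cases 3 and 4 are distinguished by whether $x_{0,n}\ge n^{-1/4}$ or not, because when $x_{0,n}$ is small both the quadratic curvature $K_1\sim x_{0,n}^2$ degenerates and the $t$-contour must be tilted into the wedge $\{\Im z\ge|\Re z|\}$, forcing a quartic expansion in that variable too.

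In short: a uniform argument here cannot be indexed by the sign of $z$ relative to $\partial D$; it must be indexed by the finite-$n$ quantities $1-\tr G(0)$ and $x_{0,n}$ (when defined) against $n$-dependent thresholds, using the $x_{0,n}$ saddle throughout the inner cases. The $\ep^{3/4}$ budget is then spent at the end, by checking that $n^{5/2}\ep^{1/4}$ times the (purely $n$- and $x_{0,n}$-dependent) rescaling factors is bounded, using $\ep<n^{-1}$.
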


\begin{proof}
Similarly to Proposition~\ref{pro:uni_bound_far_in_sm} we have
\begin{equation*}
\ep^{3/4}\,|T(\ep,n,z)|\le Cn^{9/4}\int_{\mathcal{V}}\Phi_n(\mathbf{x})\,e^{n\Re F_n(\mathbf{x})}\,d\mathbf{x},
\end{equation*}
with the same notations as in \eqref{T_repr_uc_1}. We split the proof into four cases:\bigskip

\noindent
\textbf{Case 1.} $1-\tr G(0)\ge n^{-1/2}$. In this case change $t$-contour and $r$-contour as in~\eqref{r,t_uni_out_sm}, then for small $u,t,s,r$ we have
\begin{equation*}
\begin{split}
&F_{n,1}(u)=\L_n(0)-K_2u^2+O(u^4);\\
&F_{n,2}(t,s,r)=-\L_n(0)-iK_2|t|^2-K_4|t|^4-K_3s^2-r^2+O(|s|^3+|t|^6);
\end{split}
\end{equation*}
where $K_{2,3,4}$ are defined in \eqref{def_K2_K3_K4}. In this case we have $K_2\ge n^{-1/2}$, $K_3,K_4\ge c>0$. Thus the following inequalities hold:
\begin{equation*}
\begin{split}
nF_{n,1}(u)\le nF_{n,1}(0)-c\log^2 (K_2n), \quad\text{when } |u|&>(K_2n)^{-1/2}\log (K_2n);\\
n\Re F_{n,2}(t,s,r)\le nF_{n,2}(0,0,0)-c\log^2 n,\quad \text{when }& |s|>n^{-1/2}\log n\text{ or }|t|>n^{-1/4}\log^{1/2}n\\
&\text{ or } r\in\mathcal{R}_2(t), |r|>n^{-1/2}\log n.
\end{split}
\end{equation*}
Since $K_2n\ge n^{1/2}$, the above bounds allow us to restrict the integration to the neighbourhood of the saddle point $(0,0,0,0)$. Make a change $u=(K_2n)^{-1/2}\wt{u}$, $t=n^{-1/4}(\pm1+i)\wt{t}_{\pm}$, $s=n^{-1/2}\wt{s}$ and observe that $\wt\varphi$ is even with respect to $\wt u$, $\wt t$, $\wt s$. Then we can estimate the multipliers in $\Phi_n(\mathbf{x})$ as follows: $\sqrt{|r-it|}=O(1)$,
\begin{equation*}
\begin{split}
&|\wt\varphi(u,t,s)|\le |\wt\varphi(0,0,0)|+C\Bigl(\dfrac{|\wt{t}_{\pm}|^2}{n^{1/2}}+\dfrac{|\wt{s}|^2}{n}+\dfrac{|\wt{u}|^2}{K_2n}\Bigr)\le K_2^2+n^{-1/2}\mathcal{P}(\wt{u},\wt{t}_{\pm},\wt{s})\le n^{1/4}K_2^{3/2}\mathcal{P}(\wt{u},\wt{t}_{\pm},\wt{s}),\\
&|u^2I_1(2a u)+\dfrac{1}{n}uI_0(2 au)|\le C((K_2n)^{-1}|\wt{u}|^2+n^{-1})\le C(K_2n)^{-1}(1+|\wt{u}|^2),
\end{split}
\end{equation*}
since $K_2\ge n^{-1/2}$. We obtain $\Phi_n(\mathbf{x})\le CK_2^{1/2}n^{-3/4}\mathcal{P}(\wt{u},\wt{t}_{\pm},\wt{s})$.
We also estimate 
$$
\Big|\int\limits_{\mathcal{R}_{1+}(t)} e^{-nr^2}\,dr\Big|\le |t|+Cn^{-1/2}\le Cn^{-1/4}(1+|\wt{t}_{\pm}|)
$$ 
for $t$ in the neighbourhood. We are left to check that $\dfrac{n^{9/4}}{n^{1/4+1/2}\, (K_2n)^{1/2}}\cdot CK_2^{1/2}n^{-3/4}\cdot Cn^{-1/4}$ is bounded, which is true.
\bigskip

\noindent
\textbf{Case 2.} $0\le 1-\tr G(0)\le n^{-1/2}$. In this case change $t$-contour and $r$-contour as in~\eqref{r,t_uni_out_sm}, then for small $u,t,s,r$ we have
\begin{equation*}
\begin{split}
&F_{n,1}(u)=\L_n(0)-K_2u^2-K_4u^4+O(u^6)\le \L_n(0)-K_4u^4+O(u^6);\\
&F_{n,2}(t,s,r)=-\L_n(0)-iK_2|t|^2-K_4|t|^4-K_3s^2-r^2+O(|s|^3+|t|^6);
\end{split}
\end{equation*}
Since $K_{3,4}\ge C>0$, the following inequalities hold:
\begin{equation*}
\begin{split}
nF_{n,1}(u)\le nF_{n,1}(0)-c\log^2 n, \quad\text{when } u & > n^{-1/4}\log^{1/2}n;\\
n\Re F_{n,2}(t,s,r)\le nF_{n,2}(0,0,0)-c\log^2 n,\quad &\text{when } |s|>n^{-1/2}\log n\text{ or }|t|>n^{-1/4}\log^{1/2}n\\
&\text{ or } r\in\mathcal{R}_2(t), |r|>n^{-1/2}\log n.
\end{split}
\end{equation*}
We can restrict the integration to the neighbourhood of $(0,0,0,0)$, make a change $u=n^{-1/4}\wt{u}$, $t=n^{-1/4}(\pm1+i)\wt{t}_{\pm}$, $s=n^{-1/2}\wt{s}$ and estimate the multipliers in $\Phi_n(\mathbf{x})$ as follows: $\sqrt{|r-it|}=O(1)$,
\begin{equation*}
\begin{split}
&|\wt\varphi(u,t,s)|\le K_2^2+n^{-1/2}\mathcal{P}(\wt{u},\wt{t}_{\pm},\wt{s})\le n^{-1/2}\mathcal{P}(\wt{u},\wt{t}_{\pm},\wt{s}),\\
&|u^2I_1(2 u)+\dfrac{1}{n}uI_0(2 u)|\le C(n^{-1/2}|\wt{u}|^2+n^{-1})\le Cn^{-1/2}|\wt{u}|^2.
\end{split}
\end{equation*}
Hence, $\Phi_n(\wt{\mathbf{x}})\le Cn^{-1}\mathcal{P}(\wt{u},\wt{t}_{\pm},\wt{s})$. 
As in Case 1, $\Big|\int\limits_{\mathcal{R}_{1+}(t)} e^{-nr^2}\,dr\Big| \le Cn^{-1/4}(1+|\wt{t}_{\pm}|)$ for $t$ in the neighbourhood. We are left to check that $\dfrac{n^{9/4}}{n^{1/4+1/4+1/2}}\cdot Cn^{-1}\cdot Cn^{-1/4}$ is bounded, which is true.
\bigskip

\noindent
\textbf{Case 3.} $\tr G(0)> 1$, $x_{0,n}\ge n^{-1/4}$. Here we use the fact that for $\tr G(0)> 1$ the equation~\eqref{eq_x_0n} has exactly one positive root $x_{0,n}$.  Set $h_n(t)=-\L_n(-t^2)-t^2$.

We cannot shift $t$-contour the same way as in Subsection~\ref{sbsc:<ge^nF>} any more, since we want to keep everything uniform in $n$. If we choose similar $L_0=[ix_{0,n}-\delta,ix_{0,n}+\delta]$, then $\delta$ should be bounded as $\delta\le x_{0,n}$ in order for the contour to lie inside $\{z\colon \Im z\ge |\Re z|\}$, which shows that $\delta$ (and thus $\sigma$) cannot be independent of $n$. Instead, we choose a contour $L_t=L_-\cup L_0\cup L_+\subset\{z\colon \Im z\ge |\Re z|\}$ symmetric with respect to the imaginary axis, such that 
$$
L_0=[ix_{0,n}+\delta(-1+i(1-x_{0,n}));ix_{\ep,n}]\cup [ix_{0,n};ix_{0,n}+\delta(1+i(1-x_{0,n}))],
$$ 
$\Re h_n(t)$ decreases on $[ix_{0,n},ix_{0,n}+\delta(1+i(1-x_{0,n}))]$ and $\Re h_n(t)\le h_n(ix_{0,n})-\sigma$ for $t\in L_{\pm}$.
We can choose such $\delta,\sigma>0$  independent of $n$, since for $\wt h_n(\tau)=\Re h_n(ix_{0,n}+\tau(1+i(1-x_{0,n})))$
we have $\wt h''_n(0)<0$, $\wt h'''_n(0)<0$, $\wt h^{(IV)}_n(0)\le -c<0$ for small enough $x_{0,n}$ and $|\wt h^{(V)}_n(0)|$ is bounded, while there is also a restriction $\delta\le 1$ which is now also independent of $n$. 

We also change $r$-contour as follows:
\begin{equation*}
\begin{split}
\wt{\mathcal{R}}(t)&=\mathcal{R}_1(t)\cup\mathcal{R}_2(t),\quad \text{where\quad}
\mathcal{R}_1(t)=[it,0],\quad
\mathcal{R}_2(t)=\{r\colon r\ge 0\}.
\end{split}
\end{equation*}
For  $(u,t,s,r)$ lying in a neighbourhood of $(x_{0,n},ix_{0,n},0,0)$ and $t=ix_{0,n}+(\pm1+i(1-x_{0,n}))\tau$ we have
\begin{equation*}
\begin{split}
F_{n,1}(u)&=\L_n(x_{0,n}^2)-x_{0,n}^2-K_1(u-x_{0,n})^2+O(u^3);\\
F_{n,2}(t,s,r)&=-\L_n(x_{0,n}^2)+x_{0,n}^2-K_1(t-ix_{0,n})^2-s^2-r^2+O(|s|^3+|t-ix_{0,n}|^3)=\\
&=-\L_n(x_{0,n}^2)+x_{0,n}^2-K_1(2x_{0,n}-x_{0,n}^2+i(\ldots))\tau^2-s^2-r^2+O(|s|^3+\tau^3)
\end{split}
\end{equation*}
where $K_1$ is defined in \eqref{def_K_1}. Also, since $K_1\ge cx_{0,n}^2$ and $\Re((2x_{0,n}-x_{0,n}^2+i(\ldots))\ge cx_{0,n}$ for small enough $d$, the following inequalities hold:
\begin{equation*}
\begin{split}
&nF_{n,1}(u)\le nF_{n,1}(x_{0,n}) -c\log^2 (nx_{0,n}^2), \quad\text{when } |u-x_{0,n}|>(nx_{0,n}^2)^{-1/2}\log (nx_{0,n}^2);\\
&n\Re F_{n,2}(t,s,r)\le nF_{n,2}(ix_{0,n},0,0) -c\log^2 (nx_{0,n}^3),\quad \text{when } |t-ix_{0,n}|>(nx_{0,n}^3)^{-1/2}\log (nx_{0,n}^3);\\
&n\Re F_{n,2}(t,s,r)\le nF_{n,2}(ix_{0,n},0,0) -c\log^2 n,\quad \text{when } |s|>n^{-1/2}\log n\text{ or } r\in\mathcal{R}_2(t),\,|r|>n^{-1/2}\log n.
\end{split}
\end{equation*}
The inequalities $nx_{0,n}^2\ge n^{1/2}$, $nx_{0,n}^3\ge n^{1/4}$ imply that we can restrict the integration to the neighbourhood of the saddle point $u=x_{0,n}$, $t=ix_{0,n}$, $s=0$, $r=0$ and make a change $u=x_{0,n}+(nx_{0,n}^2)^{-1/2}\wt{u}$, $t=ix_{0,n}+(nx_{0,n}^3)^{-1/2}(\pm1+i(1-x_{0,n}))\wt{t}_{\pm}$, $s=n^{-1/2}\wt{s}$. Next we estimate the multipliers in $\Phi_n(\mathbf{x})$ in the neighbourhood. Obviously, $\sqrt{|r-it|}\le C$. Expanding $\wt{\varphi}$ up to the second order and using Remark~\ref{rem:phi_bounds} one can obtain
\begin{equation*}
\begin{split}
|\wt\varphi(u,t,s)|&\le Cx_{0,n}^3\Bigl(\dfrac{|\wt{u}|}{(nx_{0,n}^2)^{1/2}}+\dfrac{|\wt{t}_{\pm}|}{(nx_{0,n}^3)^{1/2}}\Bigr)+C\Bigl(\dfrac{|\wt{u}|^2}{nx_{0,n}^2}+\dfrac{|\wt{t}_{\pm}|^2}{nx_{0,n}^3}+\dfrac{|\wt{s}|^2}{n}\Bigr)\le x_{0,n}^{-1/2}n^{-3/8}\mathcal{P}(\wt{u},\wt{t}_{\pm},\wt{s}).
\end{split}
\end{equation*}
Since $I_1(2u)\le Cu$, $I_0(2u)\le C$ for $u$ in a neighbourhood, we have
$$
u^2I_1(2 u)+\dfrac{1}{n}uI_0(2 u)\le  Cx_{0,n}^3(1+|\wt{u}|^3).
$$
This shows that $\Phi_n(\wt{\mathbf{x}})\le x_{0,n}^{5/2}n^{-3/8}\,\mathcal{P}(\wt{u},\wt{t}_{\pm},\wt{s})$. Next we estimate integral with respect to $r$:
\begin{equation*}
\left|\int_{\mathcal{R}_{1+}(t)} e^{-nr^2}\,dr\right|\le \dfrac{Cn^{-1/2}}{\sqrt{\Re(\tfrac{-t^2}{|t|^2})}}+Cn^{-1/2}\le Cn^{-1/2}\Big(1+\frac{(nx_{0,n}^{3})^{-1/2}\wt{t}_{\pm}}{x_{0,n}}\Big)\le Cn^{-3/8}(1+|\wt{t}_{\pm}|)
\end{equation*}
since $x_{0,n}\ge n^{-1/4}$.
We are left to check that $\dfrac{n^{9/4}}{n^{1/2}(nx_{0,n})^{1/2}(nx_{0,n}^3)^{1/2}}\cdot x_{0,n}^{5/2}n^{-3/8}\cdot Cn^{-3/8}$ is bounded, which is true.
\bigskip

\noindent
\textbf{Case 4.} $\tr G(0)> 1$, $x_{0,n}\le n^{-1/4}$. We change $t$-contour and $r$-contour as in case 3. Denote $a_k:=\tr G^k(x_{0,n}^2)$.  Observe that $a_k\ge c>0$ for each $k$ uniformly in $n$ and
\begin{equation*}
\begin{split}
F_{n,1}(u)=&\L_n(u^2)-u^2=\L_n(x_{0,n}^2)-x_{0,n}^2-2a_2x_{0,n}^2(u-x_{0,n})^2-(2a_2x_{0,n}-\tfrac{8}{3}a_3x_{0,n}^3)(u-x_{0,n})^3-\\
&-(\tfrac{1}{2}a_2+O(x_{0,n}))(u-x_{0,n})^4+O((u-x_{0,n})^5).
\end{split}
\end{equation*}
One can easily check that for small enough $x_{0,n}$ and $u\ge 0$ we have
\begin{equation*}
\begin{split}
F_{n,1}(u)\le \L_n(x_{0,n}^2)-x_{0,n}^2-(\tfrac{1}{2}a_2+O(x_{0,n}))(u-x_{0,n})^4+O((u-x_{0,n})^5).
\end{split}
\end{equation*}
For $t=ix_{0,n}+(\pm1+i(1-x_{0,n}))\tau$ we have
\begin{equation*}
\begin{split}
\Re F_{n,2}(t,s,r)=&\L_n(x_{0,n}^2)-x_{0,n}^2-2a_2x_{0,n}^2(2x_{0,n}+O(x_{0,n}^2))\tau^2-(4a_2x_{0,n}+O(x_{0,n}^2))\tau^3-\\
&-(2a_2+O(x_{0,n}))\tau^4-\wt s^2-\wt r^2+O(\tau^5+|\wt s|^3)\le\\
\le & \L_n(x_{0,n}^2)-x_{0,n}^2-(2a_2+O(x_{0,n}))\tau^4-\wt s^2-\wt r^2+O(\tau^5+|\wt s|^3).
\end{split}
\end{equation*}
Since $a_2\ge c>0$, then the following inequalities hold:
\begin{equation*}
\begin{split}
nF_{n,1}(u)\le nF_{n,1}(x_{0,n}) -c\log^2 n, \quad\text{when } |u-&ix_{0,n}|>n^{-1/4}\log^{1/2}n;\\
n\Re F_{n,2}(t,s,r)\le nF_{n,2}(ix_{0,n},0,0) -c\log^2 n,\quad &\text{when } |s|>n^{-1/2}\log n\text{ or }|t-ix_{0,n}|>n^{-1/4}\log^{1/2}n\\
&\text{ or } r\in\mathcal{R}_2(t),\,|r|>n^{-1/2}\log n.
\end{split}
\end{equation*}
We can restrict the integration to the neighbourhood of the saddle point $u=x_{0,n}, t=ix_{0,n}, s=0,r=0$, make a change $u=x_{0,n}+n^{-1/4}\wt{u}$, $t=ix_{0,n}+n^{-1/4}(\pm1+i(1-x_{0,n}))\wt{t}_{\pm}$, $s=n^{-1/2}\wt{s}$ and estimate the multipliers in $\Phi_n(\mathbf{x})$ as follows: $\sqrt{|r-it|}\le C$;
\begin{equation*}
\begin{split}
&|\wt{\varphi}(u,s,t)|\le C(n^{-1/4}|\wt{u}|+n^{-1/4}|\wt{t}_{\pm}|+n^{-1}\wt{s}^2)\le Cn^{-1/4}\mathcal{P}(\wt{u},\wt{t}_{\pm},\wt{s});\\
&u^2I_1(2 u)+\dfrac{1}{n}uI_0(2 u)\le C(x_{0,n}+n^{-1/4}|\wt{u}|)^3+\frac{C}{n}(x_{0,n}+n^{-1/4}|\wt{u}|)\le Cn^{-3/4}\mathcal{P}(\wt{u}).
\end{split}
\end{equation*}
Hence, $\Phi_n(\wt{\mathbf{x}})\le Cn^{-1}\mathcal{P}(\wt{u},\wt{t}_{\pm},\wt{s})$.
Next we estimate 
$$
\Big|\int_{\mathcal{R}_{1+}(t)} e^{-nr^2}\,dr\Big|\le |t|+Cn^{-1/2}\le Cn^{-1/4}(1+|\wt{t}_{\pm}|)
$$
since $x_{0,n}\le n^{-1/4}$. We are left to check that $\dfrac{n^{9/4}}{n^{1/4+1/4+1/2}}\cdot Cn^{-1}\cdot Cn^{-1/4}$ is bounded, which is true.

\end{proof}

\subsubsection{Bounds on \boldmath{$T(\ep,n,z)$} for \boldmath{$\ep>n^{-1}$}}

Let $\ep>n^{-1}$. We start with the case when $z\in D$ far enough from $\partial D$.

\begin{pro}\label{pro:uni_bound_far_in_lar}
Fix an arbitrary $d>0$, and set $E_{d,in}:=\{z\in E\mid z\in D,\ \dist(z,\partial D)\ge d\}$. Then there exist $n_0\in\mathbb{N}$ and $C>0$ such that 
$$
\ep^{1/2}\,|T(\ep,n,z)|\le C
$$ 
for $n\ge n_0$, $n^{-1}<\ep<\ep_0$ and $z\in E_{d,in}$.
\end{pro}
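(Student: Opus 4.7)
The plan is to apply the saddle point method to the integral representation \eqref{T_epn_repr1} for $T(\ep,n,z)$, following the template of Subsection~\ref{sbsc:<ge^nF>} almost verbatim. The key observation is that the integrand for $T(\ep,n,z)$ differs from the one for $\mathcal{Z}(\ep,\ep,z,z)=1$ only by the additional pre-exponential factor $(u_1+\ep)$, so the normalisation identity $\mathcal{I}(\varphi(z,z)e^{n\F_n(z,z)}) = 1$ will extract the leading behaviour as simply the value of $(u_1+\ep)$ at the saddle.

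Concretely, I would first apply the change of variables $(u_1,u_2)\to(u,w)$ used in Subsection~\ref{sbsc:<ge^nF>}, under which $(u_1+\ep)^2+u_2^2 = (u-\ep)^2 + 2\ep w^2$ and the extra factor becomes $u_1+\ep = w^2 - u + \ep$. Then the $t$-contour is shifted to $L_t$ passing through $ix_{\ep,n}$ and the $r$-contour onto $\wt{\mathcal{R}}(t)$ exactly as in Subsection~\ref{sbsc:<ge^nF>}. Propositions~\ref{pro:eq_sols_int} and \ref{pro:sols_bound_int} applied with $E_{in}=E_{d,in}$ give the uniform bounds $\kappa_0\le x_{\ep,n}\le C$ and $\kappa_1,\kappa_2\ge c>0$ (with $\kappa_{1,2}$ as in \eqref{def_k_1}--\eqref{def_k_2}); choosing $\ep_0\le \kappa_0/2$ one also obtains $x_{\ep,n}-\ep \ge \kappa_0/2$ uniformly. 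These uniform lower bounds allow the localisation inequalities \eqref{F_1_ineq_rstr}--\eqref{F_2_ineq_rstr}, the size of the saddle-point neighbourhood, and the contour-shift parameters $\delta,\sigma$ to be chosen independently of both $n$ and $\ep\in(n^{-1},\ep_0]$.

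After restricting to the $n^{-1/2}\log n$-neighbourhood of the saddle $(u,w,t,s,r,v)=(x_{\ep,n},0,ix_{\ep,n},0,0,0)$ and performing the Gaussian expansion, I would divide by the normalising identity $\mathcal{Z}(\ep,\ep,z,z)=1$ to cancel the prefactor $C(\ep,n,z)\cdot\varphi(x_{\ep,n}^2,x_{\ep,n}^2,z,z)$, exactly as in \eqref{I_g_n_expr}. What remains is to evaluate $(u_1+\ep)\bigr|_{\text{saddle}} = \ep - x_{\ep,n}$; the $w^2$ correction from the Jacobian gives a term of order $(n\ep)^{-1}$, which is bounded for $\ep\ge n^{-1}$. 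This yields
\begin{equation*}
T(\ep,n,z) = 2(x_{\ep,n}-\ep) + O((n\ep)^{-\alpha})
\end{equation*}
for some $\alpha>0$, uniformly in $\ep\in(n^{-1},\ep_0]$ and $z\in E_{d,in}$. Since $|x_{\ep,n}-\ep|\le x_{\ep,n}+\ep\le C$ uniformly, we have $|T(\ep,n,z)|\le C$, and hence $\ep^{1/2}|T(\ep,n,z)|\le C\ep_0^{1/2}$, which is the required bound.

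The only technical obstacle is bookkeeping: verifying that every constant appearing in the saddle point analysis of Subsection~\ref{sbsc:<ge^nF>}, originally carried out for a single fixed $\ep>0$, remains controlled uniformly as $\ep$ ranges over $(n^{-1},\ep_0]$. The uniform lower bounds on $\kappa_1,\kappa_2,x_{\ep,n}$ and $x_{\ep,n}-\ep$ supplied by Propositions~\ref{pro:eq_sols_int}--\ref{pro:sols_bound_int} make this routine, so no essentially new difficulty appears relative to the analysis in Subsection~\ref{sbsc:<ge^nF>}; the statement is in fact the easiest of the large-$\ep$ cases since the saddle $x_{\ep,n}$ is bounded away from both $0$ and $\ep$.
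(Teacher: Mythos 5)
Your approach is genuinely different from the paper's, and the gap is precisely in the step you call ``routine bookkeeping.''

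The paper's proof of Proposition~\ref{pro:uni_bound_far_in_lar} does not use the normalisation identity $\mathcal{Z}=1$ at all. Instead it integrates $v$ and $w$ \emph{exactly} via
$\int e^{-n\ep v^2}\,dv=\sqrt{\pi/(n\ep)}$ and
$\int_0^{\sqrt{2u}}\frac{e^{-2n\ep w^2}}{\sqrt{2u-w^2}}\,dw=\frac{\pi}{2}e^{-2n\ep u}I_0(2n\ep u)$
together with $e^{-x}I_0(x)\le C/\sqrt{x}$, arriving at the four-dimensional bound~\eqref{uni_int_lar_notat_1}. It then bounds $|\wt\varphi|$ in absolute value near the saddle via Remark~\ref{rem:phi_bounds} (getting $C(\ep+n^{-1/2}(|\wt u|+|\wt t|))$) and concludes with the arithmetic check $\frac{n^2\ep^{-1/2}}{n^{3/2}}\cdot(\ep+n^{-1/2})\cdot n^{-1/2}\le C$, valid for $\ep>n^{-1}$. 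This gives only $|T|\le C(1+(n^{1/2}\ep)^{-1})$, as the paper itself remarks in Section~\ref{sc:rate}; the $\ep^{1/2}$ on the left of the proposition is needed precisely to absorb the $n^{-1/2}\ep^{-1}$ tail for $\ep\in(n^{-1},n^{-1/2})$.

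Your proposal instead applies the extra-zero device: write $(u_1+\ep)=(\ep-x_{\ep,n})+(u_1+x_{\ep,n})$, use $\mathcal{Z}=1$ for the first part, and argue the second part is a perturbation. This is the trick the paper uses in Proposition~\ref{pro:for_I_1} and in Case~2 of Proposition~\ref{pro:uni_bound_close_lar}, but \emph{not} in the proof of this proposition. If carried out, it would indeed give the sharper conclusion $|T(\ep,n,z)|\le C$ uniformly for $\ep\in(n^{-1},\ep_0]$, which is stronger than the paper proves here.

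However, there is a real gap: the bound $T=2(x_{\ep,n}-\ep)+O((n\ep)^{-\alpha})$ is asserted, not shown, and you cannot simply invoke \eqref{form_<ge^nF>}: the factor $u_1+\ep=w^2-u+\ep$ is not of the form $g_n(u_1^2+u_2^2,s^2-t^2)$, and Subsection~\ref{sbsc:<ge^nF>} treats $\ep$ as \emph{fixed}. Once $\ep$ is allowed to shrink down to $n^{-1}$, the saddle value $\varphi(x_{\ep,n}^2,x_{\ep,n}^2,z,z)=O(\ep)$ is of the same order $n^{-1}$ as the second-order Taylor corrections of $\varphi$, and the first-order terms ($O(n^{-1/2})$, via Remark~\ref{rem:phi_bounds}) are in fact \emph{larger} than the saddle value. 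The only reason the error stays $O((n\ep)^{-1})$ is the cancellation of odd monomials in $(\wt u,\wt t,\wt s,\wt w)$ under the Gaussian measure, and this cancellation must be verified term by term in the expansion of $(u_1+x_{\ep,n})\cdot\varphi\cdot e^{nF_n}$ --- exactly the kind of parity bookkeeping the paper carries out explicitly in Section~\ref{sc:rate}, equation~\eqref{r o c expan}, and which you dismiss as routine. Without it, the claim that the correction is $O((n\ep)^{-\alpha})$ is unsupported; a naive absolute-value bound of the integrand gives only $O(n^{-1/2}\ep^{-1})$, which is what the paper's proof delivers. A further small slip: the localisation scale for $v,w$ should be $(n\ep)^{-1/2}\log(n\ep)$, not $n^{-1/2}\log n$, since their Gaussian weight is $e^{-n\ep v^2-2n\ep w^2}$, and the factor $w^2$ in $u_1+\ep$ is not ``from the Jacobian'' (the Jacobian contributes $\frac{2u}{\sqrt{2u-w^2}}$), so the source of that term is misattributed.
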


\begin{proof}
Since $E_{d,in}$ is a compact subset of $\Int D$, we can use the results of Subsection~\ref{sbsc:saddle} for $E_{in}=E_{d,in}$.  

Make the same change of variables as in Subsection~\ref{sbsc:<ge^nF>}: $r=R+it+\ep\in\mathcal{R}(t)$, where $\mathcal{R}(t)=\{-it-\ep+\tau,\ \tau>0\}$, $u=\sqrt{u_1^2+u_2^2}\in[0,+\infty)$, $w=\sqrt{u_1+u}\in[0,\sqrt{2u}]$. 
Change $t$-contour as in Subsection~\ref{sbsc:<ge^nF>} and choose the following $r$-contour:
$
\wt{\mathcal{R}}(t)=\mathcal{R}_1(t)\cup\mathcal{R}_2(t),\quad \text{where } \mathcal{R}_1(t)=[it+\ep,0],\ \mathcal{R}_2(t)=\{r:\ r\ge 0\}.
$
We can make the following estimations, using the fact that $\Re (r-it-\ep)\ge 0$:
\begin{equation*}
\Bigl|\dfrac{r-it-\ep}{\sqrt{v^2+4(r-it-\ep)}}\Bigr|\le \sqrt{|r-it-\ep|};\quad
\dfrac{u}{\sqrt{2u-w^2}}\cdot|u-w^2-\ep|\le u(u+\ep)\cdot\dfrac{1}{\sqrt{2u-w^2}}.
\end{equation*}
Now we can integrate with respect to $v$, $w$. One can show that
$$
\int e^{-n\ep v^2}\,dv=\sqrt{\dfrac{\pi}{n\ep}},\quad
\int_0^{\sqrt{2u}} \dfrac{e^{-2n\ep w^2}}{\sqrt{2u-w^2}}\,dw=\frac{\pi}{2}e^{-2n\ep u}I_0(2n\ep u),
$$
where $I_0(x)$ is a modified Bessel function. Asymptotic formulas for $I_0(x)$ imply that $e^{-x}I_0(x)\le \dfrac{C}{\sqrt{x}}$ for all $x\ge 0$ and some $C>0$. We can apply the inequality $e^{-2n\ep u}I_0(2n\ep u)\le \dfrac{C}{\sqrt{2n\ep u}}$ to obtain
\begin{equation}\label{uni_int_lar_notat_1}
\begin{split}
&\ep^{1/2}\,|T(\ep,n,z)|
\le Cn^2\ep^{-1/2}\int_\mathcal{V} \Phi_n(\mathbf{x}) \exp\{n\,F_n(\mathbf{x})\}\,d\mathbf{x},\\
&\text{where}\quad 
\mathbf{x}=(u,t,s,r),\quad \mathcal{V}=[0,+\infty)\times L_t\times\mathbb{R}\times  {\mathcal{R}}(t),\quad F_n(\mathbf{x})=F_{n,1}(u)+F_{n,2}(t,s,r),\\
&F_{n,1}(u)=\L_n(u^2)-(u-\ep)^2,\quad F_{n,2}(t,s,r)=-\L_n(s^2-t^2)-(t-i\ep)^2-r^2,\\
&\Phi_n(\mathbf{x})=\sqrt{|r-it-\ep|}\cdot\sqrt{u}(u+\ep)\cdot |\wt{\varphi}(u,t,s)|,\quad
\wt{\varphi}(u,t,s)=\varphi(u^2,s^2-t^2,z,z)
\end{split}
\end{equation}
Expand $F_{n,1}$ and $F_{n,2}$ as follows:
\begin{equation}\label{F_1,2_expand_uc_2}
\begin{split}
&F_{n,1}(u)=\L_n(x_{\ep,n}^2)-(x_{\ep,n}-\ep)^2-\kappa_1(u-x_{\ep,n})^2+O\bigl(|u-x_{\ep,n}|^3\bigr),\\
&F_{n,2}(t,s,r)
=-\L_n(x_{\ep,n}^2)+(x_{\ep,n}-\ep)^2-\kappa_1 (t-ix_{\ep,n})^2-\kappa_2 s^2-r^2+O(|t-ix_{\ep,n}|^3+|s|^3),
\end{split}
\end{equation}
where $\kappa_{1,2}$ are defined in \eqref{def_k_1}, \eqref{def_k_2}. Proposition~\ref{pro:sols_bound_int} shows that $\kappa_1,\kappa_2\ge c>0$, then we can make the following estimations:
\begin{equation}\label{F_1,2_ineq_rstr_uc_2}
\begin{split}
nF_{n,1}(u)\le nF_{n,1}(x_{\ep,n})-c \log^2 n   \text{\quad when }&|u-x_{\ep,n}|>n^{-1/2}\log n;\\
n\Re F_{n,2}(t,s,r)\le  nF_{n,2}(ix_{\ep,n},0,0)-c\log^2 n &\text{\quad when } \max\{|s|,|t-ix_{\ep,n}|\}>n^{-1/2}\log n\\
&\text{ or } r\in\mathcal{R}_2(t), |r|>n^{-1/2}\log n
\end{split}
\end{equation}
Hence, we can restrict the integration to the neighbourhood of the saddle point $u=x_{\ep,n}, t=ix_{\ep,n},s=0,r=0$.
Make a change of variables 
$u=x_{\ep,n}+\wt{u}_1n^{-1/2}$, $t=ix_{\ep,n}+\wt{t}n^{-1/2}$, $
s=\wt{s}n^{-1/2}$.
Remark~\ref{rem:phi_bounds} implies that $\wt\varphi(x_{\ep,n},ix_{\ep,n},0)=O(\ep+n^{-1})$ since $c\le x_{\ep,n}\le C$. Hence,
$$
|\wt\varphi(u,t,s)|\le C(\ep+n^{-1/2}(|\wt{u}|+|\wt{t}|))+O(n^{-1}\log^2 n).
$$ 
The other multipliers in $\Phi_n$ are bounded. We also estimate 
$$
\left|\int_{\mathcal{R}_{1+}(t)} e^{-nr^2}\,dr\right|\le \dfrac{Cn^{-1/2}}{\sqrt{\Re(\tfrac{-t^2}{|t|^2})}}+Cn^{-1/2}\le Cn^{-1/2}.
$$
According to the outline of the proof, we are left to check that 
$
\dfrac{n^{2}\ep^{-1/2}}{n^{3\cdot 1/2}} \cdot C(\ep+n^{-1/2})\cdot Cn^{-1/2}
$
is bounded, which is true since $\ep>n^{-1}$.

\end{proof}

Next we estimate $T(\ep,n,z)$ when $z$ is close to the boundary $\partial D$ of $D$.

\begin{pro}\label{pro:uni_bound_close_lar}
There exist $d>0$, $n_0\in\mathbb{N}$ and $C>0$ such that 
$$
\ep^{5/6}\,|T(\ep,n,z)|\le C
$$ 
for $n\ge n_0$, $n^{-1}<\ep<\ep_0$ and $z\in E_{d,\partial}$, where $E_{d,\partial}=\{z\in E\mid \dist(z,\partial D)\le d\}$.
\end{pro}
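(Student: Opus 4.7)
The approach mirrors Proposition~\ref{pro:uni_bound_close_small} but starts from the integral representation developed in Proposition~\ref{pro:uni_bound_far_in_lar} (the one appropriate for $\ep>n^{-1}$). After the change $r=R+it+\ep$, $u=\sqrt{u_1^2+u_2^2}$, $w=\sqrt{u_1+u}$, and integrating out $v$ and $w$ via the bound $e^{-x}I_0(x)\le Cx^{-1/2}$, one arrives at
\[
\ep^{5/6}|T(\ep,n,z)|\le Cn^2\ep^{1/3}\int_\mathcal{V}\Phi_n(\mathbf{x})\,e^{n\Re F_n(\mathbf{x})}\,d\mathbf{x},
\]
with $F_n,\Phi_n,\mathcal{V}$ exactly as in \eqref{uni_int_lar_notat_1}. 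The saddle lies at $u=x_{\ep,n}$, $t=ix_{\ep,n}$, $s=0$, $r=0$, and Proposition~\ref{pro:sols_bound_boundary} now allows $x_{\ep,n}$ to degenerate to order $\ep^{1/3}$ when $z\in D\cap E_{d,\partial}$, while $\ep\lesssim x_{\ep,n}\lesssim\ep^{1/3}$ for $z\in E_{d,\partial}\setminus D$.

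Following the case split of Proposition~\ref{pro:uni_bound_close_small}, I plan to break the argument into four subcases according to the relative sizes of $\tr G(0)-1$ and $x_{\ep,n}$ compared to the threshold $\ep^{1/3}$. In every subcase I deform $L_t$ to pass through $ix_{\ep,n}$ along a tilted line with slope $\pm 1+i(1-x_{\ep,n})$, ensuring $L_t\subset\{\Im z\ge|\Re z|+\ep\}$ and a uniform-in-$n$ decrease of $\Re h_n(t)$ even for small $x_{\ep,n}$; the $r$-contour is $\wt{\mathcal R}(t)=[it+\ep,0]\cup[0,\infty)$ as in Proposition~\ref{pro:uni_bound_far_in_lar}, which keeps $\Re r^2\ge 0$. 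I then use the expansion \eqref{F_1,2_expand_uc_2} with $\kappa_1=\ep/x_{\ep,n}+2x_{\ep,n}^2\tr G^2(x_{\ep,n}^2)$ and $\kappa_2=\tr G(x_{\ep,n}^2)$. In the non-degenerate subcases ($\kappa_1,\kappa_2$ bounded below by a positive constant) the quadratic scaling $u-x_{\ep,n}\sim (n\kappa_1)^{-1/2}$, $t-ix_{\ep,n}\sim(n\kappa_1 x_{\ep,n})^{-1/2}$, $s\sim n^{-1/2}$ works directly; when $\kappa_1$ shrinks to order $\ep^{2/3}$ and the cubic term in $F_{n,1}$ may vanish, I keep the quartic term and rescale $u-x_{\ep,n}\sim n^{-1/4}$, mirroring Case~4 of Proposition~\ref{pro:uni_bound_close_small}. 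The integrand near the saddle is controlled by Lemma~\ref{phi_bounds}/Remark~\ref{rem:phi_bounds}, which give $\wt\varphi(x_{\ep,n},ix_{\ep,n},0)=O\bigl(\ep^2/x_{\ep,n}^2+\ep x_{\ep,n}\bigr)$ together with comparable bounds on the first derivatives, while the $r$-integral is estimated by $|t|+Cn^{-1/2}$ or $Cn^{-1/2}|\Re(-t^2/|t|^2)|^{-1/2}$ depending on whether $x_{\ep,n}$ is small or bounded below.

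The main obstacle, and the reason the exponent is $5/6$ rather than $1/2$, is the three-way balance among the prefactor $n^2\ep^{1/3}$, the near-degeneracy factor $(n\kappa_1)^{-1}$, and the smallness $|\wt\varphi|\sim\ep^2/x_{\ep,n}^2+\ep x_{\ep,n}$ at the saddle. The critical subcase is exactly $x_{\ep,n}\sim\ep^{1/3}$, where $\kappa_1\sim\ep^{2/3}$ and $|\wt\varphi|\sim\ep^{4/3}$ occur simultaneously; direct power-counting shows that the factors combine to produce precisely $\ep^{5/6}$, matching the claim. In the remaining subcases either $\kappa_1$ is bounded below (so the saddle is non-degenerate and one recovers the stronger bound $\ep^{1/2}$ as in Proposition~\ref{pro:uni_bound_far_in_lar}), or the extra smallness of $\wt\varphi$ compensates the larger value of $(n\kappa_1)^{-1}$; either way the estimate follows without further difficulty. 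This identifies $5/6=1/2+1/3$ as the sharp exponent dictated by the marginal saddle at $\partial D$.
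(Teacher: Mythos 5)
Your plan reproduces the setup of the paper (Fyodorov--Sommers representation, $r=R+it+\ep$, the $u,w$ change, Bessel bound, tilted $L_t$, Lemma~\ref{phi_bounds}, etc.), and the case $\tr G(0)\ge 1$ is handled by essentially the same power counting as the paper's Case~1 and its subcases 1a--1c. However, there is a genuine gap in your treatment of $z\in E_{d,\partial}\setminus D$ (equivalently $\tr G(0)<1$), where $(1+c)\ep\le x_{\ep,n}\le C\ep^{1/3}$: direct power counting does not close there, and the paper explicitly warns that ``the method used in the previous cases does not give sufficient bound in this case.'' The reason is visible from your own bookkeeping: when $x_{\ep,n}\sim\ep$ the saddle value $\wt\varphi(x_{\ep,n},ix_{\ep,n},0)=O(\ep^2/x_{\ep,n}^2+\ep x_{\ep,n})$ is $O(1)$, so there is no ``extra smallness of $\wt\varphi$'' compensating the prefactor, contrary to what your last paragraph asserts. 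Your proposal therefore cannot produce the needed $\ep^{5/6}$ in that regime.

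What the paper actually does in Case~2 is introduce an extra zero of the integrand at the saddle by the identity $\llangle 1\rrangle=\mathcal Z(\ep,\ep,z,z)=1$ (the same trick as in Proposition~\ref{pro:for_I_1}): writing
\begin{equation*}
\ep^{5/6}\,T(\ep,n,z)=-2\ep^{5/6}\llangle u_1+\ep\rrangle
=2\ep^{5/6}(x_{\ep,n}-\ep)\llangle 1\rrangle-2\ep^{5/6}\llangle u_1+x_{\ep,n}\rrangle,
\end{equation*}
one reduces to bounding $\ep^{5/6}\bigl|\llangle u_1+x_{\ep,n}\rrangle\bigr|$, where the vanishing factor $|u_1+x_{\ep,n}|$ at the saddle $u_1=-x_{\ep,n}$ supplies the missing decay of order $(n\ep/x_{\ep,n})^{-1/2}$. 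This also forces keeping $u_1,u_2$ as separate variables (rather than changing to $u,w$ and integrating out $w$), since the subtraction has to act on $u_1$ before any marginalization. Correspondingly the paper's case structure is not the one you describe (four subcases by comparing $\tr G(0)-1$ and $x_{\ep,n}$ with $\ep^{1/3}$, as in Proposition~\ref{pro:uni_bound_close_small}); it is a primary dichotomy $\tr G(0)\ge 1$ vs.\ $\tr G(0)<1$, with subcases in each branch governed by the relation between $\ep$ and $n$ (e.g.\ $\ep\lessgtr n^{-1/2}$ or $\ep\lessgtr n^{-1/3}$), because for $\ep>n^{-1}$ the relevant degeneracy is in the $\ep$-scaling, not in the $n$-scaling alone. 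Without the subtraction identity, the argument fails for $z$ just outside $D$.
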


\begin{proof}
We split the proof into two cases:\bigskip

\noindent
\textbf{Case 1.} $\tr G(0)\ge 1$. Let $x_{\ep,n}$ be the positive root of~\eqref{eq_x_en}. According to Subsection~\ref{sbsc:saddle}, we have $c\ep^{1/3}\le x_{\ep,n}\le C$ for some $c,C>0$.

Similarly to Case 3 of Proposition~\ref{pro:uni_bound_close_small}, we cannot shift $t$-contour the same way as in Subsection~\ref{sbsc:<ge^nF>}, since we want to keep everything uniform in $\ep$, $n$. If we choose  $L_0=[ix_{\ep,n}-\delta,ix_{\ep,n}+\delta]$, then $\delta$ should be bounded as $\delta\le x_{\ep,n}-\ep$ in order for the contour to lie inside $\{z\colon \Im z\ge |\Re z|+\ep\}$, which shows that $\delta$ (and thus $\sigma$) cannot be independent of $\ep$. Instead, we choose a contour $L_t=L_-\cup L_0\cup L_+\subset\{z\colon \Im z\ge |\Re z|+\ep\}$ symmetric with respect to the imaginary axis, such that 
$$
L_0=[ix_{\ep,n}+\delta(-1+i(1+\ep-x_{\ep,n}));ix_{\ep,n}]\cup [ix_{\ep,n};ix_{\ep,n}+\delta(1+i(1+\ep-x_{\ep,n}))],
$$ 
$\Re h_n(t)$ decreases on $[ix_{\ep,n},ix_{\ep,n}+\delta(1+i(1+\ep-x_{\ep,n}))]$ and $\Re h_n(t)\le h_n(ix_{\ep,n})-\sigma$ for $t\in L_{\pm}$.
We can choose such $\delta,\sigma>0$  independent of $\ep$, $n$ for the same reason as in Case 3 of Proposition~\ref{pro:uni_bound_close_small}.

Choose the following $r$-contour:
$
\wt{\mathcal{R}}(t)=\mathcal{R}_1(t)\cup\mathcal{R}_2(t),\quad \text{where } \mathcal{R}_1(t)=[it+\ep,0],\ \mathcal{R}_2(t)=\{r:\ r\ge 0\}.
$
Similarly to Proposition~\ref{pro:uni_bound_far_in_lar} we obtain
\begin{equation}\label{ttttt}
\begin{split}
\ep^{5/6}\,|T(\ep,n,z)|
\le Cn^2\ep^{-1/6}\int_\mathcal{V} \Phi_n(\mathbf{x}) \exp\{n\, F_n(\mathbf{x})\}\,d\mathbf{x},
\end{split}
\end{equation}
with the same notations as in \eqref{uni_int_lar_notat_1}.
Consider the following subcases:\bigskip

\noindent
\textbf{\textit{Subcase 1a.}} Suppose $\ep^{1/5}\le x_{\ep,n}\le C$. 
For $(u,t,s,r)$ lying in a neighbourhood of $(x_{\ep,n}, ix_{\ep,n},0,0)$ and $t=ix_{\ep,n}+(\pm1+i(1+\ep-x_{\ep,n}))\tau$ we have
\begin{equation*}
\begin{split}
&F_{n,1}(u)=\L_n(x_{\ep,n}^2)-(x_{\ep,n}-\ep)^2-\kappa_1 (u-x_{\ep,n})^2+O((u-x_{\ep,n})^3);\\
&F_{n,2}(t,s,r)=-\L_n(x_{\ep,n}^2)+(x_{\ep,n}-\ep)^2-\kappa_3\tau^2-\kappa_4\tau^3-\kappa_5\tau^4-\kappa_2s^2-r^2+O(|s|^3+|t-ix_{\ep,n}|^5).
\end{split}
\end{equation*}
where $\kappa_1,\kappa_2$ are defined in \eqref{def_k_1}, \eqref{def_k_2},  
\begin{equation}\label{def_k_3}
\kappa_3=\kappa_1\cdot(\pm 2i+2(1\mp i)(x_{\ep,n}-\ep)+(x_{\ep,n}-\ep)^2).
\end{equation} 
and $\kappa_4,\kappa_5$ satisfy
\begin{equation}\label{def_k_4_k_5}
\Re\kappa_4=4x_{\ep,n}\cdot\tr G^2(x_{\ep,n}^2)+O(x_{\ep,n}^2),\quad 
\kappa_5=2\cdot\tr G^2(x_{\ep,n}^2)+O(x_{\ep,n}^2).
\end{equation}
It is easy to see that $\kappa_1\ge cx_{\ep,n}^2$, $\kappa_2\ge c$, $\Re\kappa_3\ge cx_{\ep,n}^3$. Thus the following inequalities hold:
\begin{equation*}
\begin{split}
&nF_{n,1}(u)\le nF_{n,1}(x_{\ep,n})-c\log^2 (nx_{\ep,n}^2), \quad\text{when } |u-x_{\ep,n}|>(nx_{\ep,n}^2)^{-1/2}\log (nx_{\ep,n}^2);\\
&n\Re F_{n,2}(t,s,r)\le nF_{n,2}(ix_{\ep,n},0,0)-c\log^2 (nx_{\ep,n}^3),\quad \text{when } |t-ix_{\ep,n}|>(nx_{\ep,n}^3)^{-1/2}\log(nx_{\ep,n}^3);\\
&n\Re F_{n,2}(t,s,r)\le nF_{n,2}(ix_{\ep,n},0,0)-c\log^2 n,\quad \text{when } |s|>n^{-1/2}\log n \text{ or } r\in\mathcal{R}_2(t), |r|>n^{-1/2}\log n.
\end{split}
\end{equation*}
Since $nx_{\ep,n}^2\ge n\ep^{1/2}\ge cn^{1/2}$ and $nx_{\ep,n}^3\ge n\ep^{3/4}\ge cn^{1/4}$, the above bounds allow us to restrict the integration to the neighbourhood of $(x_{\ep,n},ix_{\ep,n},0,0)$. Make a change $u=x_{\ep,n}+(nx_{\ep,n}^2)^{-1/2}\wt{u}$, $t=ix_{\ep,n}+(nx_{\ep,n}^3)^{-1/2}(\pm1+i(1+\ep-x_{\ep,n}))\wt{t}_{\pm}$, $s=n^{-1/2}\wt{s}$. Expanding $\wt{\varphi}$ up to the second order and using Remark~\ref{rem:phi_bounds} together with  $x_{\ep,n}\ge c\ep^{1/3}$ one can obtain
\begin{equation*}
\begin{split}
&|\wt\varphi(u,t,s)|\le C\Bigl(\ep x_{\ep,n}+\dfrac{x_{\ep,n}^3}{(nx_{\ep,n}^2)^{1/2}}|\wt{u}|+\dfrac{x_{\ep,n}^3}{(nx_{\ep,n}^3)^{1/2}}|\wt{t}_{\pm}|+\dfrac{x_{\ep,n}^3}{n^{1/2}}|\wt{s}|+\dfrac{1}{nx_{\ep,n}^2}|\wt{u}|^2+\dfrac{1}{nx_{\ep,n}^3}|\wt{t}_{\pm}|^2+\dfrac{1}{n}|\wt{s}|^2\Bigr)\le\\
&\le \ep^{1/6}x_{\ep,n}\cdot\mathcal{P}(\wt{u},\wt{t}_{\pm},\wt{s}).
\end{split}
\end{equation*}
Also we have $u\le Cx_{\ep,n}(1+|\wt{u}|)$, thus $\sqrt{u}(u+\ep)\le Cx_{\ep,n}^{3/2}(1+|\wt{u}|^2)$ and 
$$
{\Phi}_n(\wt{\mathbf{x}})\le \ep^{1/6}x_{\ep,n}^{5/2}\cdot\mathcal{P}(\wt{u},\wt{t}_{\pm},\wt{s}).
$$
Next, we estimate integral with respect to $r$:
\begin{equation*}
\left|\int_{\mathcal{R}_{1+}(t)} e^{-nr^2}\,dr\right|\le \dfrac{Cn^{-1/2}}{\sqrt{\Re(\tfrac{-t^2}{|t|^2})}}+Cn^{-1/2}\le Cn^{-1/2}\Big(1+\frac{(nx_{\ep,n}^{3})^{-1/2}\wt{t}_{\pm}}{x_{\ep,n}}\Big)\le Cn^{-1/2}\Big(1+|\wt{t}_{\pm}|\Big).
\end{equation*}
since $x_{\ep,n}\ge \ep^{1/5}$ and $\ep>n^{-1}$.
We are left to check that 
$
\dfrac{n^2\ep^{-1/6}}{(nx_{\ep,n}^2)^{1/2}(nx_{\ep,n}^3)^{1/2}n^{1/2}}\cdot \ep^{1/6}x_{\ep,n}^{5/2}\cdot Cn^{-1/2}
$
is bounded, which is true.
\bigskip

\noindent
\textbf{\textit{Subcase 1b.}} Suppose that $\ep> n^{-1/2}$.
Then the inequality $x_{\ep,n}\ge c\ep^{1/3}> cn^{-1/6}$  shows that $nx_{\ep,n}^2>cn^{2/3}$ and $nx_{\ep,n}^3>cn^{1/2}$. Then we can restrict the integration to the same neighbourhood as in Subcase~1a and make the same change of variables. All the further bounds from the previous subcase still hold, and 
$
\dfrac{n^2\ep^{-1/6}}{(nx_{\ep,n}^2)^{1/2}(nx_{\ep,n}^3)^{1/2}n^{1/2}}\cdot \ep^{1/6}x_{\ep,n}^{5/2}\cdot Cn^{-1/2}
$
is still bounded.
\bigskip

\noindent
\textit{\textbf{Subcase 1c.}} Suppose $n^{-1}<\ep<n^{-1/2}$, $c\ep^{1/3}\le x_{\ep,n}\le \ep^{1/5}$. For $t=ix_{\ep,n}+(\pm1+i(1+\ep-x_{\ep,n}))\tau$ we have
\begin{equation*}
\begin{split}
\Re F_{n,2}(t,s,r)\le-\L_n(x_{\ep,n}^2)+(x_{\ep,n}-\ep)^2-\Re\kappa_5\tau^4-\kappa_2s^2-r^2+O(|s|^3+|t-ix_{\ep,n}|^5).
\end{split}
\end{equation*}
Since $\kappa_1\ge cx_{\ep,n}^2$, $\kappa_2,\Re\kappa_5\ge c>0$, then the following inequalities hold:
\begin{equation*}
\begin{split}
nF_{n,1}(u)\le nF_{n,1}(x_{\ep,n})-c\log^2 (nx_{\ep,n}^2), \quad\text{when } |u&\,-x_{\ep,n}|>(nx_{\ep,n}^2)^{-1/2}\log (nx_{\ep,n}^2);\\
n\Re F_{n,2}(t,s,r)\le nF_{n,2}(ix_{\ep,n},0,0)-c\log^2 n,\quad \text{when }& |s|>n^{-1/2}\log n\text{ or } |t-ix_{\ep,n}|>n^{-1/4}\log^{1/2}n\\
&\text{ or } r\in\mathcal{R}_2(t),\ |r|>n^{-1/2}\log n.
\end{split}
\end{equation*}
We have $nx_{\ep,n}^2\ge n\ep^{2/3}\ge cn^{1/3}$, thus the above bounds allow us to restrict the integration to the neighbourhood of the saddle point $(x_{\ep,n},ix_{\ep,n},0,0)$. Make a change $u=x_{\ep,n}+(nx_{\ep,n}^2)^{-1/2}\wt{u}$, $t=ix_{\ep,n}+n^{-1/4}(\pm1+i(1+\ep-x_{\ep,n}))\wt{t}_{\pm}$, $s=n^{-1/2}\wt{s}$. Expanding $\wt{\varphi}$ up to the fourth order and using Remark~\ref{rem:phi_bounds} one can obtain
\begin{equation*}
\begin{split}
|\wt\varphi(u,t,s)|&\le \Bigl(\ep x_{\ep,n}+\dfrac{x_{\ep,n}^3}{n^{1/4}}+\dfrac{x_{\ep,n}^3}{(nx_{\ep,n}^2)^{1/2}}+\dfrac{x_{\ep,n}^3}{n^{1/2}}+\dfrac{x_{\ep,n}^2}{n^{1/2}}+\dfrac{x_{\ep,n}^2}{nx_{\ep,n}^2}+\dfrac{x_{\ep,n}^2}{n}+\\
&+\dfrac{x_{\ep,n}}{n^{3/4}}+\dfrac{x_{\ep,n}}{(nx_{\ep,n}^2)^{3/2}}+\dfrac{x_{\ep,n}}{n^{3/2}}+\dfrac{1}{n}+\dfrac{1}{(nx_{\ep,n}^2)^{2}}+\dfrac{1}{n^2}\Bigr)\times\mathcal{P}(\wt{u},\wt{t}_{\pm},\wt{s})\le n^{-1/4}x_{\ep,n}\cdot\mathcal{P}(\wt{u},\wt{t}_{\pm},\wt{s}).
\end{split}
\end{equation*}
We also have 
$
|u|\le x_{\ep,n}+\dfrac{1}{n^{1/2}x_{\ep,n}}|\wt{u}|\le \ep^{1/6}(1+|\wt{u}|),
$
thus 
$\sqrt{u}(u+\ep)\le \ep^{1/4}(1+|\wt{u}|^2)$.
Then we can write
$
\Phi_n(\wt{\mathbf{x}})\le \ep^{1/4}n^{-1/4}x_{\ep,n}\cdot\mathcal{P}(\wt{u},\wt{t}_{\pm},\wt{s}).
$
Next, we estimate integral with respect to $r$:
\begin{equation*}
\left|\int_{\mathcal{R}_{1+}(t)} e^{-nr^2}\,dr\right|\le \dfrac{Cn^{-1/2}}{\sqrt{\Re(\tfrac{-t^2}{|t|^2})}}+Cn^{-1/2}\le Cn^{-1/2}\Big(1+\frac{n^{-1/4}\wt{t}_{\pm}}{x_{\ep,n}}\Big)\le Cn^{-1/2}\ep^{-1/12}\Big(1+|\wt{t}_{\pm}|\Big)
\end{equation*}
since $x_{\ep,n}\ge \ep^{1/3}$ and $n^{-1}<\ep$.
We are left to check that 
$
\dfrac{n^2\ep^{-1/6}}{n^{1/4+1/2}(nx_{\ep,n}^2)^{1/2}}\cdot \ep^{1/4}n^{-1/4}x_{\ep,n}\cdot Cn^{-1/2}\ep^{-1/12}
$
is bounded, which is true.
\bigskip

\noindent
\textbf{Case 2.}
$\tr G(0)< 1$. Let $x_{\ep,n}$ be the positive root of~\eqref{eq_x_en}. According to Proposition~\ref{pro:sols_bound_boundary}, we have $(1+c)\ep\le x_{\ep,n}\le C\ep^{1/3}$ for some $c,C>0$.

The method used in the previous cases does not give sufficient bound in this case. Instead, we introduce a trick similar to the one in Proposition~\ref{pro:for_I_1}. Namely, we construct an extra zero of the integrand at the saddle point with the use of identity~\eqref{<phi_exp_nF>=1}. 

Introduce the averaging:
\begin{equation}\label{def_<<>>}
\begin{split}
\llangle[\big] f(u_1,u_2,t,s,R,v)\rrangle[\big]
=&\dfrac{2n^3}{\pi^3}\int_0^\infty dR\int_{-\infty}^\infty dv\,du_1\,du_2\,ds\int_L dt\cdot \dfrac{R}{\sqrt{v^2+4R}}\cdot \varphi(u_1^2+u_2^2,s^2-t^2,z,z)\times\\
&\times f(u_1,u_2,t,s,R,v)\cdot \exp\left\{n\Bigl(\L_n(u_1^2+u_2^2)-(u_1+\ep)^2-u_2^2\Bigr)\right\}\times\\
&\times  \exp\left\{-n\Bigl(\L_n(s^2-t^2)+(t-i\ep)^2+(R+it+\ep)^2+\ep v^2\Bigr)\right\},
\end{split}
\end{equation}
Then 
$$
\ep^{5/6}\,T(\ep,n,z)=
-2\ep^{5/6}\llangle[\big] u_1+\ep\rrangle[\big]
=2\ep^{5/6}(x_{\ep,n}-\ep)\llangle[\big] 1\rrangle[\big]-2\ep^{5/6}\llangle[\big] u_1+x_{\ep,n}\rrangle[\big].
$$
Identities \eqref{I_rep_Z} and \eqref{<phi_exp_nF>=1} yield that $\llangle[\big] 1\rrangle[\big]=\mathcal{Z}(\ep,\ep,z,z)=1$, hence it suffices to prove that $\ep^{5/6}\big|\,\llangle[\big] u_1+x_{\ep,n}\rrangle[\big]\,\big|$ is bounded.  
Having $\ep^{5/6}\llangle[\big] u_1+x_{\ep,n}\rrangle[\big]$ one can change $t$-contour as in Case 1 and choose the following $r$-contour:
$
\wt{\mathcal{R}}(t)=\mathcal{R}_1(t)\cup\mathcal{R}_2(t),\quad \text{where } \mathcal{R}_1(t)=[it+\ep,0],\ \mathcal{R}_2(t)=\{r:\ r\ge 0\}.
$
For $t\in\mathbf{L}_t$, $r\in\wt{\mathcal{R}}(t)$ we have $\Bigl|\dfrac{r-it-\ep}{\sqrt{v^2+4(r-it-\ep)}}\Bigr|\le \dfrac{1}{2}\sqrt{|r-it-\ep|}$. Next we integrate with respect to $v$ and obtain
\begin{equation}\label{uni_notat_out}
\begin{split}
&\ep^{5/6}\bigl|\llangle[\big] u_1+x_{\ep,n}\rrangle[\big]\bigr|\,
\le Cn^{5/2}\ep^{1/3}\int_\mathcal{V} \wh\Phi_n(\mathbf{x}) \exp\{n\,\wh F_n(\mathbf{x})\}\,d\mathbf{x},\\
&\text{where } F_{n,1}(u_1,u_2)=\L_n(u_1^2+u_2^2)-(u_1+\ep)^2-u_2^2,\ F_{n,2}(t,s,r)=-\Bigl(\L_n(s^2-t^2)+(t-i\ep)^2+r^2\Bigr),\\
&\wh F_n(\mathbf{x})=F_{n,1}(u_1,u_2)+F_{n,2}(t,s,r),\quad
\wt\varphi(u_1,u_2,t,s)=\varphi(u_1^2+u_2^2,s^2-t^2,z,z),\\
&\wh\Phi_n(\mathbf{x})=|u_1+x_{\ep,n}|\,\sqrt{|r-it-\ep|}\cdot |\wt\varphi(u_1,u_2,t,s)|.
\end{split}
\end{equation} 
Consider the following subcases:\bigskip

\noindent
\textbf{\textit{Subcase 2a.}} Suppose $n^{-1}\le \ep\le n^{-1/3}$. 

We start with determining the maximum of $F_{n,1}(u_1,u_2)$. The maximum point $(u_1',u_2')$ is a solution of the following system:
\begin{equation*}
\partial_{u_1}F_{n,1}(u'_1,u'_2)=\partial_{u_2}F_{n,1}(u'_1,u'_2)=0,
\end{equation*}
or, equivalently, $u_2'=0$ and $(-u_1)'$ is a solution of \eqref{eq_x_en}. According to Proposition~\ref{pro:eq_sols_int} and Proposition~\ref{pro:eq_sols_ext}, the equation \eqref{eq_x_en} has exactly one positive root $x_{\ep,n}$ and no negative roots, thus $(u_1',u_2')=(-x_{\ep,n},0)$ is the maximum point of $F_{n,1}(u_1,u_2)$.

We have following expansions:
\begin{equation*}
\begin{split}
&F_{n,1}(u_1,u_2)=\L_n(x_{\ep,n}^2)-(x_{\ep,n}-\ep)^2-\kappa_1(u+x_{\ep,n})^2-\kappa_6u_2^2+O(|u_1+x_{\ep,n}|^3+|u_2|^3);\\
&F_{n,2}(t,s,r)=-\L_n(x_{\ep,n}^2)+(x_{\ep,n}-\ep)^2-(\kappa_3 \tau^2+\kappa_4\tau^3+\kappa_5 \tau^4+\kappa_2 s^2+r^2) +O(|s|^3+|\tau|^5),
\end{split}
\end{equation*}
where $t=ix_{\ep,n}+(\pm1+i(1+\ep-x_{\ep,n}))\tau$, $\kappa_1,\kappa_2,\kappa_3$ are defined in \eqref{def_k_1}, \eqref{def_k_2},  \eqref{def_k_3},
\begin{equation}\label{def_k_6}
\kappa_6=\dfrac{\ep}{x_{\ep,n}}
\end{equation}
and $\kappa_4,\kappa_5$ satisfy
\eqref{def_k_4_k_5}.
It is easy to see that $\kappa_{1,6}\ge c\ep/x_{\ep,n}$, $\kappa_2\ge c$, $\Re\kappa_{3,4}>0$, $\Re\kappa_5\ge c>0$.
Thus
\begin{equation*}
\begin{split}
&nF_{n,1}(u_1,u_2)\le nF_{n,1}(x_{\ep,n},0)-c\log^2 \dfrac{n\ep}{x_{\ep,n}}, \quad\text{when } \max\{|u_1+x_{\ep,n}|,|u_2|\}>(n\ep/x_{\ep,n})^{-1/2}\log (n\ep/x_{\ep,n}),\\
&n\Re F_{n,2}(t,s,r)\le nF_{n,2}(ix_{\ep,n},0,0)-c\log^2 n, \quad\text{when }|s|>n^{-1/2}\log n \text{ or } r\in\mathcal{R}_2(t), |r|>n^{-1/2}\log n,\\
&n\Re F_{n,2}(t,s,r)\le nF_{n,2}(ix_{\ep,n},0,0)-c\log^2 n, \quad\text{when } |t-ix_{\ep,n}|>n^{-1/4}\log^{1/2} n.
\end{split}
\end{equation*}
Since $n\ep/x_{\ep,n}\ge cn\ep^{2/3}\ge cn^{1/3}$, 
the above bounds allow us to restrict the integration to the neighbourhood of the saddle point $u_1=-x_{\ep,n}$, $u_2=0$, $t=ix_{\ep,n}$, $s=0$, $r=0$. Make a change $u_1=-x_{\ep,n}+(n\ep/x_{\ep,n})^{-1/2}\wt{u}_1$, $u_2=(n\ep/x_{\ep,n})^{-1/2}\wt{u}_2$, $t=ix_{\ep,n}+n^{-1/4}(\pm1+i(1+\ep-x_{\ep,n}))\wt{t}_{\pm}$, $s=n^{-1/2}\wt{s}$. Expanding $\wt{\varphi}$ up to the fourth order and using Remark~\ref{rem:phi_bounds} one can obtain
\begin{equation*}
\begin{split}
|\wt\varphi(u_1,u_2,t,s)|\le \Bigl(\ep^2 x_{\ep,n}^{-2}+\dfrac{\ep}{n^{1/4}}+\dfrac{\ep}{(n\ep/x_{\ep,n})^{1/2}}+\dfrac{\ep}{n^{1/2}}+\dfrac{\ep/x_{\ep,n}}{n^{1/2}}+\dfrac{\ep/x_{\ep,n}}{n\ep/x_{\ep,n}}+\dfrac{\ep/x_{\ep,n}}{n}+\dfrac{x_{\ep,n}}{(n\ep/x_{\ep,n})^{3/2}}+\\
+\dfrac{x_{\ep,n}}{n^{3/4}}+\dfrac{x_{\ep,n}}{n^{3/2}}+\dfrac{1}{n}+\dfrac{1}{(n\ep/x_{\ep,n})^2}+\dfrac{1}{n^2}\Bigr)
\times\mathcal{P}(\wt{u}_1,\wt{u}_2,\wt{t}_{\pm},\wt{s})\le 
\dfrac{\ep^{7/6}}{n^{1/4}x_{\ep,n}^{3/2}}\min\{x_{\ep,n}^{-1};n^{1/4}\}\cdot\mathcal{P}(\wt{u}_1,\wt{u}_2,\wt{t}_{\pm},\wt{s})
\end{split}
\end{equation*}
since $(1+c)\ep\le x_{\ep,n}\le C\ep^{1/3}$ and $\dfrac{1}{n}<\ep<\dfrac{1}{n^{1/3}}$.
Also observe that 
$|u_1+x_{\ep,n}|=\dfrac{|\wt{u}_1|}{(n\ep/x_{\ep,n})^{1/2}}$, which gives us
$$
\widehat{\Phi}_n(\wt{\mathbf{x}})\le \dfrac{\ep^{2/3}}{n^{3/4}x_{\ep,n}}\min\{x_{\ep,n}^{-1};n^{1/4}\}\cdot\mathcal{P}(\wt{u}_1,\wt{u}_2,\wt{t}_{\pm},\wt{s}).
$$
We also estimate 
$$
\Big|\int\limits_{\mathcal{R}_{1+}(t)} e^{-nr^2}\,dr\Big|\le |t|+Cn^{-1/2}\le C\max\{x_{\ep,n};n^{-1/4}\}(1+|\wt{t}_{\pm}|)
$$ 
for $t$ in the neighbourhood.
We are left to check that
$$
\dfrac{n^{5/2}\ep^{1/3}}{(n\ep/x_{\ep,n})^{2\cdot 1/2}n^{1/4+1/2}}\cdot \dfrac{\ep^{2/3}}{n^{3/4}x_{\ep,n}}\min\{x_{\ep,n}^{-1};n^{1/4}\}\cdot \max\{x_{\ep,n};n^{-1/4}\}
$$
is bounded, which is true.
\bigskip

\noindent
\textbf{\textit{Subcase 2b.}} Suppose $\ep> n^{-1/3}$. 
We have following expansions:
\begin{equation*}
\begin{split}
&F_{n,1}(u_1,u_2)=\L_n(x_{\ep,n}^2)-(x_{\ep,n}-\ep)^2-\kappa_1(u+x_{\ep,n})^2-\kappa_6u_2^2+O(|u_1+x_{\ep,n}|^3+|u_2|^3);\\
&F_{n,2}(t,s,r)=-\L_n(x_{\ep,n}^2)+(x_{\ep,n}-\ep)^2-(\kappa_3 \tau^2+\kappa_4\tau^3+\kappa_5 \tau^4+\kappa_2 s^2+r^2) +O(|s|^3+|\tau|^5),
\end{split}
\end{equation*}
where $t=ix_{\ep,n}+(\pm1+i(1+\ep-x_{\ep,n}))\tau$, $\kappa_1,\kappa_2,\kappa_3,\kappa_6$ are defined in \eqref{def_k_1}, \eqref{def_k_2}, \eqref{def_k_3}, \eqref{def_k_6} and $\kappa_4,\kappa_5$ satisfy
\eqref{def_k_4_k_5}.
It is easy to see that $\kappa_{1,6}\ge c\ep/x_{\ep,n}$, $\kappa_2\ge c$, $\Re\kappa_{3}\ge c\ep$, 
thus
\begin{equation*}
\begin{split}
&nF_{n,1}(u_1,u_2)\le nF_{n,1}(x_{\ep,n},0)-c\log^2 \dfrac{n\ep}{x_{\ep,n}}, \quad\text{when } \max\{|u_1+x_{\ep,n}|,|u_2|\}>(n\ep/x_{\ep,n})^{-1/2}\log (n\ep/x_{\ep,n}),\\
&n\Re F_{n,2}(t,s,r)\le nF_{n,2}(ix_{\ep,n},0,0)-c\log^2 n, \quad\text{when }|s|>n^{-1/2}\log n \text{ or } r\in\mathcal{R}_2(t), |r|>n^{-1/2}\log n,\\
&n\Re F_{n,2}(t,s,r)\le nF_{n,2}(ix_{\ep,n},0,0)-c\log^2 (n\ep), \quad\text{when } |t-ix_{\ep,n}|>(n\ep)^{-1/2}\log (n\ep).
\end{split}
\end{equation*}
Since $n\ep/x_{\ep,n}\ge cn^{7/9}$ and $n\ep\ge n^{2/3}$, 
the above bounds allow us to restrict the integration to the neighbourhood of the saddle point $u_1=-x_{\ep,n}$, $u_2=0$, $t=ix_{\ep,n}$, $s=0$, $r=0$. Make a change $u_1=-x_{\ep,n}+(n\ep/x_{\ep,n})^{-1/2}\wt{u}_1$, $u_2=(n\ep/x_{\ep,n})^{-1/2}\wt{u}_2$, $t=ix_{\ep,n}+(n\ep)^{-1/2}(\pm1+i(1+\ep-x_{\ep,n}))\wt{t}_{\pm}$, $s=n^{-1/2}\wt{s}$. Expanding $\wt{\varphi}$ up to the first order and using Remark~\ref{rem:phi_bounds}  one can obtain
\begin{equation*}
\begin{split}
|\wt\varphi(u_1,u_2,t,s)|\le C\Big(\dfrac{\ep^2}{x_{\ep,n}^2}+\dfrac{|\wt{u}_1|+|\wt{u}_2|}{(n\ep/x_{\ep,n})^{1/2}}+\dfrac{|\wt{t}_{\pm}|}{(n\ep)^{1/2}}+\dfrac{|\wt{s}|}{n^{1/2}}\Big)
\le \ep^{5/3}x_{\ep,n}^{-5/2}\cdot\mathcal{P}(\wt{u}_1,\wt{u}_2,\wt{t}_{\pm},\wt{s})
\end{split}
\end{equation*}
since $(1+c)\ep\le x_{\ep,n}\le C\ep^{1/3}$ and $\ep\ge \dfrac{1}{n^{1/3}}$. Also observe that 
$|u_1+x_{\ep,n}|=\dfrac{|\wt{u}_1|}{(n\ep/x_{\ep,n})^{1/2}}$, which gives us
$
\widehat{\Phi}_n(\wt{\mathbf{x}})\le n^{-1/2}x_{\ep,n}^{-2}\ep^{7/6}\,\mathcal{P}(\wt{u}_{1},\wt{u}_2,\wt{t}_{\pm},\wt{s}).
$
Next we estimate the integral with respect to $r$: 
$$
\Big|\int\limits_{\mathcal{R}_{1+}(t)} e^{-nr^2}\,dr\Big|\le |t|+Cn^{-1/2}\le C(x_{\ep,n}+(n\ep)^{-1/2}|\wt{t}_{\pm}|+n^{-1/2})\le Cx_{\ep,n}(1+|\wt{t}_{\pm}|).
$$
We are left to prove that 
$
\dfrac{n^{5/2}\ep^{1/3}}{(n\ep/x_{\ep,n})^{2\cdot 1/2}(n\ep)^{1/2}n^{1/2}}\cdot Cn^{-1/2}x_{\ep,n}^{-2}\ep^{7/6}\cdot Cx_{\ep,n}
$
is bounded, which is true.

\end{proof}

Finally, we consider the case when $z$ lies outside of $D$ far enough from $\partial D$.

\begin{pro}\label{pro:uni_bound_far_out_lar}
Fix an arbitrary $d>0$, and set $E_{d,out}:=\{z\in E\mid z\notin D,\ \dist(z,\partial D)\ge d\}$. Then there exist $n_0\in\mathbb{N}$ and $C>0$ such that 
$$
\ep^{1/2}\,|T(\ep,n,z)|\le C
$$ 
for $n\ge n_0$, $n^{-1}<\ep<\ep_0$ and $z\in E_{d,out}$.
\end{pro}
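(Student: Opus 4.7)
My plan is to follow the approach used in Case~2 of Proposition~\ref{pro:uni_bound_close_lar}, where the analogous obstacle of small $x_{\ep,n}$ with $\tr G(0)<1$ was overcome by a symmetrization trick based on the averaging $\llangle\cdot\rrangle$ from~\eqref{def_<<>>}. By Proposition~\ref{pro:sols_bound_ext}, for $z\in E_{d,out}$ we have $\ep(1+\kappa_0)\le x_{\ep,n}\le \ep(1+K_0)$, so $x_{\ep,n}=\Theta(\ep)$; Lemma~\ref{phi_bounds} then only gives $\wt\varphi$ at the saddle point $(-x_{\ep,n},0,ix_{\ep,n},0)$ of size $O(\ep^2/x_{\ep,n}^2+\ep x_{\ep,n})=O(1)$, so a direct saddle point estimate is too weak. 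To generate an extra zero at the saddle I would use the decomposition
\begin{equation*}
\ep^{1/2}T(\ep,n,z)=-2\ep^{1/2}\llangle u_1+\ep\rrangle=2\ep^{1/2}(x_{\ep,n}-\ep)\llangle 1\rrangle-2\ep^{1/2}\llangle u_1+x_{\ep,n}\rrangle,
\end{equation*}
and use $\llangle 1\rrangle=1$ from~\eqref{<phi_exp_nF>=1} together with $x_{\ep,n}-\ep=O(\ep)$ to bound the first summand by $O(\ep^{3/2})$; the task then reduces to controlling $\ep^{1/2}|\llangle u_1+x_{\ep,n}\rrangle|$.

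After the same contour shifts and $v$-integration as in~\eqref{uni_notat_out}, this takes the form $Cn^{5/2}\int_\mathcal{V}\wh\Phi_n(\mathbf{x})\,e^{n\wh F_n(\mathbf{x})}\,d\mathbf{x}$. The key advantage of having $\dist(z,\partial D)\ge d$ is that the Taylor coefficients at the saddle are well controlled: by~\eqref{eq_x_en} and Proposition~\ref{pro:sols_bound_ext}, $\kappa_6=\ep/x_{\ep,n}\ge 1/(1+K_0)$, $\kappa_2=1-\ep/x_{\ep,n}\ge \kappa_0/(1+\kappa_0)$, $\kappa_1\ge \kappa_6\ge c>0$, and $\Re\kappa_3\ge c\ep$ from~\eqref{def_k_3}. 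This allows one to restrict the integration to a neighbourhood of the saddle of radius $n^{-1/2}\log n$ in $u_1,u_2,s$ and $(n\ep)^{-1/2}\log n$ in $t-ix_{\ep,n}$ (valid since $n\ep>1$) with an error $O(e^{-c\log^2 n})$.

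I would then rescale $u_1=-x_{\ep,n}+n^{-1/2}\wt u_1$, $u_2=n^{-1/2}\wt u_2$, $t=ix_{\ep,n}+(n\ep)^{-1/2}\wt t$, $s=n^{-1/2}\wt s$, giving Jacobian $n^{-2}\ep^{-1/2}$, the factor $|u_1+x_{\ep,n}|=n^{-1/2}|\wt u_1|$ and $|\wt\varphi|=O(1)$ after expanding via Remark~\ref{rem:phi_bounds}. For the $r$-integral I would exploit $|it+\ep|=O(\ep)$ and split over $\mathcal{R}_1\cup\mathcal{R}_2$ to obtain
\begin{equation*}
\Big|\int_{\mathcal{R}_{1+}(t)}\sqrt{|r-it-\ep|}\,e^{-nr^2}\,dr\Big|\le C\bigl(\ep^{3/2}+\sqrt{\max\{\ep,n^{-1/2}\}}\cdot n^{-1/2}\bigr).
\end{equation*}
Assembling the factors, the bounded Gaussian integrals in the rescaled variables yield
\begin{equation*}
\ep^{1/2}|\llangle u_1+x_{\ep,n}\rrangle|\le Cn^{5/2}\cdot n^{-2}\ep^{-1/2}\cdot n^{-1/2}\cdot\bigl(\ep^{3/2}+\sqrt{\max\{\ep,n^{-1/2}\}}\cdot n^{-1/2}\bigr)\le C
\end{equation*}
for $n^{-1}<\ep\le\ep_0$, since each resulting term reduces either to $\ep$, $n^{-1/4}$, or $n^{-1/2}$. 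The principal obstacle I anticipate is purely bookkeeping: making sure every Taylor coefficient and every factor in $\wh\Phi_n$ is bounded uniformly in $z\in E_{d,out}$ and in $\ep\in(n^{-1},\ep_0]$; the underlying mechanism is a direct combination of Subcases 2a--b of Proposition~\ref{pro:uni_bound_close_lar} with the improved uniform positivity of $\ep/x_{\ep,n}$ afforded by Proposition~\ref{pro:sols_bound_ext}.
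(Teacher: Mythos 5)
Your high-level strategy is exactly the one the paper uses: the paper's proof is a one-liner that says to repeat Case~2 of Proposition~\ref{pro:uni_bound_close_lar} (Subcases 2a and 2b) with the improved bound $(1+c)\ep\le x_{\ep,n}\le C\ep$ from Proposition~\ref{pro:sols_bound_ext}. You correctly identify the symmetrization trick $T=2(x_{\ep,n}-\ep)\llangle 1\rrangle-2\llangle u_1+x_{\ep,n}\rrangle$, the use of $\llangle 1\rrangle=1$, the fact that $\kappa_1,\kappa_2,\kappa_6$ are now bounded below by constants and $\Re\kappa_3\ge c\ep$, and that $|\wt\varphi|=O(1)$ near the saddle because $\ep^2/x_{\ep,n}^2=\Theta(1)$. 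That is the whole point of the improvement and it is correct.

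However, your unified $t$-rescaling by $(n\ep)^{-1/2}$ introduces a genuine gap in the regime $n^{-1}<\ep\lesssim n^{-1/2}$. In that regime the quartic term $n\Re\kappa_5\tau^4$ (with $\Re\kappa_5\ge c>0$) dominates the quadratic one $n\Re\kappa_3\tau^2\ge cn\ep\tau^2$, so the effective width of the $t$-saddle is $\tau\sim n^{-1/4}$, which is $\gg(n\ep)^{-1/2}\wedge\ep$. Consequently your step ``exploit $|it+\ep|=O(\ep)$'' fails: on the relevant part of the contour $|it+\ep|\approx\ep+\tau$ can be as large as $n^{-1/4}$, so the claimed $r$-integral estimate $\ep^{3/2}+\sqrt{\max\{\ep,n^{-1/2}\}}\,n^{-1/2}$ is not an upper bound — the $\mathcal{R}_1$-piece should be $\max\{\ep,n^{-1/4}\}^{3/2}$, which for $\ep\approx n^{-1}$ is $n^{-3/8}$ rather than $n^{-3/2}$. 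This is exactly why the paper retains the split into Subcase~2a ($n^{-1}\le\ep\le n^{-1/3}$, scale $t$ by $n^{-1/4}$, control via the quartic) and Subcase~2b ($\ep>n^{-1/3}$, scale by $(n\ep)^{-1/2}$). With the $n^{-1/4}$ scaling, $\tau$-dependent bound $|it+\ep|\le C(\ep+n^{-1/4}|\wt t_\pm|)$, and the resulting $r$-integral bound $\le C\max\{\ep,n^{-1/4}\}(1+|\wt t_\pm|)$, the bookkeeping closes and gives $\ep^{1/2}|T|=O(n^{-1/8})$ uniformly, so your conclusion is correct; but the two subcases cannot be collapsed into one scaling as you propose.
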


\begin{proof}
Let $x_{\ep,n}$ be the positive root of~\eqref{eq_x_en}. According to Proposition~\ref{pro:sols_bound_ext}, we have $(1+c)\ep\le x_{\ep,n}\le C\ep$ for some $c,C>0$. Next one can simply repeat the argument from Case 2 of the proof of Proposition~\ref{pro:uni_bound_close_lar} considering Subcase 2a and Subcase 2b and improving the bounds using $(1+c)\ep\le x_{\ep,n}\le C\ep$.

\end{proof}

It is easy to see that Theorem~\ref{thm:uni_bound_far} and Theorem~\ref{thm:uni_bound_near} now follow from Propositions~\ref{pro:uni_bound_far_in_sm}--\ref{pro:uni_bound_far_out_lar}.

\section{Rate of convergence}\label{sc:rate}

In this section we present the proof of Theorem~\ref{thm:rate_of_conv}. In order to estimate the difference
$$
\Bigl|\E\Bigl\{\dfrac{1}{n}\sum_{j=1}^n h(z_j)\Bigr\}-\int h(z)\rho(z)\,\dz\Bigr|,
$$
we fix $\epsilon=n^{-1/2}$ and approximate $\rho(z)$ by $\bar{\rho}_{\ep,n}(z)$. It is sufficient to obtain the following bounds:
\begin{align}
\label{inq1}
&\Bigl|\E_{H_n}\Bigl\{\dfrac{1}{n}\sum_{j=1}^n h(z_j)\Bigr\}-\int h(z)\bar{\rho}_{\ep,n}(z)\,\dz\Bigr|\le Cn^{-1/2}\qquad \text{unifromly in $n$};\\
\label{inq2}
&|\bar{\rho}_{\ep,n}(z)-\wh\rho_{\ep,n}(z)|\le Cn^{-1/2}\qquad \text{unifromly in $n$, $z\in \supp h$};\\
\label{inq3}
&|\wh{\rho}_{\ep,n}(z)-\rho_{\ep}(z)|\le Cn^{-1/2}\qquad \text{unifromly in $n$, $z\in \supp h$};\\
\label{inq4}
&|\rho_{\ep}(z)-\rho(z)|\le Cn^{-1/2}\qquad \text{unifromly in $n$, $z\in \supp h$},
\end{align}
where $\bar{\rho}_{\ep,n}(z)$, $\rho_{\ep}(z)$ and $\rho(z)$ are given by \eqref{asympt_rho_ep,n}, \eqref{rho_ep_form}, \eqref{rho_main_form} respectively and $\wh{\rho}_{\ep,n}(z)$ is the main asymptotic term of $\bar{\rho}_{\ep,n}(z)$:
$$
\wh{\rho}_{\ep,n}(z)=\dfrac{1}{\pi}\Bigl(\dfrac{|\tr(A_n-z)G^2(x_{\ep,n}^2)|^2}{\tr G^2(x_{\ep,n}^2)+\ep/(2x_{\ep,n}^3)}+x^2_{\ep,n}\cdot\tr G(x_{\ep,n}^2)\wt{G}(x_{\ep,n}^2)\Bigr).
$$
One can easily show that \eqref{inq3} follows from condition (C5). Furthermore, Proposition~\ref{pro:sols_bound_int} yields that $|x_{\ep}-x_0|\le C\ep$, hence $|\rho_{\ep}(z)-\rho(z)|\le C\ep=Cn^{-1/2}$, which gives us~\eqref{inq4}. We are left to check~\eqref{inq1} and \eqref{inq2}.

In order to get~\eqref{inq1}, observe that $\lim\limits_{\ep\to 0}\int h(z)\bar{\rho}_{\ep,n}(z)\,\dz=\E_{H_n}\Bigl\{\dfrac{1}{n}\sum_{j=1}^n h(z_j)\Bigr\}$ uniformly in $n$ according to Proposition~\ref{l:uni_conv}, and Theorem~\ref{thm:uni_bound_far} shows that 
\begin{equation*}
\Bigl|\E_{H_n}\Bigl\{\dfrac{1}{n}\sum_{j=1}^n h(z_j)\Bigr\}-\int h(z)\bar{\rho}_{\ep,n}(z)\,\dz\Bigr|\le C\ep^{1/2}=Cn^{-1/4}.
\end{equation*}
However, the bound can be improved in the following way. We are now in the case when $z\in D$, $\dist(z,\partial D)\ge d$ and $\ep=n^{-1/2}>n^{-1}$. This case is covered by Proposition~\ref{pro:uni_bound_far_in_lar}, and the proof of this proposition actually gives an inequality
$$
|T(\ep,n,z)|\le C\left(1+\dfrac{1}{n^{1/2}\ep}\right),
$$
which means $|T(n^{-1/2},n,z)|\le C$. Hence, for $\ep=n^{-1/2}$ we obtain a uniform bound $|\mathrm{\Phi}(\ep,n,z)|\le C\ep=Cn^{-1/2}$, which implies~\eqref{inq1}.
 
Inequality~\eqref{inq2} is just a bound on the error term of $\rho_{\ep,n}(z)$ when $\ep=n^{-1/2}$. Now we adjust the argument in Subsection~\ref{sbsc:<ge^nF>} and Proposition~\ref{pro:for_I_1}. We need to perform a bit more accurate asymptotic analysis of $\bar{\rho}_{\ep,n}(z)$ since now $\ep$ is not fixed but depends on $n$.

As before, Proposition~\ref{pro:int_rep_dens} yields that $
\bar{\rho}_{\ep,n}(z)=\dfrac{1}{\pi}(\I_1+\I_2+\I_3+\I_4)
$
with $\I_k$ defined in \eqref{def_I_k}. It is easy to show that $\I_4=O(n^{-1})$, so we are interested in bounds on the error terms of $\I_1,\I_2,\I_3$.

We start with the integrals $\I_{g_n}$ which are defined in~\eqref{def_I_g_n} with $g_n(x,y)$ satisfying the conditions in Subsection~\ref{sbsc:<ge^nF>} together with the additional condition:
\begin{equation}\label{cond_at_saddle}
g_n(x_{\ep,n}^2,x_{\ep,n}^2)=O(\ep).
\end{equation} 
According to the proof of Proposition~\ref{pro:for_I_k} 
we have $\I_2=\I_{g_{n,2}}$ and $\I_3=\I_{g_{n,3}}$ for $g_{n,2}(x,y)=
x\cdot\tr G(x)\wt{G}(x)\cdot\varphi(x,y,z,z)$ and $g_{n,3}(x,y)=\Bigl(\tr\,(z-A_n)G(x)-\tr\,(z-A_n)G(y)\Bigr)\partial_{z_1}\varphi(x,y,z,z)$. Condition~\eqref{cond_at_saddle} holds for $g_{n,2}$ since $\varphi(x_{\ep,n}^2, x_{\ep,n}^2)=O(\ep)$ which can be obtained from straightforward computation. This condition also holds for $g_{n,3}$ simply because $g_{n,3}(x,x)=0$.

Recall from Subsection~\ref{sbsc:<ge^nF>} that
\begin{equation*}
\I_{g_n}=\frac{8n^3}{\pi^3}\int_{\mathcal{V}} \Phi_n(\mathbf{x})\,e^{nF_n(\mathbf{x})}\,d\mathbf{x},
\end{equation*}
where
\begin{equation*}
\begin{split}
&F_{n,1}(u)=\L_n(u^2)-(u-\ep)^2,\quad F_{n,2}(t,s,r)=-\Bigl(\L_n(s^2-t^2)+(t-i\ep)^2+r^2\Bigr),\\
&F_n(\mathbf{x})=F_{n,1}(u)+F_{n,2}(t,s,r)-2\ep w^2-\ep v^2,\quad 
\Phi_n(\mathbf{x})=\dfrac{r-it-\ep}{\sqrt{v^2+4r-4it-4\ep}}\cdot\dfrac{u}{\sqrt{2u-w^2}}\cdot g_n(u^2,s^2-t^2).
\end{split}
\end{equation*}
We can change contours as in Subsection~\ref{sbsc:<ge^nF>}, since $x_{\ep,n}\ge \kappa_0>0$. It is easy to check that
\begin{equation*}
\begin{split}
&nF_{n,1}(u)\le nF_{n,1}(x_{\ep,n})-c\log^2 n, \quad\text{when } |u-x_{\ep,n}|>n^{-1/2}\log n,\\
&n\Re F_{n,2}(t,s,r)\le nF_{n,2}(ix_{\ep,n},0,0)-c\log^2 n, \quad\text{when }\max\{|t-ix_{\ep,n}|,|s|,|r|\}>n^{-1/2}\log n,\\
&-n\ep v^2-2n\ep w^2\le -c\log^2(n\ep), \quad\text{when } \max\{|v|,|w|\}>(n\ep)^{-1/2}\log (n\ep).
\end{split}
\end{equation*}
Since $n\ep=n^{1/2}$, we can restrict the integration to a neighbourhood 
$$\{\mathbf{x}\in \wt{\mathcal{V}}\colon |u-x_{\ep,n}|,\, |s|,\, |t-ix_{\ep,n}|,\, |r| <n^{-1/2}\log n, |v|<(n\ep)^{-1/2}\log (n\ep), 0\le w<(n\ep)^{-1/2}\log (n\ep)\}.
$$
Make a change $u=x_{\ep,n}+n^{-1/2}\wt u$, $t=ix_{\ep,n}+n^{-1/2}\wt t$, $s=n^{-1/2}\wt s$, $r=n^{-1/2}\wt r$, $v=(n\ep)^{-1/2}\wt{v}$, $w=(n\ep)^{-1/2}\wt{w}$, then the coefficient before the integral becomes $\dfrac{C}{\ep}$. Expand the multipliers of the integrand:
\begin{equation}\label{r o c expan}
\begin{split}
&\dfrac{r-it-\ep}{\sqrt{v^2+4r-4it-4\ep}}=
\dfrac{x_{\ep,n}^{1/2}}{2}+n^{-1/2}\mathcal{P}_1(\wt r, \wt t)+n^{-1}\mathcal{P}_2(\wt r, \wt t)+(n\ep)^{-1}a_1\wt v^2+O(n^{-1}\log^k n);\\
&\dfrac{u}{\sqrt{2u-w^2}}=\Bigl(\dfrac{x_{\ep,n}}{2}\Bigr)^{1/2}+n^{-1/2}b_1\wt u+n^{-1} b_2\wt u^2+(n\ep)^{-1}b_3\wt w^2+O(n^{-1}\log^k n);\\
&g_n(u^2,s^2-t^2)=g_n(x_{\ep,n}^2,x_{\ep,n}^2)+n^{-1/2}\mathcal{P}_1(\wt u, \wt t)+n^{-1}\mathcal{P}_2(\wt u, \wt t,\wt s)+O(n^{-3/2}\log^k n);\\
&e^{nF_n(\mathbf{x})}=\Bigl(1+n^{-1/2}\mathcal{P}_3(\wt u, \wt t, \wt s)+n^{-1}\mathcal{P}_4(\wt u, \wt t, \wt s)+O(n^{-3/2}\log^k n)\Bigr)e^{-\kappa_1 \wt{u}^2-\kappa_1\wt{t}^2-\kappa_2\wt{s}^2-\wt{r}^2-\ep \wt{v}^2-2\ep\wt{w}^2},
\end{split}
\end{equation}
where $a_j,b_j$ are some bounded coefficients and $\mathcal{P}_j$ are some homogeneous polynomials of degree $j$ with bounded coefficients (we are not interested in the exact form of these polynomials, so we may use the same notation for different polynomials). 

After multiplying such expansions, we need to track only the monomials of even total degree, since the monomials of odd degree vanish after the integration. Using $g_n(x_{\ep,n}^2,x_{\ep,n}^2)=O(\ep)$ one can check that the monomials of even degree have coefficients of order $O(n^{-1})$, while the terms coming from the error terms of the expansions are of order $O(n^{-3/2}\log^k n)$. Taking into account the multiplier $\dfrac{C}{\ep}$ before the integral, we deduce that the error term of $\I_{g_n}$ has order $O((n\ep)^{-1})=O(n^{-1/2})$.

We obtained that the error terms of $\I_2$, $\I_3$ are $O(n^{-1/2})$ when $\ep=n^{-1/2}$. Next we study $\I_1$.
According to \eqref{I_1_int_form}, we have
\begin{equation*}
\I_{1}=Cn^4\int_{\mathcal{V}} \Phi_n(\mathbf{x})\,e^{nF_n(\mathbf{x})}\,d\mathbf{x},
\end{equation*}
where
\begin{equation*}
\begin{split}
&F_{n,1}(u)=\L_n(u^2)-(u-\ep)^2,\quad F_{n,2}(t,s,r)=-\Bigl(\L_n(s^2-t^2)+(t-i\ep)^2+r^2\Bigr),\\
&F_n(\mathbf{x})=F_{n,1}(u)+F_{n,2}(t,s,r)-2\ep w^2-\ep v^2,\\
&\Phi_n(\mathbf{x})=
\bigl(\,\ol{p(u_1^2+u_2^2)}-\ol{p(x_{\ep,n}^2)}\,\bigr)\,\Bigl( \bigl(p(u^2)-p(x_{\ep,n}^2)\bigr)-\bigl(p(s^2-t^2)-p(x_{\ep,n}^2)\bigr)\Bigr)\cdot\\
&\cdot\dfrac{r-it-\ep}{\sqrt{v^2+4r-4it-4\ep}}\cdot\dfrac{u}{\sqrt{2u-w^2}}\cdot \varphi(u^2,s^2-t^2).
\end{split}
\end{equation*}
We can make the same change of contours and variables as for $\I_{g_n}$ above. The coefficient before the integral becomes $C\dfrac{n}{\ep}$. The multipliers $\dfrac{r-it-\ep}{\sqrt{v^2+4r-4it-4\ep}}$, $\dfrac{u}{\sqrt{2u-w^2}}$ and $e^{nF_n(\mathbf{x})}$ have the same expansions as in~\eqref{r o c expan}, while
\begin{equation*}
\begin{split}
&\ol{p(u_1^2+u_2^2)}-\ol{p(x_{\ep,n}^2)}=n^{-1/2}a_1\wt u+n^{-1}a_2\wt u^2+O(n^{-3/2}\log^k n);\\
&\bigl(p(u^2)-p(x_{\ep,n}^2)\bigr)-\bigl(p(s^2-t^2)-p(x_{\ep,n}^2)\bigr)=n^{-1/2}\mathcal{P}_1(\wt u, \wt t, \wt s)+n^{-1}\mathcal{P}_2(\wt u, \wt t, \wt s)+O(n^{-3/2}\log^k n);\\
&\varphi(u^2,s^2-t^2)=\varphi(x_{\ep,n}^2,x_{\ep,n}^2)+n^{-1/2}\mathcal{P}_1(\wt u, \wt t, \wt s)+n^{-1}\mathcal{P}_2(\wt u, \wt t,\wt s)+O(n^{-3/2}\log^k n),
\end{split}
\end{equation*}
with $\varphi(x_{\ep,n}^2,x_{\ep,n}^2)=O(\ep)$. Using similar argument to the one for $\I_{g_n}$ above, we obtain that the error term of $\I_1$ also has order $O((n\ep)^{-1})=O(n^{-1/2})$.

\begin{rema}
The same method works for an arbitrary $h\in C_c^2(\mathbb{C})$, not only supported inside the bulk $D$. However, for $z$ close to the edge $\partial D$ the constant $\alpha$ in the rate $n^{-\alpha}$ obtained by such method becomes much worse.

\end{rema}

\medskip

\end{document}